\documentclass[a4paper]{elsarticle}

\usepackage{hyperref}
\usepackage{amsmath}
\usepackage{amssymb}
\usepackage{graphicx}
\usepackage{longtable}
\usepackage{url}

\renewcommand{\And}{\wedge}

\newcommand{\Union}{\bigcup}
\newcommand{\pow}[1]{{\sf pow}\ #1}
\renewcommand{\emptyset}{\{\}}
\newcommand{\emptylist}{[\,]}
\newcommand{\cons}[2]{#1\;\#\;#2} 

\newcommand{\distinct}[1]{{\sf distinct}\ #1}
\newcommand{\remdups}[1]{{\sf remdups}\ #1}

\newcommand{\last}[1]{{\sf last}\ #1}
\newcommand{\butlast}[1]{{\sf butlast}\ #1}
\newcommand{\declast}[1]{{\sf dec\_last}\ #1}
\newcommand{\incnth}[2]{{\sf inc\_nth}\ #1\ #2}

\newcommand{\sumfam}[2]{#1 \uplus #2}
\newcommand{\uc}[1]{{\sf uc}\ #1}
\newcommand{\uca}[2]{{\sf uc}_{#1}\ #2}
\newcommand{\closurea}[2]{\left\langle#2\right\rangle_{#1}}
\newcommand{\ic}[2]{{\sf ic}\ #1\ #2}
\newcommand{\ica}[3]{{\sf ic}_{#1}\ #2\ #3}
\newcommand{\icaa}[4]{{\sf ic}_{#1}^{#4}\ #2\ #3}
\newcommand{\frankl}[1]{{\sf frankl}\ #1}

\newcommand{\wf}[2]{{\sf wf}_{#2}\ #1} 
\newcommand{\sw}[2]{#1(#2)} 
\newcommand{\fw}[2]{#1(#2)} 
\renewcommand{\ss}[3]{\bar{#2}_{#3}(#1)} 
\newcommand{\fs}[3]{\bar{#2}_{#3}(#1)}  
\newcommand{\hc}[2]{{\sf hc}_{#1}^{#2}}  
\newcommand{\hs}[5]{\bar{#4}^{#2}_{#1#5}(#3)} 
\newcommand{\hcprj}[3]{{\sf hc}_{#1}^{#2}\left\lfloor{#3}\right\rfloor} 
\newcommand{\uce}[1]{{\sf uce}\ #1} 

\newcommand{\ssn}[2]{{\sf ssn}\ #1\ #2} 
\newcommand{\ssnaux}[5]{{\sf ssn}^{#3,#4}\ #1\ #2} 
\newcommand{\ssnauxa}[4]{{\sf ssn}^{#3,#4}\ #1\ #2} 

\newcommand{\card}[1]{|#1|}
\newcommand{\length}[1]{|#1|}
\newcommand{\nth}[2]{#1_{#2}}
\newcommand{\mult}[3]{#1 \odot_{#3} #2}
\newcommand{\union}{\cup}
\newcommand{\inter}{\cap}
\newcommand{\expressible}[2]{\mathit{depends}\ #1\ #2}
\newcommand{\setn}[1]{\{\overline{#1}\}}        
\newcommand{\listn}[1]{[\overline{#1}]}        

\newcommand{\indep}[1]{\mathit{ir}\ #1}
\newcommand{\iso}[2]{#1 \cong #2}
\newcommand{\mapfam}[2]{#1 ` #2}
\newcommand{\mapset}[2]{#1 ` #2}
\newcommand{\FC}{\mathcal{F}_c}
\newcommand{\nonFC}{\mathcal{N}_c}
\newcommand{\FCsix}{\mathcal{F}_c^6}
\newcommand{\nonFCsix}{\mathcal{N}_c^6}
\newcommand{\closure}[1]{\left\langle#1\right\rangle}
\newcommand{\nfam}[1]{\{\!\{\overline{#1}\}\!\}}
\newcommand{\cnt}[2]{\#_{#1}#2} 
\newcommand{\FCcovered}[2]{#2\vdash#1}
\newcommand{\nonFCcovered}[2]{#2\Vdash#1}
\newcommand{\covered}[3]{(#2, #3)\vDash#1}
\newcommand{\families}[2]{\binom{\setn{#1}}{#2}}
\newcommand{\LnP}[3]{{#1}_{#2}^{#3}}
\newcommand{\LnPirsix}{\LnP{L}{6}{ir}}
\newcommand{\take}[2]{#2^{[#1]}}
\newcommand{\base}[2]{iso\_reduce_{#2}\ #1}

\newcommand{\enumrecname}{{\sf enum\_rec}}
\newcommand{\enumrec}[3]{\enumrecname^{#2\,#3}\ #1}

\newcommand{\enumdpname}{{\sf enum\_dp}}
\newcommand{\enumdp}[4]{\enumdpname\ {#4\ #3\ #2\ #1}}
\newcommand{\enumdppname}{{\sf enum\_dp\_aux}}
\newcommand{\enumdpp}[7]{\enumdppname^{#7\,#6\,#5}\ #1\ #3\ #2\ #4}

\newcommand{\Pinc}[1]{\overline{P}}

\newdefinition{definition}{Definition}
\newtheorem{theorem}{Theorem}
\newtheorem{example}{Example}
\newtheorem{lemma}{Lemma}
\newtheorem{proposition}{Proposition}
\newproof{proof}{Proof}

\title{Fully Automatic, Verified Classification of all Frankl-Complete (FC(6)) Set Families\tnoteref{t1}} 
\tnotetext[t1]{The first and the third author were partially
    supported by the Serbian Ministry of Education and Science grant
    174021 and the second author by the Serbian Ministry of Education
    and Science grant 044006 (III).}
\author[fm]{Filip Mari\' c}
\ead{filip@matf.bg.ac.rs}
\author[bv]{Bojan Vu\v ckovi\' c}
\ead{mm99030@alas.matf.bg.ac.rs}
\author[mz]{Miodrag \v Zivkovi\' c}
\ead{ezivkovm@alas.matf.bg.ac.rs}
\address[fm]{Faculty of Mathematics, University of Belgrade}
\address[bv]{Mathematical Institute of the Serbian Academy of Sciences and Arts}
\address[mz]{Faculty of Mathematics, University of Belgrade}

\begin{document}
\begin{abstract}
  The \emph{Frankl's conjecture}, formulated in 1979. and still open,
  states that in every family of sets closed for unions there is an
  element contained in at least half of the sets. A family $F_c$ is
  called \emph{Frankl-complete} (or FC-family) if in every
  union-closed family $F \supseteq F_c$, one of the elements of
  $\Union F_c$ occurs in at least half of the elements of $F$ (so $F$
  satisfies the Frankl's condition). FC-families play an important
  role in attacking the Frankl's conjecture, since they enable
  significant search space pruning. We extend previous work by giving
  a total characterization of all FC-families over a 6-element
  universe, by defining and enumerating all minimal FC and maximal
  nonFC-families. We use a fully automated, computer assisted
  approach, formally verified within the proof-assistant Isabelle/HOL.
\end{abstract}

\maketitle

\section{Introduction}
\label{sec:introduction}

\emph{Union-closed set conjecture}, an elementary and fundamental
statement formulated by P\' eter Frankl in 1979. (therefore also
called \emph{Frankl's conjecture}), states that for every family of
sets closed under unions, there is an element contained in at least
half of the sets (or, dually, in every family of sets closed under
intersections, there is an element contained in at most half of the
sets). Up to the best of our knowledge, the problem is still open, and
that is not because of the lack of interest --- a recent survey by
Bruhn and Schaudt lists more than 50 published research articles on
the topic \cite{frankl-survey}.

The conjecture has been confirmed for many finite special cases. For
example, Bo\v snjak and Markovi\' c \cite{frankl-bosnjak-markovic}
proved that the conjecture holds for families such that their union
has at most $m=11$ elements and \v Zivkovi\' c and Vu\v ckovi\' c
\cite{frankl-zivkovic-vuckovic} describes the use of computer programs
to check the case of $m=12$ elements. Lo Faro \cite{frankl-lo-faro}
establishes the connection between the size of the union and the size
of the minimal counter-example, proving that for any $m$ the minimal
counter-example has at least $4m-1$ sets. Using results of Zivkovi\' c
and Vu\v ckovi\' c the conjecture is true for every family containing
$n \le 48$ sets.

It can easily be shown that if a union-closed family contains a
one-element set, then that element is abundant (occurs in at least
half of the sets). Similarly, one of the elements of a two-element set
in a family is abundant. Unfortunately, as first shown by Renaud and
Sarvate \cite{frankl-sarvate-renaud}, the pattern breaks for a
three-element set. This motivates the search for good \emph{local
  configurations} as they enable significant search space
pruning. Following Vaughan, these are sometimes called
\emph{Frankl-complete} families (or just \emph{FC-families}). A family
$F_c$ is an FC-family if in every union-closed family
$F \supseteq F_c$, one of the elements of $\Union F_c$ is abundant. A
FC family is called FC($n$) if its union is an $n$-element set. Most
effort has been put on investigating \emph{uniform} families, where
all members have the same number of elements. The number FC($k$, $n$)
is the minimal number $m$ such that any family containing $m$
$k$-element sets whose union is an $n$-element set is an
FC-family. Poonen gives a necessary and sufficient conditions for a
family to be FC \cite{frankl-poonen}.

As it is usually the case in finite combinatorics, even for small
values of $n$, a combinatorial explosion occurs and assistance of a
computer is welcome for case-analysis within proofs. The corresponding
paradigm is sometimes called \emph{proof-by-evaluation} or
\emph{proof-by-computation}. Since these are not classical
mathematical results, these proofs sometimes raise controversies. We
support this criticism, and advocate that the use of computer programs
in classical mathematical proofs should be allowed only if the
programs are formally verified.

In our previous work \cite{frankl-cicm}, we have applied
proof-by-computation techniques and developed a fully verified
algorithm that can formally prove that a given family is FC and have
applied it to confirm some known uniform FC-families and to discover a
new FC-family (we have shown that each family containing a four
3-element sets contained in a 7-element set is FC, i.e., that FC($3$,
$7$) $\le 4$, which, together with the lower bound on the number of
3-sets of Morris \cite{frankl-morris} gives that FC($3$, $7$) $=4$).

In this paper we extend these results by giving a fully automated and
mechanically verified (within a proof assistant) characterization of
all FC($n$) families for $n \le 6$. Such characterization requires
three components:
\begin{enumerate}
\item a method to prove (within a proof-assistant) that some families
  are FC (the technique relies on the Poonen's Theorem
  \cite{frankl-poonen} and was already formalized in our previous work
  \cite{frankl-cicm}),
\item a method to prove (within a proof-assistant) that some families
  are not FC (the technique also relies on the Poonen's Theorem
  \cite{frankl-poonen}, but this is the first time that it is
  formalized),
\item finding a list $\FC$ of FC and a list $\nonFC$ of nonFC-families
  that are characteristic in some sense, formally verifying (within a
  proof-assistant) their FC-status (i.e., proving if a family is FC or
  nonFC), enumerating (within a proof-assistant) all relevant families
  from a 6-element universe and proving that all of them are in some
  sense covered by some of those characteristic families i.e., that
  their FC-status directly follows from the status of the covering
  family (this technique is novel).
\end{enumerate}

Finding a list of characteristic FC and nonFC-families requires lot of
experimenting and checking the FC-status of many candidate
families. It has recently been shown that this process can be fully
automated\footnote{All FC-families classification results in the
  present paper were obtained prior to Pulaj's algorithm \cite{pulaj}
  and for determining the FC status of various families we used a
  semi-automated procedure that is in spirit somewhat similar to
  Pulaj's technique. Afterwards we fully automated the procedure, and
  confirmed previous results.}. Namely, Pulaj recently proposed a
fully automated method for determining the FC-status of an arbitrary
given family \cite{pulaj}. The method is based on linear integer
programming, and, although not integrated within a proof-assistant, it
is very reliable, as it uses exact arithmetic. Even with the fully
automated FC-status checking procedure, the third point requires
nontrivial effort and is the main contribution of this paper (although
there are well-known algorithms for exhaustive generation of
non-isomorphic objects \cite{isomorph-free}).

Apart from the significance of this core result, an important
contribution of this paper is to demonstrate that in the field of
finite combinatorics it is possible to use computer programs to push
the bounds and simplify proofs, but in a way that does not jeopardize
proof correctness. On the contrary, since all statements and
algorithms have been verified within the theorem prover Isabelle/HOL,
the trust in our results is significantly higher than most classical
pen-and-paper proofs previously published on this topic. We emphasize
that many experiments may be performed by unverified tools, and only
the final results need to be checked within proof-assistants (e.g., we
find the list of characteristic FC and nonFC-families using unverified
tools, and verify only the final list using Isabelle/HOL).

\paragraph{Overview of the paper}
The paper is organized as follows. In the rest of this section we
describe proofs by computation and discuss some related work.  In
Section \ref{sec:background} we describe Isabelle/HOL and notation
that is going to be used in the paper.  In Section \ref{sec:basic} we
formally introduce basic definitions related to the Frankl's
conjecture (Frankl's condition, FC and nonFC-families, etc.). In
Section \ref{sec:provefc} we give a theorem (based on Poonen's theorem
\cite{frankl-poonen}) that can be used to formally prove that a family
is FC and describe two different approaches for checking the
conditions of that theorem (one based on a specialized, verified
procedure, and one based on linear integer programming). In Section
\ref{sec:provenotfc} we give a theorem (also based on Poonen's theorem
\cite{frankl-poonen}) that can be used to formally prove that a family
is nonFC. In Section \ref{sec:procedureFCstatus} we describe a fully
automated (unverified) procedure for checking if an arbitrary given
family is FC. In Section \ref{sec:characteristic} we define the notion
of covering and describe properties that our characteristic families
should satisfy. In Section \ref{sec:enum} we describe methods for
enumerating all families with certain properties that is used both
within an automated (unverified) procedure for finding all
characteristic families, and to formally show that all families are
covered by the given characteristic families. In Section \ref{sec:fc6}
we give a full characterization of FC(6) families, by listing all
found characteristic families, and formally proving that they cover
all families in $\nfam{6}$. In Section \ref{sec:conclusions} we draw
final conclusions and discuss possible further work.

\paragraph{ITPs and Proofs by computation}
Interactive theorem provers (sometimes called proof assistants), like
Coq, Isabelle/HOL, HOL Light, etc., have made great progress in recent
years. Many classical mathematical theorems have been formally proved
and proof assistants have been intensively used in hardware and
software verification. Several of the most important results in formal
theorem proving are for the problems that require proofs with much
computational content. These proofs are usually highly complex (and
therefore often require justifications by formal means) since they
combine classical mathematical statements with complex computing
machinery (usually computer implementation of combinatorial
algorithms). The corresponding paradigm is sometimes referred to as
\emph{proof-by-evaluation} or \emph{proof-by-computation}. Probably,
the most famous examples of this approach are the proofs of the
Four-Color Theorem \cite{gonthier-notices} and the Kepler's conjecture
\cite{flyspeck}. One of the authors of this paper, recently used a
proof-by-computation technique to give a formal proof of the
Erd\"os-Szekeres conjecture for hexagons \cite{maric-erdos} within
Isabelle/HOL.

\paragraph{Related work}
Bruhn and Schaudt give a detailed survey of the Frankl's conjecture
\cite{frankl-survey}.

The Frankl's conjecture has also been formulated and studied as a
question in lattice theory
\cite{frankl-lattices-reinhold,frankl-lattices-abe}, and in the graph
theory \cite{frankl-graphs}.

FC-families have been introduced by Poonen \cite{frankl-poonen} who
gave a necessary and a sufficient condition for a family to be FC
(based on weight functions). The term FC-family was coined by Vaughan
\cite{frankl-vaughan-1}, and they were further studied by Gao and Yu
\cite{frankl-gao-yu}, Vaughan
\cite{frankl-vaughan-1,frankl-vaughan-2,frankl-vaughan-3}, Morris
\cite{frankl-morris}, Markovi\'c \cite{frankl-markovic}, Bo\v snjak
and Markovi\'c \cite{frankl-bosnjak-markovic}, and \v Zivkovi\'c and
Vu\v ckovi\'c \cite{frankl-zivkovic-vuckovic}.  Poonen
\cite{frankl-poonen} proved that FC($3$, $4$) = $3$. Vaughan
\cite{frankl-vaughan-1,frankl-vaughan-2,frankl-vaughan-3} showed that
FC($4$, $5$) $\le 5$ and FC($4$, $6$) $\le 10$. Morris
\cite{frankl-morris} gives a full characterization of all
FC($5$)-families. He proves that FC($3$, $5$)=$3$, FC($4$,
$5$)=$5$. Also, he proves that FC($3$, $6$)=$4$ and $7 \le$ FC($4$,
$6$) $\le 8$. His proofs rely on computer programs, but these are not
verified and not even presented in the article (as they are ,,fairly
simple-minded''). In our previous work \cite{frankl-cicm} we formally
confirmed all these results within a theorem prover, additionally
formally proving that FC($3$, $7$) $\leq 4$.

Computer-assisted computational approach was applied by Morris
\cite{frankl-morris} and \v Zivkovi\' c and Vu\v ckovi\' c
\cite{frankl-zivkovic-vuckovic} for solving special cases of the
Frankl's conjecture. In the latter case, computations are performed by
unverified Java programs.

\section{Background and notation}
\label{sec:background}

Logic and the notation given in this paper will follow Isabelle/HOL,
with some minor simplifications to make it approachable to wider
audience.  Isabelle/HOL \cite{isabelle} is a development of Higher
Order Logic (HOL), and it conforms largely to everyday mathematical
notation. Embedded in a theory are the types, terms and formulae of
HOL. The basic types include truth values ($\mathit{bool}$), natural
numbers ($\mathit{nat}$) and integers ($\mathit{int}$).

Terms are formed as in functional programming by applying functions to
arguments. Following the trandition of functional programming,
functions are curried. For example, $f\ x\ y$ denotes the function $f$
applied to the arguments $x$ and then $y$ (in classical mathematics
notation, this would usually be denoted by $f(x, y)$).Terms may also
contain $\lambda$-abstractions. For example, $\lambda x.\ x+1$ is the
function that takes an argument $x$ and returns
$x+1$. Let-expressions, if-expressions, and case-expressions are also
supported in terms. Let expressions are of the form "$\mathrm{let}\
x_1 = t_1; \ldots; x_n = t_n\ \mathrm{in}\ t$". This expressions is
equivalent to the one obtained from the term $t$ by substituting all
free occurrences of the variable $x_i$ by the $t_i$. For example
"$\mathrm{let}\ x=0\ \mathrm{in}\ x+x$" is equivalent to "$0+0$". If
expression is of the form "$\mathit{if}\ b\ \mathrm{then}\ t_1\
\mathrm{else}\ t_2$". Case expressions are of the form
"$\mathrm{case}\ e\ \mathrm{of}\ \mathit{pat_1} \Rightarrow e_1\ |\
\ldots\ |\ \mathit{pat_m} \Rightarrow e_m$". This is equivalent to
$e_i$ if $e$ matches the pattern $pat_i$.

Formule are terms of the type $\mathit{bool}$. Standard logical
connectives ($\neg$, $\wedge$, $\vee$, $\rightarrow$ and
$\longrightarrow$) are supported. Quanfiers are written using
dot-notation, as $\forall x.\ P$, and $\exists x.\ P$.  

New functions can be defined by recursion (either primitive or
general).

Sets over type $\alpha$, type $\alpha\,\mathit{set}$, follow the usual
mathematical conventions\footnote{In a strict type setting, sets
  containing elements of mixed types are not allowed.}. In the
presentation we use the term \emph{set} for sets of numbers and
denote these by $A$, $A'$, \ldots, the term \emph{family} for sets of
sets (i.e., object of the type $\alpha\,\mathit{set}\,\mathit{set}$)
of numbers and denote these by $F$, $F'$, \ldots and the term
\emph{collection} for sets of families (i.e., object of the type
$\alpha\,\mathit{set}\,\mathit{set}\,\mathit{set}$) and denote these
by $\mathcal{F}$, $\mathcal{F}'$, \ldots. The powerset (set of all
subset) of a set $A$ will denoted by $\pow{A}$. Union of sets $A$ and
$B$ is denoted by $A \union B$, and the union of all sets in a family
$F$ is denoted by $\Union{F}$. Image of a set $A$ under a function $f$
is denoted by $\mapset{f}{A}$. In this paper, the number of elements
in a set will be denoted by $\card{A}$. The set
$\{0, 1, \ldots, n-1\}$ will be denoted by $\setn{n}$.

Lists over type $\alpha$, type $\alpha\,\mathit{list}$, come with the
empty list $\emptylist$, and the infix prepend constructor $\#$ (every
list is either $\emptylist$ or is of the form $\cons{x}{xs}$ and these
two cases are usually considered when defining recursive functions
over lists). Standard higher order functions ${\sf map}$,
${\sf filter}$, ${\sf foldl}$ are supported and very often used for
defining list operations (for details see \cite{isabelle}). In this
paper, the N-th element of a list $l$ will be denoted by $\nth{l}{n}$
(positions are zero-based). $\butlast{l}$ denotes the list obtained
from $l$ by removing its last argument. If $l$ contains natural
numbers, $\declast{l}$ is the list obtained from $l$ be decreasing its
last element, and $\incnth{l}{n}$ is the list obtained from $l$ by
increasing its $n$-th element. The predicate $\distinct{l}$ checks if
the list $l$ has no repated elements, and the function $\remdups{l}$
removes duplicates from the list $l$.  List $[0, 1, \ldots, n-1]$ will
be denoted by $\listn{n}$.

\medskip All definitions and statements given in this paper are
formalized within Isabelle/HOL\footnote{Formal proofs are available at
  \url{http://argo.matf.bg.ac.rs/downloads/formalizations/FCFamilies.zip}}. However, in order
to make the text accessible to a more general audience not familiar
with Isabelle/HOL, many minor details are omitted and some
imprecisions are introduced. For example, we use standard symbolic
notation common in related work, although it is clear that some symbols
are ambiguous. Also, in the paper some notions will be defined by only
using sets, while in the formalization they are defined by using lists
(to obtain executability). Statements are grouped into propositions,
lemmas, and theorems. Propositions usually express simple, technical
results and are printed here without proofs, while the proofs of
lemmas and theorems are given in the Appendix. All sets and families
are considered to be finite and this assumptions (present in
Isabelle/HOL formalization) will not be explicitly stated in the rest
of the paper.

\section{Basic notions}
\label{sec:basic}

Since we are only dealing with finite sets and families, without loss
of generality we can restrict the domain only to natural number
domains.

\begin{definition}
  A family $F$ over $\setn{n}$ is a collection of sets such that
  $\Union F \subseteq \setn{n}$. The collection of all families over
  $\setn{n}$ will be denoted by $\nfam{n}$.
\end{definition}

\subsection{Union-Closed Families}
\label{sec:unionclosed}

First we give basic definitions of union-closed families, closure
under unions, and operations used to incrementally obtain closed
families. Let
$\sumfam{F_1}{F_2} = \{A \union B.\ A \in F_1 \wedge B \in F_2 \}$.

\begin{definition}
  Let $F$ and $F_c$ be families.

  A family $F$ is \emph{union-closed}, denoted by $\uc{F}$, iff
  $\sumfam{F}{F} = F$,
  (i.e. $\forall A \in F.\ \forall B \in F.\ A \union B \in F$). 
  A family $F$ is \emph{union-closed for $F_c$}, denoted by
  $\uca{F_c}{F}$, iff $\uc{F} \wedge (\sumfam{F}{F_c} \subseteq F)$,
  (i.e.
  $\uc{F} \wedge (\forall A \in F.\ \forall B \in F_c.\ A \union B \in
  F)$).

  \emph{Union-closure of $F$ (abbr.~closure)}, denoted by
  $\closure{F}$, is the minimal family of sets (in sense of inclusion)
  that contains $F$ and is union-closed.

  \emph{Union-closure of $F$ for $F_c$ (abbr.~closure for
    $F_c$)}, denoted by $\closurea{F_c}{F}$, is the minimal family of
  sets (in sense of inclusion) that contains $F$ and is union-closed
  for $F_c$.

  \emph{Insert and close operation} of set $A$ to family $F$, denoted
  by $\ic{A}{F}$, is the family $F \union \{A\} \union (\sumfam{F}{\{A\}})$.
  \emph{Insert and close operation for $F_c$} of set $A$ to family
  $F$, denoted by $\ica{F_c}{A}{F}$, is the family
  $F \union \{A\} \union (\sumfam{F}{\{A\}}) \union
  (\sumfam{F_c}{\{A\}})$.
\end{definition}

The following proposition gives some trivial properties of these
notions.

\begin{proposition}
\label{prop:closure}\hfill
\vspace{-2mm}
\begin{enumerate}
\item $\closure{F} = \{\Union F'.\ F' \in \pow{F} - \{\emptyset\}\}$
\item $\closure{F \union \{A\}} = \ic{A}{\closure{F}}$, \quad $\closurea{F_c}{F \union \{A\}} = \ica{F_c}{A}{\closure{F}}$
\item If $F \subseteq \pow{\Union F_c}$ and $\uca{F_c}{F}$ then
  $\uca{\closure{F_c}}{F}$.
\item If $\uc{F'}$ and $F \subseteq F'$ then $\closure{F} \subseteq
  F'$.
\end{enumerate}
\end{proposition}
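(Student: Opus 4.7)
The plan is to prove the four parts in the order 4, 1, 2, 3, since Part 4 is a minimality principle that drives the arguments for Parts 1 and 2. Part 4 follows immediately from the definition of $\closure{F}$ as the minimal union-closed family containing $F$: any union-closed $F' \supseteq F$ is one of the families over which that minimum is taken, so $\closure{F} \subseteq F'$.

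For Part 1, let $R = \{\Union F'.\ F' \in \pow{F} - \{\emptyset\}\}$. The inclusion $R \subseteq \closure{F}$ I would prove by induction on $\card{F'}$: the singleton case uses $F \subseteq \closure{F}$, and the inductive step uses $\uc{\closure{F}}$ after peeling off one element. For the reverse inclusion, I would verify that $R$ itself is union-closed via $\Union F_1' \cup \Union F_2' = \Union (F_1' \cup F_2')$, noting that $F_1' \cup F_2'$ is still a non-empty subfamily of $F$, and that $R \supseteq F$ via singletons; Part 4 then gives $\closure{F} \subseteq R$.

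For Part 2, I would use a double inclusion in both equalities. For the first, the family $\ic{A}{\closure{F}} = \closure{F} \cup \{A\} \cup (\sumfam{\closure{F}}{\{A\}})$ manifestly contains $F \cup \{A\}$, and union-closedness follows by a case split on which of the three pieces the two operands lie in, reducing to $\uc{\closure{F}}$ and the identities $A \cup A = A$ and $(B \cup A) \cup (B' \cup A) = (B \cup B') \cup A$; Part 4 then yields the $\subseteq$ direction. Conversely, $\closure{F \cup \{A\}}$ already contains $\closure{F}$ (by Part 4 applied to $\closure{F \cup \{A\}}$ as a union-closed superfamily of $F$), contains $A$, and is union-closed, so each of the three pieces of $\ic{A}{\closure{F}}$ sits inside it. The second equality is analogous; the extra piece $\sumfam{F_c}{\{A\}}$ is absorbed because $\closurea{F_c}{F \cup \{A\}}$ is by definition union-closed for $F_c$.

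For Part 3, union-closedness of $F$ is in the hypothesis, so only $\sumfam{F}{\closure{F_c}} \subseteq F$ needs proof. Pick $A \in F$ and $B \in \closure{F_c}$; by Part 1, $B = C_1 \cup \cdots \cup C_n$ with each $C_i \in F_c$ and $n \geq 1$. A short induction on $n$, whose base case is exactly $\sumfam{F}{F_c} \subseteq F$ and whose inductive step applies the same inclusion to the accumulated union together with the next $C_i$, gives $A \cup B \in F$; the assumption $F \subseteq \pow{\Union F_c}$ plays no essential role here. The main obstacle in the whole proposition is really the bookkeeping of Part 1: the descriptive characterization of closure must be carefully matched with its minimality definition, keeping the non-emptiness side condition in play so that the empty union never sneaks in. Once Part 1 is secured, the rest reduces to mechanical case analyses on union-closedness.
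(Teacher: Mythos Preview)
The paper explicitly omits proofs of propositions (``Propositions \ldots\ are printed here without proofs''), so there is no argument in the paper to compare against; your write-up is the only proof on offer. For Parts~4, 1, 3 and the first equality of Part~2 your strategy is the standard one and is correct. Your observation that the hypothesis $F \subseteq \pow{\Union F_c}$ is not needed in Part~3 is also right: the induction on the number of $C_i$'s goes through using only $\uca{F_c}{F}$.

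The one place where your sketch does not hold up is the second equality of Part~2. You declare it ``analogous'' to the first, but the analogy breaks at the $\subseteq$ direction: you would need $\ica{F_c}{A}{\closure{F}}$ to be union-closed for $F_c$, and in general it is not. Concretely, take $F_c=\{\{1\},\{2\}\}$, $F=\{\{3\}\}$, $A=\{4\}$. Then $\closure{F}=\{\{3\}\}$ and
\[
\ica{F_c}{A}{\closure{F}}=\{\{3\},\{4\},\{3,4\},\{1,4\},\{2,4\}\},
\]
which is neither union-closed (it lacks $\{1,3,4\}=\{3\}\cup\{1,4\}$) nor closed for $F_c$ (it lacks $\{1,3\}=\{3\}\cup\{1\}$). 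On the other hand $\closurea{F_c}{F\cup\{A\}}$ certainly contains $\{1,3\}$, so the two sides differ. In other words, the statement as printed appears to need an extra hypothesis (for instance that $\closure{F}$ already be closed for $F_c$, as is the case in the paper's actual use inside the $\ssn{}{}$ procedure, where one starts from $\closure{F_c}$ and maintains the invariant $\uca{F_c}{F_t}$), or the right-hand side should read $\ica{F_c}{A}{\closurea{F_c}{F}}$. Your proof does not surface this; the word ``analogous'' hides a step that fails.
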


\subsection{The Frankl's Condition}
\label{sec:franklcond}
The next definition formalizes the Frankl's condition and the notion
of FC-family.

\begin{definition}
  Family of sets $F$ is a \emph{Frankl's family}, denoted by
  $\frankl{F}$, if it contains an element that satisfies the
  \emph{Frankl's condition for $F$}, i.e., that occurs in at least
  half sets in the family $F$. Formally,
  $\frankl{F}\ \equiv\ \exists a.\ a \in \Union F\ \And\ 2 \cdot
  \cnt{a}{F} \ge \card{F}$,
  where $\cnt{a}{F}$ denotes $\card{\{A \in F.\ a \in A\}}$.
  \end{definition}

\subsection{FC-families}
\label{sec:fc}
\begin{definition}
  Family of sets $F_c$ is an \emph{FC-family} if in every union-closed
  family $F$ such that $F \supseteq F_c$ one of the elements of
  $\Union{F_c}$ satisfies the Frankl's condition for $F$.  Every
  family that is not an FC-family is called a \emph{nonFC-family}.
\end{definition}

The next propositions give some properties of FC-families.

\begin{proposition}
  \label{prop:FC_family_mono}
  Any superset of an FC-family is an FC-family. Any subset of a
  nonFC-family is a nonFC-family.
\end{proposition}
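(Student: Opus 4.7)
The plan is to prove the first claim directly from the definition, and then obtain the second as its contrapositive.

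For the first claim, let $F_c$ be an FC-family and let $F_c'$ be any family with $F_c' \supseteq F_c$. To show $F_c'$ is FC, I take an arbitrary union-closed family $F$ with $F \supseteq F_c'$ and must produce some element of $\bigcup F_c'$ that satisfies the Frankl's condition for $F$. Since $F \supseteq F_c' \supseteq F_c$, the family $F$ is a union-closed superset of $F_c$, so the FC-property of $F_c$ yields an element $a \in \bigcup F_c$ with $2 \cdot \cnt{a}{F} \ge \card{F}$. From $F_c \subseteq F_c'$ we get $\bigcup F_c \subseteq \bigcup F_c'$, hence $a \in \bigcup F_c'$. This $a$ is the required witness.

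For the second claim, suppose $F_c'$ is a nonFC-family and $F_c \subseteq F_c'$. If $F_c$ were an FC-family, the first part (applied to $F_c \subseteq F_c'$) would force $F_c'$ to be FC, contradicting the assumption. Hence $F_c$ must be a nonFC-family.

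The argument is entirely a matter of unfolding definitions together with the monotonicity $F_c \subseteq F_c' \Rightarrow \bigcup F_c \subseteq \bigcup F_c'$, so there is no real obstacle; the only subtle point worth being explicit about is that the witness element produced by the FC-property of $F_c$ automatically lies in $\bigcup F_c'$, which is precisely what the definition of FC-family for $F_c'$ demands.
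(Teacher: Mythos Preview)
Your proof is correct and is exactly the natural argument from the definitions. The paper does not give a proof of this proposition at all: it explicitly treats propositions as ``simple, technical results'' that are ``printed here without proofs,'' and this one is no exception. Your unfolding of the definition of FC-family, together with the observation that $F_c \subseteq F_c'$ implies $\bigcup F_c \subseteq \bigcup F_c'$ (so the witness element for $F_c$ also lies in $\bigcup F_c'$), is precisely what any proof would have to do, and the contrapositive for the nonFC case is the obvious way to finish.
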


\begin{proposition}
  \label{prop:FC_family_empty}
  A family $F_c$ is an FC-family iff the family $F_c \setminus
  \{\emptyset\}$ is an FC-family.
\end{proposition}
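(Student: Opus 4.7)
The plan is to exploit the fact that the empty set is absorbed by union: $A \cup \emptyset = A$ for every $A$. In particular, $\Union F_c = \Union (F_c \setminus \{\emptyset\})$, so the candidate set of ``abundant'' elements is the same on both sides of the equivalence.

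For the implication from ``$F_c$ is FC'' to ``$F_c \setminus \{\emptyset\}$ is FC'': I would take an arbitrary union-closed $F \supseteq F_c \setminus \{\emptyset\}$ and form $F' = F \union \{\emptyset\}$. Since $A \union \emptyset = A$, the family $F'$ is still union-closed, and by construction $F' \supseteq F_c$. Applying the FC-property of $F_c$ to $F'$ yields some $a \in \Union F_c$ with $2 \cdot \cnt{a}{F'} \ge \card{F'}$. Then I would transfer this back to $F$: since $a \neq$ (any element of) $\emptyset$, we have $\cnt{a}{F'} = \cnt{a}{F}$; and either $\emptyset \in F$, in which case $F' = F$ and the Frankl inequality transfers verbatim, or $\emptyset \notin F$, in which case $\card{F'} = \card{F}+1 > \card{F}$, and the inequality transfers a fortiori. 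Finally, $a \in \Union F_c = \Union (F_c \setminus \{\emptyset\})$, completing this direction.

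The reverse direction is easier: if $F$ is union-closed with $F \supseteq F_c$, then automatically $F \supseteq F_c \setminus \{\emptyset\}$, so the FC-property of $F_c \setminus \{\emptyset\}$ directly gives some $a \in \Union (F_c \setminus \{\emptyset\}) = \Union F_c$ satisfying the Frankl condition for $F$, which is exactly what is needed for $F_c$ to be FC.

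I do not expect any serious obstacle here; the proof is essentially a bookkeeping argument around ``adding or removing $\emptyset$ affects neither unions, nor membership counts, nor $\Union F_c$''. The only mildly delicate point is the $\emptyset \notin F$ case in the forward direction, where the cardinality of the enlarged family strictly grows, but the inequality survives because the counts of $a$ are unchanged and the bound only becomes easier to satisfy.
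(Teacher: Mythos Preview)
Your argument is correct. The paper does not give an explicit proof of this proposition (it is listed among the ``simple, technical results \ldots\ printed here without proofs''), so there is nothing to compare against beyond noting that your direct bookkeeping argument---adjoining $\emptyset$ to $F$ in the forward direction and using $F_c \setminus \{\emptyset\} \subseteq F_c$ in the reverse---is exactly the expected elementary verification.
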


\begin{proposition}
  \label{prop:FC_family_closure}
  A family $F$ is an FC-family iff its closure $\closure{F}$ is an
  FC-family.
\end{proposition}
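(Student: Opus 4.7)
The plan is to prove both directions directly from the definition of FC-family, using two ingredients from Proposition~\ref{prop:closure}: that $\closure{F}$ is contained in any union-closed superset of $F$ (item 4), and that $\Union \closure{F} = \Union F$ (which follows from item 1, since every member of $\closure{F}$ is a union of a nonempty subfamily of $F$ and thus a subset of $\Union F$, while $F \subseteq \closure{F}$ gives the reverse inclusion).

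First I would record the auxiliary fact $\Union\closure{F} = \Union F$ as a small preliminary observation, since it is what makes the two universes of ``allowed abundant elements'' coincide in the two FC-conditions.

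For the direction $\closure{F}\ \text{FC} \Rightarrow F\ \text{FC}$, I would take an arbitrary union-closed family $F'$ with $F \subseteq F'$. By Proposition~\ref{prop:closure}(4), $\closure{F} \subseteq F'$. Applying the FC-property of $\closure{F}$ to $F'$ yields some $a \in \Union\closure{F}$ satisfying Frankl's condition for $F'$; by the preliminary observation $a \in \Union F$, which is exactly what the FC-property of $F$ requires. For the direction $F\ \text{FC} \Rightarrow \closure{F}\ \text{FC}$, I would take a union-closed $F' \supseteq \closure{F}$; since $F \subseteq \closure{F} \subseteq F'$, the FC-property of $F$ furnishes $a \in \Union F = \Union\closure{F}$ abundant in $F'$, as required.

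There is no real obstacle here: both implications reduce to routine unfolding once one notices that taking closure does not change the universe $\Union F$, and that Proposition~\ref{prop:closure}(4) lets one transport any union-closed superset of $F$ into a union-closed superset of $\closure{F}$. The only point to be slightly careful about in the Isabelle formalisation is that the two FC-conditions quantify the abundant element over possibly different sets ($\Union F$ versus $\Union\closure{F}$), so the equality of these two sets must be stated and used explicitly rather than silently.
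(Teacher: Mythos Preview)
Your proof is correct. Note, however, that the paper does not actually print a proof of this proposition (it is listed among the propositions that are ``simple, technical results \ldots\ printed here without proofs''), so there is no authorial argument to compare against; your direct unfolding of the FC-definition together with Proposition~\ref{prop:closure}(1) and~(4) is precisely the natural route, and the observation that $\Union\closure{F}=\Union F$ is indeed the only point that requires a moment's care.
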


\section{Proving that a Family is FC}
\label{sec:provefc}

In this section we describe techniques that can be used to formally
prove that a given family is FC. Most statements will be given without
proofs, since the proofs are available in \cite{frankl-cicm}.

\subsection{Weight Functions and Shares}
\label{sec:weightsshares}
We describe the central technique for proving that a family is FC,
relying on characterizations of the Frankl's condition using weights
and shares introduced by Poonen \cite{frankl-poonen}, but adapted to
work in a proof-assistant environment.

\begin{definition}
  A function $w: X \rightarrow \mathbb{N}$ is a \emph{weight function
    on} $A \subseteq X$, denoted by $\wf{w}{A}$, iff $\exists a \in
  A.\ w(a) > 0$.  \emph{Weight of a set $A$ wrt.~weight function $w$},
  denoted by $\sw{w}{A}$, is the value $\sum_{a \in A}w(a)$.
  \emph{Weight of a family $F$ wrt.~weight function $w$}, denoted by
  $\fw{w}{F}$, is the value $\sum_{A\in F}\sw{w}{A}$.
\end{definition}

An important technique for checking Frankl's condition is
\emph{averaging} --- family is Frankl's if and only if there is a
weight function such that weighted average of number of occurrences of
all elements exceeds $\card{F} / 2$. A more formal formulation of this
claim (that uses only integers and avoids division) is given by the
following Proposition.
\begin{proposition}
\label{lemma:Frankl_weight}
$\frankl{F} \iff \exists w.\ \wf{w}{(\Union{F})}\ \And\ 2 \cdot \fw{w}{F} \;\ge\; \sw{w}{\Union{F}}\cdot\card{F}$
\end{proposition}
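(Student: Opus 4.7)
The plan is to prove the biconditional by handling each direction separately, with the key tool being the double-counting identity
\[
\fw{w}{F} \;=\; \sum_{A \in F}\sum_{a \in A} w(a) \;=\; \sum_{a \in \Union F} w(a)\cdot \cnt{a}{F},
\]
obtained by swapping the order of summation (each $a$ contributes $w(a)$ for every set of $F$ containing it). I would establish this identity as a preliminary lemma, since it is the bridge between the ``per-set'' view used to define $\fw{w}{F}$ and the ``per-element'' view needed to reason about occurrence counts.

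For the forward direction, assume $\frankl{F}$ so that some $a \in \Union F$ satisfies $2\cnt{a}{F} \ge \card{F}$. Take $w$ to be the indicator of $a$: $w(a)=1$ and $w(b)=0$ for $b\neq a$. Then $\wf{w}{(\Union F)}$ holds trivially, while $\sw{w}{\Union F}=1$ and, by the double-counting identity, $\fw{w}{F}=\cnt{a}{F}$. The required inequality $2\fw{w}{F}\ge \sw{w}{\Union F}\cdot\card{F}$ is then exactly the Frankl condition for $a$.

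For the backward direction, suppose some $w$ with $\wf{w}{(\Union F)}$ satisfies the weighted inequality, and argue by contradiction: assume every $a \in \Union F$ has $2\cnt{a}{F} < \card{F}$, hence (since we are over $\mathbb{N}$) $2\cnt{a}{F} - \card{F} \le -1$. Multiplying by the nonnegative $w(a)$ and summing over $a \in \Union F$ yields
\[
\sum_{a\in\Union F} w(a)\bigl(2\cnt{a}{F}-\card{F}\bigr) \;\le\; -\!\!\sum_{a\in\Union F} w(a) \;<\; 0,
\]
where strictness comes from $\wf{w}{(\Union F)}$ supplying at least one $a$ with $w(a)>0$. By the double-counting identity, the left-hand side equals $2\fw{w}{F} - \sw{w}{\Union F}\cdot\card{F}$, which is $\ge 0$ by hypothesis, a contradiction.

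The only delicate point is the handling of the index set in the swap: the inner sum $\sum_{a \in A} w(a)$ must be rewritten as a sum over $\Union F$ (e.g. via indicator functions) so that the outer and inner quantifications can be exchanged cleanly; Isabelle's finite-sum reordering lemmas should do this, but one must ensure finiteness of $F$ and of each $A\in F$ is carried along. The other minor subtlety is that the element witnessing $\wf{w}{(\Union F)}$ must lie in $\Union F$, which is precisely what makes the contradiction argument yield an $a$ that qualifies as a witness for $\frankl{F}$; this is easy but worth stating explicitly rather than leaving implicit.
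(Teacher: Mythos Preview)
Your proof is correct. The paper does not give an informal proof of this proposition (it explicitly states that propositions ``express simple, technical results and are printed here without proofs''), so there is nothing to compare against; your argument via the double-counting identity $\fw{w}{F}=\sum_{a\in\Union F}w(a)\,\cnt{a}{F}$, with the indicator weight for the forward direction and the averaging/contradiction argument for the converse, is exactly the standard route and matches what one would expect the underlying Isabelle formalization to do.
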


A concept that will enable a slightly more operative formulation of
the previous characterization is the concept of \emph{share} (again,
to avoid rational numbers, definition is different from $\sw{w}{A} -
\sw{w}{X} / 2$ that is used in the literature).
\begin{definition}
  Let $w$ be a weight function.  \emph{Share of a set $A$ wrt.~$w$ and
    a set $X$}, denoted by $\ss{A}{w}{X}$, is the value
  $\sw{w}{A} - \sw{w}{X \setminus A} = 2 \cdot \sw{w}{A} - \sw{w}{X}$.
  \emph{Share of a family $F$ wrt.~$w$ and a set $X$}, denoted by
  $\fs{F}{w}{X}$, is the value $\sum_{A \in F}\ss{A}{w}{X}$.
\end{definition}

\begin{example}
\label{ex:share}
Let $w$ be a function such that $w(a_0) = 1, w(a_1) = 2$, and $w(a) =
0$ for all other elements. $w$ is clearly a weight function. Then,
$\sw{w}{\{a_0, a_1, a_2\}} = 3$ and $\fw{w}{\{\{a_0, a_1\}, \{a_1,
  a_2\}, \{a_1\}\}} = 7$. Also, $\ss{\{a_1, a_2\}}{w}{\{a_0, a_1,
  a_2\}} = 2\cdot \sw{w}{\{a_1, a_2\}} - \sw{w}{\{a_0, a_1, a_2\}} = 4
- 3 = 1,$ and $\fs{\{\{a_0, a_1\}, \{a_1, a_2\},
  \{a_1\}\}}{w}{\{a_0, a_1, a_2\}} = (2\cdot 3 - 3) + (2\cdot 2 - 3) +
(2 \cdot 2 - 3) = 5.$
\end{example}

\begin{proposition}
\label{prop:Family_share}
$\fs{F}{w}{X} = 2 \cdot \fw{w}{F} - \sw{w}{X}\cdot\card{F}$
\end{proposition}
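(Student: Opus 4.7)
The plan is to prove this by straight unfolding of the two definitions that precede the proposition, together with the elementary fact that a sum of a constant over a finite index set equals the constant times the cardinality.

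First, I would expand $\fs{F}{w}{X}$ using its definition as a sum over $A \in F$ of the set-share $\ss{A}{w}{X}$. Then I would substitute the definition of the set-share, using the form $\ss{A}{w}{X} = 2\cdot\sw{w}{A} - \sw{w}{X}$ (the right-most expression in the definition, which is the convenient one since $\sw{w}{X}$ does not depend on $A$). This yields
\[
\fs{F}{w}{X} \;=\; \sum_{A\in F}\bigl(2\cdot\sw{w}{A} - \sw{w}{X}\bigr).
\]

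Next I would split the finite sum into two sums by linearity and pull the constant factors out, giving
\[
\sum_{A\in F} 2\cdot\sw{w}{A} \;-\; \sum_{A\in F} \sw{w}{X} \;=\; 2\cdot\sum_{A\in F}\sw{w}{A} \;-\; \card{F}\cdot\sw{w}{X}.
\]
By the definition of $\fw{w}{F}$ the first term is $2\cdot\fw{w}{F}$, and the proposition follows.

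There is no serious obstacle: the statement is a routine manipulation of finite sums. The only tiny point worth being careful about, given that we are working in Isabelle/HOL with natural-number weights, is that the equivalent form $\sw{w}{A} - \sw{w}{X\setminus A}$ from the definition would force a subtraction on $\mathbb{N}$ that is only well-behaved when $A \subseteq X$. Using the algebraically equivalent $2\cdot\sw{w}{A} - \sw{w}{X}$ avoids that issue here, since this is precisely the quantity we sum, and the final equality is stated as an equation in $\mathbb{Z}$ (or at least in a domain in which $2\cdot\fw{w}{F} \ge \sw{w}{X}\cdot\card{F}$ is not presupposed). So the proof is essentially one line of sum-linearity after unfolding.
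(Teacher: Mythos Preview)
Your proposal is correct and is exactly the natural one-line argument. Note that the paper classifies this statement as a proposition and, per its stated convention, omits the proof entirely as a simple technical result; your unfolding of the definitions together with linearity of finite sums is precisely the intended verification.
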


\begin{proposition}
\label{thm:Frankl_Family_share_ge_0}
$\frankl{F} \iff \exists w.\ \wf{w}{(\Union{F})}\ \And\ \fs{F}{w}{(\Union{F})} \ge 0$
\end{proposition}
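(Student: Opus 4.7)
The plan is to derive this equivalence by chaining together the two immediately preceding propositions, which together contain essentially all of the content. By Proposition \ref{lemma:Frankl_weight}, $\frankl{F}$ is equivalent to the existence of some weight function $w$ on $\Union{F}$ satisfying
\[
  2 \cdot \fw{w}{F} \;\ge\; \sw{w}{\Union{F}}\cdot\card{F}.
\]
Specializing Proposition \ref{prop:Family_share} to $X = \Union{F}$ yields the identity $\fs{F}{w}{(\Union{F})} = 2 \cdot \fw{w}{F} - \sw{w}{\Union{F}}\cdot\card{F}$, so the above inequality is literally the statement that $\fs{F}{w}{(\Union{F})} \ge 0$.

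Thus the proof reduces to a one-line rewriting: the right-hand side of Proposition \ref{lemma:Frankl_weight} and the right-hand side of the proposition to be proved have identical existential prefixes (the same condition $\wf{w}{(\Union{F})}$), and their body inequalities are equivalent under the definition of share with $X := \Union{F}$. I would therefore simply \emph{unfold} the definition of $\fs{F}{w}{(\Union{F})}$ (or apply Proposition \ref{prop:Family_share} directly), and then invoke Proposition \ref{lemma:Frankl_weight}.

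There is no substantive obstacle here; the main content was already packaged in the averaging characterization and in the algebraic identity for shares. The only thing that must be checked carefully is that everything is done over natural numbers / integers, so that the rearrangement $2\cdot \fw{w}{F} \ge \sw{w}{\Union{F}}\cdot\card{F} \iff 2\cdot \fw{w}{F} - \sw{w}{\Union{F}}\cdot\card{F} \ge 0$ is valid in the underlying type (this is why share was defined via subtraction of weights on the complement rather than via the rational quantity $\sw{w}{A} - \sw{w}{X}/2$). In the Isabelle/HOL formalization this would be handled by using integer-valued arithmetic for the share, with the share of a family being a nonnegative integer condition, and the equivalence then follows by a single rewrite followed by the invocation of the averaging lemma.
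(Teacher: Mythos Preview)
Your proposal is correct and matches the paper's intended argument: the paper does not print a proof for this proposition (propositions are declared to be ``simple, technical results \ldots\ printed here without proofs''), but the placement immediately after Propositions \ref{lemma:Frankl_weight} and \ref{prop:Family_share} makes clear that it is meant to follow exactly as you describe, by rewriting the inequality in Proposition \ref{lemma:Frankl_weight} via the identity $\fs{F}{w}{(\Union{F})} = 2\cdot \fw{w}{F} - \sw{w}{\Union{F}}\cdot\card{F}$.
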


\paragraph{Union-closed extensions}
The next definition introduces an important notion for checking
FC-families.
\begin{definition}
  \emph{Union-closed extensions} of a family $F_c$ are families that
  are created from elements of the domain of $F_c$ and are union
  closed for $F_c$.  Collection of all union-closed extensions is
  denoted by $\uce{F_c}$, and defined by
  $\uce{F_c} \equiv \{F.\ F \subseteq \pow{\Union{F_c}} \And
  \uca{F_c}{F}\}$.
\end{definition}

The following theorem corresponds to first direction of Poonen's
theorem (Theorem 1 in \cite{frankl-poonen}). The proof is formalized
within Isabelle/HOL and its informal counterpart is given in the
Appendix.

\begin{theorem}
\label{thm:FC_uce_shares_nonneg}
A family $F_c$ is an FC-family if there is a weight function $w$ such
that shares (wrt.~$w$ and $\Union F_c$) of all union-closed extension
of $F_c$ are nonnegative, i.e.,
$\forall F \in \uce{F_c}.\ \fs{F}{w}{(\Union{F_c})} \ge 0$.
\end{theorem}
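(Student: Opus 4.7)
My plan is, given a weight function $w$ as in the hypothesis, to take an arbitrary union-closed $F$ with $F_c \subseteq F$ and produce an $a \in X=\Union F_c$ that is abundant in $F$. Since $w$ is supported on $X$, a direct expansion yields $\fs{F}{w}{X}=\sum_{a\in X} w(a)\bigl(2\cnt{a}{F}-|F|\bigr)$, so establishing $\fs{F}{w}{X}\ge 0$ will suffice: because some $a\in X$ has $w(a)>0$, nonnegativity of this weighted sum forces at least one summand with $2\cnt{a}{F}-|F|\ge 0$, which is exactly Frankl's condition with a witness in $\Union F_c$. Thus the entire task reduces to proving $\fs{F}{w}{X}\ge 0$ from the hypothesis, which concerns only families of subsets of $X$.

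To connect an arbitrary $F$ to union-closed extensions, I will partition $F$ according to the part of each set that lies outside $X$. For each $Y\subseteq \Union F\setminus X$, let $F_Y=\{A\in F : A\setminus X=Y\}$, so that $F=\bigsqcup_Y F_Y$, and let $F^*(Y)=\{A\cap X : A\in F_Y\}\subseteq \pow{X}$. The map $A\mapsto A\cap X$ is a bijection $F_Y\to F^*(Y)$ with inverse $B\mapsto B\cup Y$, and $\sw{w}{A}=\sw{w}{A\cap X}$ (because $w$ is zero off $X$) gives $\fs{F_Y}{w}{X}=\fs{F^*(Y)}{w}{X}$, hence $\fs{F}{w}{X}=\sum_Y \fs{F^*(Y)}{w}{X}$.

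The central step is showing that every $F^*(Y)$ belongs to $\uce{F_c}$. For union-closedness, pick $B_1,B_2\in F^*(Y)$: then $B_1\cup Y,\,B_2\cup Y\in F$, their union $(B_1\cup B_2)\cup Y$ still lies in $F$, and its non-$X$-part is exactly $Y$ because $B_1,B_2\subseteq X$; hence $B_1\cup B_2\in F^*(Y)$. For closure under $F_c$-unions, any $C\in F_c\subseteq F$ is contained in $X$, so $(B\cup Y)\cup C=(B\cup C)\cup Y\in F$ also has non-$X$-part $Y$, placing $B\cup C$ in $F^*(Y)$. The theorem's hypothesis then gives $\fs{F^*(Y)}{w}{X}\ge 0$ for every $Y$; summing yields $\fs{F}{w}{X}\ge 0$, and the averaging identity of the first paragraph concludes.

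The hard part, I expect, will be the passage from a general union-closed $F$ (which may live over a universe strictly larger than $X$) to the fixed-universe trace families $F^*(Y)$: one must track non-$X$-parts carefully, apply $F_c\subseteq\pow{X}$ at exactly the right place to get $F_c$-closure of each $F^*(Y)$, and, in the Isabelle development, decompose the share of $F$ cleanly as a sum over $Y$. The averaging lemma and the bijective rewrite of shares are purely algebraic manipulations that should go through with minimal fuss once the partitioning is in place.
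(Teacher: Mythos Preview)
Your proposal is correct and follows essentially the same approach as the paper. The paper packages the partition via the notion of $S$-hypercubes $\hc{K'}{S}$ and their projections $\hcprj{K'}{S}{F}$, but for $S=\Union F_c$ and $K=\Union F\setminus S$ these are exactly your fibres $F_Y$ and trace families $F^*(Y)$ with $Y=K'$; the paper's Lemma~\ref{lemma:Frankl_Min_Family_share_ge_0} then carries out precisely your computation (including the step of zeroing $w$ outside $\Union F_c$, which you should make explicit rather than assume).
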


\bigskip

In the rest of this section we show two different possibilities for
searching for a union-closed extension with a negative share -- the
first is based on a specialized algorithm, crafted specifically for
this problem, while the other is based on integer linear programming
and employs an integer linear programming (ILP) package or a
satisfiability modulo theory (SMT) solver.

\subsection{Search for Negative Shares}
\label{sec:search}

Theorem \ref{thm:FC_uce_shares_nonneg} inspires a procedure for
verifying FC-families. It should take a weight function on
$\Union{F_c}$ and check that all union-closed extensions of $F_c$ have
nonnegative shares. There are only finitely many union-closed
extensions, so in principle, they can all be checked. However, in
order to have efficient procedure, naive checking procedure will not
suffice and further steps must be taken. We now define a
procedure \emph{SomeShareNegative}, denoted by $\ssn{F_c}{w}$, such
that $\ssn{F_c}{w} = \top$ iff there is an $F \in \uce{F_c}$ such that
$\fs{F}{w}{(\Union{F_c})} < 0$. The procedure is based on a recursive
function $\ssnaux{L_r}{F_t}{F_c}{w}{(\Union{F_c})}$ that preforms a
systematic traversal of all union-closed extensions of $F_c$, but with
pruning that significantly speeds up the search. The procedure has
four parameters ($F_c$, $w$, $L_r$, and $F_t$) that we now
describe. The two fixed parameters of the function (parameters that do
not change troughout the recursive calls) are the family $F_c$ and the
weight function $w$. If a union-closed extension of $F_c$ has a
negative share, it must contain one or more sets with a negative
share. Therefore, a list $L$ of all different subsets of $\Union{F_c}$
with negative shares is formed and each candidate family is determined
by elements of $L$ that it includes. A recursive procedure creates all
candidate families by processing elements of that list sequentially,
either skipping them (in one recursive branch) or including them into
the current candidate family $F_t$ (in the other recursive branch),
maintaining the invariant that the current candidate family $F_t$ is
always from $\uce{F_c}$. The two parameters of the recursive function
$\ssnaux{L_r}{F_t}{F_c}{w}{(\Union{F_c})}$ that change during
recursive calls are the remaining part of the list $L_r$ and the
current candidate family $F_t$. If the current leading element of
$L_r$ has been already included in $F_t$ (by earlier closure
operations required to maintain the invariant) the search can be
pruned. If the sum of (negative) shares of $L_r$ (the remaining
elements of $L$) is less then the (nonnegative) share of the current
$F_t$, then $F_t$ cannot be extended to a family with a negative share
(even in the extreme case when all the remaining elements of $L$ are
included) so, again, the search can be pruned.

\begin{definition}The function
  $\ssnaux{L_r}{F_t}{F_c}{w}{(\Union{F_c})}$ is defined by a primitive
  recursion (over the structure of the list $L_r$):
  \begin{eqnarray*}
   \ssnaux{\emptylist}{F_t}{F_c}{w}{(\Union{F_c})} &\equiv& \fs{F_t}{w}{(\Union{F_c})} < 0 \\
   \ssnaux{(\cons{h}{t})}{F_t}{F_c}{w}{(\Union{F_c})} &\equiv& \mathrm{ if\ } \fs{F_t}{w}{(\Union{F_c})} + \sum_{A \in \cons{h}{t}}\ss{A}{w}{(\Union{F_c})} \ge 0\mathrm{\ then\ } \bot \\
    && \mathrm{else\ if\ }\ssnaux{t}{F_t}{F_c}{w}{(\Union{F_c})} \mathrm{\ then\ } \top\\
    && \mathrm{else\ if\ }h \in F_t \mathrm{\ then\ } \bot\\
    && \mathrm{else\ } \ssnaux{t}{(\ica{F_c}{h}{F_t})}{F_c}{w}{(\Union{F_c})}
  \end{eqnarray*}

  Let $L$ be a distinct list such that its set is
  $\{A.\ A \in \pow{\Union{F_c}}\ \And\ \ss{A}{w}{\Union{F_c}} < 0\}$.
  $$\ssn{F_c}{w} \equiv \ssnaux{L}{\emptyset}{\closure{F_c}}{w}{(\Union{F_c})}$$
\end{definition}

The soundnes of the $\ssn{F_c}{w}$ function is given by the following
propositions.

\begin{proposition}
  \label{lemma:ssnaux_correct}
  If (i) $\ssnaux{L_r}{F_t}{F_c}{w}{(\Union{F_c})} = \bot$, (ii) for
  all elements $A$ in $L_r$ it holds that
  $\ss{A}{w}{\Union{F_c}} < 0$, (iii) for all $A \in F - F_t$, if
  $\ss{A}{w}{\Union{F_c}} < 0$, then $A$ is in $L_r$, (iv)
  $F \supseteq F_t$, and (v) $\uca{F_c}{F}$, then
  $\fs{F}{w}{\Union{F_c}} \ge 0$.
\end{proposition}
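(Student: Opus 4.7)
The statement is naturally proved by induction on the structure of the list $L_r$, mirroring the primitive recursion used to define $\ssnauxa{F_t}{F_c}{t}{w}$. The key quantity to track is the decomposition
\[
\fs{F}{w}{\Union{F_c}} \;=\; \fs{F_t}{w}{\Union{F_c}} \;+\; \sum_{A \in F - F_t}\ss{A}{w}{\Union{F_c}},
\]
which lets us separate the share already accumulated in the current candidate $F_t$ from the contributions of the remaining sets of $F$. Condition (iii) is used systematically to argue that any set of $F-F_t$ with a \emph{negative} share must lie in $L_r$, while sets of $F-F_t$ with a nonnegative share only help.

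In the base case $L_r = \emptylist$, the definition gives $\fs{F_t}{w}{\Union{F_c}} \ge 0$ directly from (i). Condition (iii) then forces every $A \in F - F_t$ to have nonnegative share (otherwise it would belong to the empty list), so the decomposition above yields $\fs{F}{w}{\Union{F_c}} \ge 0$.

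For the inductive step with $L_r = \cons{h}{t}$, I would do a case split on which branch of the definition produced the value $\bot$. If the pruning test in the first branch triggered, i.e.\ $\fs{F_t}{w}{\Union{F_c}} + \sum_{A \in \cons{h}{t}} \ss{A}{w}{\Union{F_c}} \ge 0$, then by (iii) the subset of $F - F_t$ consisting of negative-share sets is included in $L_r$; dropping the remaining (negative-share) elements of $L_r$ only decreases a sum of negative numbers, so
\[
\sum_{A \in F - F_t} \ss{A}{w}{\Union{F_c}} \;\ge\; \sum_{A \in L_r} \ss{A}{w}{\Union{F_c}},
\]
and combining with the pruning inequality gives $\fs{F}{w}{\Union{F_c}} \ge 0$. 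If instead we fell through to the branch where $\ssnauxa{F_t}{F_c}{t}{w} = \bot$ and $h \in F_t$, I would apply the induction hypothesis to $(t, F_t)$: conditions (ii), (iv), (v) are immediate, and for (iii) the only set removed from $L_r$ is $h$, but $h \in F_t$ forces $h \notin F - F_t$, so negative-share sets of $F - F_t$ still lie in $t$.

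The main obstacle, and the genuinely interesting case, is the last branch, where $h \notin F_t$ and $\ssnauxa{\ica{F_c}{h}{F_t}}{F_c}{t}{w} = \bot$. Here I would split on whether $h \in F$ or not. If $h \notin F$, applying the induction hypothesis to $(t, F_t)$ still works, exactly as before. If $h \in F$, the right thing is to apply the induction hypothesis to $(t, \ica{F_c}{h}{F_t})$. For this, the critical step is to verify $F \supseteq \ica{F_c}{h}{F_t}$: from $F \supseteq F_t$, $h \in F$, and condition (v) $\uca{F_c}{F}$, we get both $\sumfam{F}{\{h\}} \subseteq F$ (hence $\sumfam{F_t}{\{h\}} \subseteq F$) and $\sumfam{F_c}{\{h\}} \subseteq F$, so every ingredient of $\ica{F_c}{h}{F_t}$ lands in $F$. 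The remaining hypotheses of the induction transfer without trouble, noting that $h \in \ica{F_c}{h}{F_t}$ removes $h$ from $F - \ica{F_c}{h}{F_t}$, so condition (iii) is preserved with $L_r$ replaced by $t$. This is precisely where the invariant maintained by the algorithm (closing under $F_c$ when a new set is ``accepted'') is justified by the union-closure-for-$F_c$ property of the target family $F$.
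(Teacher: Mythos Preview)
Your argument is correct. The paper does not actually print a proof of this proposition (it states that propositions ``express simple, technical results and are printed here without proofs''), so there is nothing in the text to compare against; but your structural induction on $L_r$, with the case split in the last branch on whether $h \in F$ and the verification that $\uca{F_c}{F}$ together with $h \in F$ and $F_t \subseteq F$ yields $\ica{F_c}{h}{F_t} \subseteq F$, is exactly the natural proof dictated by the recursive definition and is almost certainly what the underlying Isabelle formalization does.

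One small point worth making explicit in your write-up: in the non-pruning branch you rely on $\ssnaux{t}{F_t}{F_c}{w}{(\Union{F_c})} = \bot$ both when $h \in F_t$ and when $h \notin F_t$ with $h \notin F$. You do note this (``exactly as before''), but it is the one place where the reader might lose the thread, since the fact is established once (from the overall result being $\bot$, the middle branch cannot have returned $\top$) and then used in two separate subcases. Otherwise the invariants (ii)--(v) are tracked carefully and the share decomposition $\fs{F}{w}{\Union{F_c}} = \fs{F_t}{w}{\Union{F_c}} + \sum_{A \in F - F_t}\ss{A}{w}{\Union{F_c}}$ is exactly the right bookkeeping device.
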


\begin{proposition}
  \label{lemma:ssn_correct}
  If $\ssn{F_c}{w} = \bot$ and $F \in \uce{F_c}$ then
  $\fs{F}{w}{(\Union{F_c})} \ge 0$.
\end{proposition}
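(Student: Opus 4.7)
The plan is to reduce Proposition~\ref{lemma:ssn_correct} directly to the stronger auxiliary Proposition~\ref{lemma:ssnaux_correct}. Reading off the definition of $\ssn{F_c}{w}$, the natural instantiation is to take the ``$F_c$'' parameter of Proposition~\ref{lemma:ssnaux_correct} to be $\closure{F_c}$, the list $L_r$ to be the list $L$ of subsets of $\Union{F_c}$ with negative share, the current candidate family $F_t$ to be $\emptyset$, and the arbitrary extending family $F$ to be the given $F \in \uce{F_c}$. Before invoking the lemma, I would record one small invariance: by Proposition~\ref{prop:closure}(1) every set in $\closure{F_c}$ is a union of sets from $F_c$, so $\Union{\closure{F_c}} = \Union{F_c}$. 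This ensures that every share expression of the form $\ss{A}{w}{\Union{\closure{F_c}}}$ or $\fs{F}{w}{\Union{\closure{F_c}}}$ agrees with the corresponding expression using $\Union{F_c}$, so the switch from $F_c$ to $\closure{F_c}$ in the recursive call does not alter any numeric quantity.

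Next I would verify the five hypotheses of Proposition~\ref{lemma:ssnaux_correct} one by one. Hypothesis~(i) is exactly the unfolded definition of the assumption $\ssn{F_c}{w} = \bot$. Hypothesis~(ii) is immediate from the construction of $L$, whose underlying set is by design exactly the collection of subsets of $\Union{F_c}$ with negative share. Hypothesis~(iii) uses $F \subseteq \pow{\Union{F_c}}$, which is part of $F \in \uce{F_c}$: any $A \in F$ with negative share is a subset of $\Union{F_c}$ with negative share, so by the definition of $L$ it appears in $L$. Hypothesis~(iv) is trivial since $F_t = \emptyset$. The only mildly nontrivial check is hypothesis~(v): we need $\uca{\closure{F_c}}{F}$, which follows from $F \in \uce{F_c}$ by applying Proposition~\ref{prop:closure}(3) to the pair $\uca{F_c}{F}$ and $F \subseteq \pow{\Union{F_c}}$.

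The conclusion of Proposition~\ref{lemma:ssnaux_correct} then yields $\fs{F}{w}{\Union{\closure{F_c}}} \ge 0$, which by the invariance recorded above is exactly the desired $\fs{F}{w}{\Union{F_c}} \ge 0$. I expect the main obstacle to be purely bookkeeping --- keeping track of the substitution $F_c \mapsto \closure{F_c}$ and making sure that no share or reference set is silently shifted when crossing the boundary between the statement and its auxiliary form --- but once Proposition~\ref{prop:closure}(3) and the equality $\Union{\closure{F_c}} = \Union{F_c}$ are in hand, the argument reduces to a direct instantiation of the auxiliary lemma.
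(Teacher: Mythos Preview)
Your reduction is correct and is exactly the approach the paper intends: Proposition~\ref{lemma:ssnaux_correct} is stated precisely so that Proposition~\ref{lemma:ssn_correct} follows by instantiating $L_r := L$, $F_t := \emptyset$, and replacing the parameter $F_c$ by $\closure{F_c}$, using Proposition~\ref{prop:closure}(3) for hypothesis~(v) and the equality $\Union{\closure{F_c}} = \Union{F_c}$ throughout. The paper does not spell out this proof (propositions are stated without proof there), but your verification of the five hypotheses and the bookkeeping around the $F_c \mapsto \closure{F_c}$ substitution matches the evident intended derivation.
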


Apart from being sound, the procedure can also be shown to be
complete. Namely, it could be shown that if $\ssn{F_c}{w} = \top$,
then there is an $F \in \uce{F_c}$ such that
$\fs{F}{w}{(\Union{F_c})} < 0$. This comes from the invariant that the
current family $F_t$ in the search is always in $\uce{F_c}$, which is
maintained by taking the closure $\ica{F_c}{h}{F_t}$ whenever an
element $h$ is added. Since this aspect of the procedure is not
relevant for the rest of the proofs, it will not be formally stated
nor proved. However, this can give a method for finding a
counterexample family for a given weight function, that can be useful
for fully automated classification of a given family (described in
Section \ref{sec:procedureFCstatus}), that we use to find the
minimal FC-families (as described in Section \ref{sec:minimalFC}).

\paragraph{Optimizations}
Important optimization to the basic $\ssn{F_c}{w}$ procedure is to
avoid repeated computations of family shares (both for the elements of
the list $L_r$ and the current family $F_t$). So, instead of accepting
a list of families of sets $L_r$, and the current family of sets
$F_t$, the function is modified to accept a list of ordered pairs
where first component is a corresponding element of $L_r$, and the
second component is its share (wrt.~$w$ and $\Union{F_c}$), and to
accept an ordered pair $(F_t, s_t)$ where $s_t$ is its family share
(wrt.~$w$ and $\Union{F_c}$). The summation of shares of elements in
$L_r$ is also unnecessarily repeated. It can be avoided if the sum
$s_l$ is passed trough the function.

\begin{eqnarray*}
    \ssnaux{(\emptylist, 0)}{(F_t, s_t)}{F_c}{w}{(\Union{F_c})} &\equiv& s_t < 0 \\
    \ssnaux{(\cons{(h, s_h)}{t},\;s_l)}{(F_t,\;s_t)}{F_c}{w}{(\Union{F_c})} &\equiv& \mathrm{ if\ } s_t + s_l \ge 0\mathrm{\ then\ } \bot \\
    && \mathrm{else\ if\ }\ssnaux{(t,\;s_l - s_h)}{(F_t,\;s_t)}{F_c}{w}{(\Union{F_c})} \mathrm{\ then\ } \top\\
    && \mathrm{else\ if\ }h \in F_t \mathrm{\ then\ } \bot\\
    && \mathrm{else\ let\ } F_t' = \ica{F_c}{h}{F_t};\ s_t' = \fs{F_t'}{w}{(\Union{F_c})}\\
    && \qquad \mathrm{\ in}\ \ssnaux{(t, ls - s_h)}{(F_t',s_t'\;)}{F_c}{w}{(\Union{F_c})}
\end{eqnarray*}

Another source of inefficiency is the calculation of
$\fs{F_t'}{w}{(\Union{F_c})}$. If performed directly based on the
definition of family share for $F_t'$, the sum would contain shares of
all elements from $F_t$ and of all elements that are added to $F_t$
when adding $h$ and closing for $F$. However, it is already known that
the sum of shares for elements of $F_t$ is $s_t$ and the
implementation could benefit from this fact. Also, calculating shares
of sets that are added to $F_t$ can be made faster. Namely, it happens
that set share of a same set is calculated over and over again in
different parts of the search space. So, it is much better to
precompute shares of all sets from $\pow{(\Union{F_c})}$ and store
them in a lookup table that will be consulted each time a set share is
needed. Note that in this case there is no more need to pass the
function $w$ itself, nor to calculate the domain $\Union{F_c}$, but
only the lookup table, denoted by $s_w$.

\begin{eqnarray*}
    \ssnauxa{(\emptylist, 0)}{(F_t, s_t)}{F_c}{s_w} &\equiv& s_t < 0 \\
    \ssnauxa{(\cons{(h, s_h)}{t},\;s_l)}{(F_t,\;s_t)}{F_c}{s_w} &\equiv& \mathrm{ if\ } s_t + s_l \ge 0\mathrm{\ then\ } \bot \\
    && \mathrm{else\ if\ }\ssnauxa{(t,\;s_l - s_h)}{(F_t,\;s_t)}{F_c}{s_w} \mathrm{\ then\ } \top\\
    && \mathrm{else\ if\ }h \in F_t \mathrm{\ then\ } \bot\\
    && \mathrm{else\ } \ssnauxa{(t, s_l - s_h)}{(\icaa{F_c}{h}{(F_t, s_t)}{s_w})}{F_c}{s_w}
\end{eqnarray*}

\begin{eqnarray*}
  \icaa{F_c}{h}{(F_t, s_t)}{s_w} &\equiv& \mathrm{let\ }\ add\ = \ \{h\}\ \union\
       (\sumfam{F_t}{\{A\}}) \ \union\ (\sumfam{F_c}{\{A\}});\\
& & \qquad new\ =\ \{A \in add.\ A\notin F_t\}\\
& & \ \textrm{in}\ (new\; \union\;F_t,\ s_t + \sum_{A \in new}s_w\ A)\\
\end{eqnarray*}

We have shown that this implementation is equivalent to the starting,
abstract one (it returns false iff there is a union-closed extension
with a negative share).

\subsection{Integer linear programming}

An alternative to using a specialized, verified procedure
$\ssn{F_c}{w}$ is to encode the existence of a union-closed extension
$F$ with a negative share as a linear integer programming problem and
to employ an existing solver to do the search \cite{pulaj}. In our
case, we need to formally prove (within the Isabelle/HOL) that our
characteristic FC-families are indeed FC, so the SMT solver Z3
integrated within Isabelle/HOL can be used \cite{isabelle-z3}.

Assume that $F_c$ is given and $n$ is such number that
$\Union{F_c} = \setn{n}$. Each subset of $\setn{n}$ can be either
included or excluded from a family $F$. There are $2^n$ such subsets,
wich is significantly less than the number of families which is
bounded above by $2^{2^n}$. For each set $A \subseteq \setn{n}$ we
define a 0-1 integer (or Boolean) variable $x_A$, and its value is 1
iff the set is included in the sought family $F$ i.e.,
$x_A = 1 \longleftrightarrow A \in F$. We must encode that the family
is union-closed, so for every two sets $A \subseteq \setn{n}$ and
$B\subseteq \setn{n}$ it must hold that $A \in F$ and $B \in F$ imply
that $A \union B \in F$, that is
$x_A = 1\ \wedge\ x_B = 1 \longrightarrow x_{A \union B} = 1$, which
can be encoded as $$x_A + x_B \leq 1 + x_{A \union B}.$$

Next we must encode that family is closed for $F_c$, so for every set
$A \in F_c$ and $B \subseteq \setn{n}$ it must hold that
$B \in F \longrightarrow A \union B \in F$, which can be encoded as
$$x_B \leq x_{A \union B}.$$

Finally, we should encode that $F$ has a negative share, i.e.,
$\fs{F}{w}{(\Union{F_c})} < 0$. Since
$\fs{F}{w}{(\Union{F_c})} = \sum_{A \in F} \ss{A}{w}{\setn{n}} =
\sum_{A \subseteq \setn{n}} x_A \cdot \ss{A}{w}{\setn{n}}$,
the condition is equivalent to
$$\sum_{A \subseteq \setn{n}} x_A \cdot \ss{A}{w}{\setn{n}} < 0.$$

The conjunction of the three listed types of linear inequalities is
given to the SMT solver and it returns a model iff there is an
union-closed extension of $F_c$ with a negative share (values of
variables uniquely determine that extension $F$). The result of the
SMT solver (a model, or an unsatisfiability proof) is then verified by
Isabelle/HOL, yielding a fully formally verified proof
\cite{isabelle-z3}.

Note that the problem could be stated as a problem over rational
weights, but in our whole framework we considered only integers, and
it turned out that the search is efficient enough.

\section{Proving that a Family is not FC}
\label{sec:provenotfc}

Proving that a family is not an FC-family is also based on the
Poonen's theorem (Theorem 1 in \cite{frankl-poonen}). The converse of
our Theorem \ref{thm:FC_uce_shares_nonneg} also holds, and if there is
no weight function satisfying the conditions of Theorem
\ref{thm:FC_uce_shares_nonneg}, then the family $F_c$ is not an
FC-familly. However, this is hard to prove formally within
Isabelle/HOL (the original Poonen's proof uses the hyperplane
separation theorem for convex sets), so we formally proved the
following variant that is both easier to prove and more suitable for
further application.
\begin{theorem}
  \label{thm:nonFC}
  Assume that $F_c$ is a union-closed family. If there exists a
  sequence of families $F_0, \ldots, F_k$, and a sequence of natural
  numbers $c_0, \ldots, c_k$ that:
  \begin{enumerate}
  \item for all $0 \leq i \leq k$ it holds that $F_i \in \uce{F_c}$,
  \item for every $a \in \Union{F_c}$ it holds that $$\sum_{i=0}^k c_i \cdot (2 \cdot \cnt{a}{F_i} - \card{F_i}) < 0,$$
  \item not all $c_i$ are zero (i.e., $\exists i.\ 0 \le i \le k \wedge c_i > 0$),
  \end{enumerate}
  then the family $F_c$ is not an FC-family.
\end{theorem}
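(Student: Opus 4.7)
My plan is to prove Theorem \ref{thm:nonFC} constructively. From the families $F_i$ and multiplicities $c_i$ I will build an explicit union-closed family $F \supseteq F_c$ in which no element of $U := \Union F_c$ is abundant; by the definition of FC-family this exhibits $F_c$ as not FC.

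\emph{Setup.} The three hypotheses are preserved under simultaneous scaling $c_i \mapsto Mc_i$ for any positive integer $M$. I may therefore assume, after scaling with $M$ large enough, that the total $c := \sum_i c_i$ exceeds $|F_c| + 2^{|U|+1}$. Introduce $c$ fresh tag elements $T = \{t_1, \ldots, t_c\}$ disjoint from $U$, enumerate the indices $j \in \{1,\ldots,c\}$ so that $c_i$ of them are associated to $F_i$, and write $G_j \in \uce{F_c}$ for the corresponding family. Define
\[
F \;=\; \closure{\,F_c \,\cup\, \bigcup_{j=1}^{c} \{A \cup (T \setminus \{t_j\}) : A \in G_j\}\,}.
\]
The use of \emph{complement} tags is the decisive device: since $(T \setminus \{t_{j_1}\}) \cup (T \setminus \{t_{j_2}\}) = T$ whenever $j_1 \neq j_2$, the tag part $S \cap T$ of any $S \in F$ can only be $\emptyset$, $T \setminus \{t_j\}$ for a uniquely determined $j$, or $T$ itself.

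\emph{Structure lemma.} Combining the above with the fact that each $G_j$ is union-closed for $F_c$ yields: (a) the tag-$\emptyset$ sets are exactly those of $F_c$; (b) the tag-$(T \setminus \{t_j\})$ sets are exactly $\{A \cup (T \setminus \{t_j\}) : A \in G_j\}$, with any $F_c$-contribution absorbed into $G_j$; (c) the tag-$T$ sets have the form $\{A \cup T : A \in F^\star\}$ for some family $F^\star \subseteq \pow{U}$, so $|F^\star| \leq 2^{|U|}$. As the three tag shapes are mutually exclusive, one obtains $|F| = |F_c| + \sum_{j=1}^c |G_j| + |F^\star|$ and, for every $a \in U$, $\cnt{a}{F} = \cnt{a}{F_c} + \sum_{j=1}^c \cnt{a}{G_j} + \cnt{a}{F^\star}$. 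Grouping the copies of each $F_i$ together gives
\[
2\cnt{a}{F} - |F| \;=\; \bigl(2\cnt{a}{F_c} - |F_c|\bigr) \,+\, \sum_{i=0}^{k} c_i \bigl(2\cnt{a}{F_i} - |F_i|\bigr) \,+\, \bigl(2\cnt{a}{F^\star} - |F^\star|\bigr).
\]
By hypothesis~(2) the middle sum is a strictly negative integer whose absolute value is at least the scaling factor $M$, while the outer summands are bounded in absolute value by $|F_c|$ and $2^{|U|}$ respectively. The choice of $M$ above makes the middle term dominate, so $2\cnt{a}{F} < |F|$ for every $a \in U$. Since $F$ is union-closed by construction and $F \supseteq F_c$ by inspection, $F$ is the required witness.

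The main obstacle will be the structure lemma, specifically verifying that the closure produces exactly the three listed tag shapes and characterizing the $U$-parts in each case without over-counting. Part (b) depends crucially on $G_j \in \uce{F_c}$ to absorb the $F_c$-contribution, and part (c) relies on $F^\star$ being a subset of $\pow{U}$. Once the structure lemma is in place, union-closure of $F$ and the final inequality reduce to routine bookkeeping and a large enough choice of scaling factor.
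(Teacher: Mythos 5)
Your construction is essentially the paper's own proof: both of you tag the $c_i$-many copies of each $F_i$ (after multiplying the multiplicities by a large factor) with complements $T\setminus\{t_j\}$ of a fresh tag set, so that the copies stay pairwise disjoint while inheriting the occurrence statistics of the $F_i$, and both let the large multiplier swamp the bounded contributions of $F_c$ and of the full-tag stratum. The only cosmetic differences are that the paper adjoins the entire powerset of $\Union{F_c}$ tagged with the full set $T$ (so that stratum contributes exactly zero) and verifies union-closure by hand, whereas you take a closure and merely bound the top stratum by $2^{|U|}$; note only that the dominance condition should be imposed on the multiplier $M$ itself (namely $M > \card{F_c} + 2^{|U|}$), not on $c=\sum_i c_i$, since $c$ can exceed your threshold without $M$ doing so.
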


Major differences between this and Poonen's original formulation are
that instead of real we use only natural numbers, that instead of
considering the whole collection $\uce{F_c}$ we consider only some of
its members, and that instead of showing that there is no weight
function with non-negative shares for those selected union-closed
extensions i.e., showing that the system
$\fs{F_i}{w}{(\Union{F_c})} \geq 0$ i.e.,
$\sum_{a \in \Union{F_c}} w_a \cdot (2 \cdot \cnt{a}{F_i} -
\card{F_i}) \geq 0$,
for all every $0 \leq i \leq k$, has no all-nonnegative, non-all-zero
solutions, we show that the its dual system
$\sum_{i=0}^k c_i \cdot (2 \cdot \cnt{a}{F_i} - \card{F_i}) < 0$, for
every $a \in \Union{F_c}$, has a nontrivial solution
($2 \cdot \cnt{a}{F_i} - \card{F_i}$ equals the difference between the
number of members of $F_i$ that contain $a$ and the number of members
of $F_i$ that do not). The proof follows Poonen (to most extent) and
is given in the Appendix.

Note that once the sequence of families $F_0, \ldots, F_k$ and the
sequence of numbers $c_0, \ldots, c_k$ are known, the formal proof is
much easier than in the FC case, as it need not use any search (all
conditions of Theorem \ref{thm:nonFC} can be directly
checked). Finding those sequences is not trivial, but it can be done
outside Isabelle/HOL.

\section{Procedure for checking FC-status of a given family and
  finding witnesses}
\label{sec:procedureFCstatus}

To prove that a family is FC based on Theorem
\ref{thm:FC_uce_shares_nonneg} one requires a witnessing weight
function $w$. To prove that a family is nonFC based on Theorem
\ref{thm:nonFC} one requires a witnessing sequence of families $F_i$
and numbers $c_i$. For the final formal proof of the FC-status of
characteristic families it is not important how those witnesses are
obtained. It is very desired to have a procedure that can obtain them
fully automatically. Pulaj suggested the first algorithm capable of
checking the FC-status of an arbitrary family based on the cutting
planes method and linear (integer) programming implemented in SCIP
\cite{pulaj}, and it can easily be modified to provide required
witnesses (both for the FC and the nonFC case). Note that such
procedure need not be implemented within Isabelle/HOL -- its purpose
is to determine the status and give witnesses that can be used for
Theorem \ref{thm:FC_uce_shares_nonneg} or \ref{thm:nonFC}, which are
formally checked within Isabelle/HOL.

Assume that a family $F_c$ is given. The procedure alternates two
phases. In the first one a candidate weight function is constructed,
and in the second it is checked if it satisfies the condition of
Theorem \ref{thm:FC_uce_shares_nonneg}.

In the first phase, the candidate weight function (represented by
unkwowns $w_i$, for $0 \leq i < n$) is constructed by solving a system
of linear integer inequalities (as we use only natural numbers in our
framework). In the beginning the system contains only conditions
required for a weight function ($w_i \geq 0$ and $\sum_i w_i > 0$),
but as new families are constructed in the second phase, it is
extended by the condition $\fs{F_i}{w}{(\Union{F_c})} \geq 0$, for
each family $F_i$ obtained in the second phase. If the current system
becomes unsatisfiable, than $F_c$ is not FC-family, the current set of
families $F_i$ can be used as a witness for Theorem \ref{thm:nonFC}
and the coefficients $c_i$ are obtained by solving its dual system.
Otherwise, its solution is the candidate weight function used in the
second phase.

In the second phase it is checked if the weight function $w$ satisfies
the conditions of Theorem \ref{thm:FC_uce_shares_nonneg} i.e., that
there is no union-closed extension of $F_c$ with a negative share
wrt.~$w$. For this, either one of the two approaches described in
Section \ref{sec:search} (either on the $\ssn{F_c}{w}$ procedure or
solving the system of linear inequalities) can be used. If all shares
are non-negative, then $F_c$ is an FC-family and the current weight
function $w$ is used as a witness to formally prove that using Theorem
\ref{thm:FC_uce_shares_nonneg}. If it does not, than the procedure
constructs a family $F_i$ that is in the union-closed extension of
$F_c$ and has a negative share. That family is then added to the
current set of such families and fed into the first phase again.

Unlike in the final Isabelle/HOL proofs, in the experimentation phase
non-verified implementations can be used (since the final witnesses
are checked again, using Isabelle/HOL). Therefore, in our
implementation we have used the ILP package SCIP (the same one used in
\cite{pulaj}) in all three cases (solving the system for finding a
candidate weight, solving the system to find coefficients $c_i$ based
on the sequence of families $F_i$ for which finding the weight
function was shown to be impossible, and for solving the system that
finds a family with a negative share wrt.~the current weight function
$w$), as our preliminary experiments indicated that it gives results
faster then the SMT solver Z3 (when run outside
Isabelle/HOL). Interestingly, the $\ssn{F_c}{w}$ procedure often gave
a family $F_i$ faster then SCIP, but the overall procedure required
more iterations (we assume that this can be attributed to a very
regular order in which $\ssn{F_c}{w}$ enumerates families). One
additional technique for which we noticed that significantly speeds up
the convergence is to favor smaller weights i.e., to require that the
weight function $w$ is minimal wrt. its sum of the weights $w_i$ (this
was possible to obtain in SCIP by using its built-in optimization
features and the objective function $\sum_{i}w_i$).

\section{Characteristic families}
\label{sec:characteristic}

In this section we introduce the notion of \emph{FC-covering} and
\emph{nonFC-covering} that enables to determine the FC-status of all
families from $\nfam{n}$ from the status of just a small number of FC
and nonFC-families that are characteristic in some sense (that we
shall precisely define). Our goal is to give a full characterization
of all $2^{2^n}$ families from $\nfam{n}$ (i.e., for each family to
determine whether it is an FC-family or a nonFC-family), and in theory
that can be done by explicitly checking the status for each of
them. In practice that is almost impossible since even for $n=6$ there
are $2^{2^6} = 2^{64} \approx 2\cdot 10^{19}$ families.  However, (i)
many of them are isomorphic and (ii) many have the same closure and
(iii) many include smaller FC-families or are included in larger nonFC
families -- we shall show that in all those cases the FC-status can be
deduced from the already known status of other families, so we base
our definitions of characteristic families and covering on those
facts.  We shall devise methods that explicitly check the FC-status
for only a minimal set of characteristic families, and after that
enable us to easily get the status of every family from $\nfam{n}$ by
checking if they are covered by the characteristic ones.

\subsection{Isomorphic families. Representing collections. Bases.}
\label{sec:iso}

Bijective changes of the domain of a family do not affect if the
family is FC. 

\begin{definition}
  Two families $F$ and $F'$ are \emph{isomorphic} (denoted by
  $\iso{F}{F'}$) if there is a bijective function $f$ between
  $\Union{F}$ and $\Union{F'}$ such that $\mapfam{f}{F} = F'$. 
\end{definition}

If we consider families $\{\{a\}, \{a, b, c\}, \{a, c\}\}$ and
$\{\{0\}, \{0, 1, 2\}, \{0, 2\}\}$, they are clearly isomorphic, so we
consider only families over $\setn{n}$. The family
$\{\{0, 1, 2\}, \{1, 2\}, \{2\}\}$ also shares the same structure with
the previous two (although, that might not be so obvious, consider the
bijection $0 \mapsto 2, 1 \mapsto 0, 2\mapsto 1$)), so there are also
many isomorphic families over $\setn{n}$.

Obviously, isomorphism is an equivalence relation and isomorphic
families share all structural properties relevant to us ($F$ is union
closed iff and only if $F'$ is, $F$ satisfies the Frankl's condition
iff $F'$ does, the same holds for FC-family condition etc.).

\begin{proposition}
  If $\iso{F}{F'}$ then $F$ is an FC-family iff $F'$ is an FC-family.
\end{proposition}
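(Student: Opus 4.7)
The plan is to prove by symmetry only one direction: assume $F$ is FC with isomorphism $f : \Union F \to \Union F'$ satisfying $\mapfam{f}{F} = F'$, and derive that $F'$ is FC. So fix any union-closed family $G \supseteq F'$; I must exhibit an $a \in \Union F'$ with $2 \cdot \cnt{a}{G} \ge \card{G}$. The idea is to transport $G$ backwards along (an extension of) $f^{-1}$ so as to obtain a union-closed overfamily of $F$, apply the FC-hypothesis there, and then push the resulting witness back through $f$.

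Concretely, extend $f^{-1}$ to a bijection $g$ defined on all of $\Union G$ by sending $\Union G \setminus \Union F'$ injectively into some set $V$ disjoint from $\Union F$; in the Isabelle formalization, where everything lives over the naturals, $V$ can be realized by fresh numbers above $\max \Union F$. Set $G_0 = \{\mapset{g}{A} : A \in G\}$. Since $g$ is injective on $\Union G$, the assignment $A \mapsto \mapset{g}{A}$ is a bijection between $G$ and $G_0$ that commutes with unions; therefore $G_0$ is union-closed, $\card{G_0} = \card{G}$, and $\cnt{g(x)}{G_0} = \cnt{x}{G}$ for every $x \in \Union G$. Moreover, for any $A \in F$ we have $\mapset{f}{A} \in F' \subseteq G$, and since $g$ restricted to $\Union F'$ agrees with $f^{-1}$, the identity $\mapset{g}{\mapset{f}{A}} = A$ gives $F \subseteq G_0$.

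Because $F$ is an FC-family, some $b \in \Union F$ satisfies $2 \cdot \cnt{b}{G_0} \ge \card{G_0}$. Let $a = g^{-1}(b)$; this is well-defined and lies in $\Union F'$ because $g$ restricted to $\Union F'$ is a bijection onto $\Union F$. Combining the equalities from the previous paragraph yields $2 \cdot \cnt{a}{G} = 2 \cdot \cnt{b}{G_0} \ge \card{G_0} = \card{G}$, establishing the Frankl condition for $G$ with witness $a \in \Union F'$ and completing the argument. The main obstacle is the bookkeeping involved in constructing $g$: one must simultaneously ensure that $g$ extends $f^{-1}$, that its image on the new points lies outside $\Union F$ so that $F \subseteq G_0$ is not compromised by collisions, and that injectivity is preserved throughout. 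A clean way to package this in Isabelle/HOL is to first prove a standalone lemma stating that for any injection $h$ on the underlying set of a family $H$, the map $A \mapsto \mapset{h}{A}$ is a bijection from $H$ to $\mapfam{h}{H}$ preserving union-closedness, family cardinality and elementwise counts, and then invoke that lemma once on the constructed $g$.
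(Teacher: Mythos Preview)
Your argument is correct. The extension of $f^{-1}$ to a bijection $g$ on all of $\Union G$ by mapping the extra elements to fresh points outside $\Union F$ is exactly the right device, and the chain of equalities $\cnt{a}{G}=\cnt{g(a)}{G_0}$, $\card{G}=\card{G_0}$, together with $F\subseteq G_0$ and union-closedness of $G_0$, all go through as you describe.

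As for comparison: the paper does not give a proof of this proposition at all. It is listed among the ``Propositions'' which, as the paper states, ``usually express simple, technical results and are printed here without proofs''; the surrounding text simply asserts that ``isomorphic families share all structural properties relevant to us ($F$ is union closed iff $F'$ is, $F$ satisfies the Frankl's condition iff $F'$ does, the same holds for FC-family condition etc.)''. Your write-up is precisely the kind of unpacking that this one-line claim suppresses, and the only genuinely non-trivial point---that an overfamily $G$ of $F'$ may have strictly larger domain than $\Union F'$, so the isomorphism must be extended injectively before it can be transported---is the one you single out and handle carefully. So there is no divergence in approach to report; you have supplied the details the paper omits.
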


\paragraph{Checking if the two families in $\nfam{n}$ are isomorphic}
One (naive) method to check if the two given families are isomorphic
is to check if the second family is among the families obtained by
applying all the permutations of $\listn{n}$ to the first family.

Another approach can be on defining the \emph{canonical
  representative} for each family. It can be the minimal family among
the families obtained by applying all permutations in $\listn{n}$ to
that family, where families are compared based on some fixed ordering
(i.e., a lexicographic ordering, where the sets are also ordered
lexicographically). Then two families are isomorphic iff they have the
same canonical representative.

There are more efficient orderings and methods of finding the
canonical representative, which avoid considering all permutations of
$\listn{n}$ \cite{generationUCMoore}, but since we only consider
the case $n=6$ where the number of permutations is rather small, we
use only the naive methods.

\paragraph{Iso-representatives and iso-bases}

If a collection of families contains many families whose structural
properties should be checked, it suffices to focus only on a single
representative from each isomorphism equivalence class.

\begin{definition}
  A collection $\mathcal{F}_b$ \emph{iso-represents} the collection
  $\mathcal{F}$ if for every $F \in \mathcal{F}$ there exists an
  $F_b \in \mathcal{F}_b$ such that $\iso{F}{F_b}$. If there are no
  $F_1 \in \mathcal{F}_b$ and $F_2 \in \mathcal{F}_b$ such that
  $\iso{F_1}{F_2}$, then $\mathcal{F}_b$ is a \emph{iso-base} of
  $\mathcal{F}$.
\end{definition}

Iso-base of a given collection can be found
algorithmically. Computation can start from the given collection
$\mathcal{F}$, choose its arbitrary member for a representative, move
it to the resulting collection, remove it and all its permuted
variants from the original collection (under a given set of
permutations), and repeat this sieving process until the list becomes
empty. Isabelle/HOL implementation of this procedure will be denoted
by $\base{\mathcal{F}}{\mathcal{P}}$ and its implementation is
available in our formal proof documents.

\begin{proposition}
  \label{lemma:nef}
  If $\mathcal{P}$ is a list of permutations of $\listn{n}$ and if
  $\mathcal{F}$ is a collection of families from $\nfam{n}$, then
  $\base{\mathcal{F}}{\mathcal{P}}$ iso-represents $\mathcal{F}$. If
  $\mathcal{P}$ contains all permutations of $\listn{n}$, then
  $\base{\mathcal{F}}{\mathcal{P}}$ is an iso-base of $\mathcal{F}$.
\end{proposition}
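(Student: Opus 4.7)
The plan is to establish both claims by induction on the recursive structure of $\base{\mathcal{F}}{\mathcal{P}}$: the sieve repeatedly picks an arbitrary family $F_b$ from the current working collection, appends it to the output, and removes $F_b$ together with every $\mapfam{p}{F_b}$ for $p \in \mathcal{P}$.

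For the first claim (iso-representation), I would maintain the invariant that every family ever present in the working collection is isomorphic to some family in the current output. The inductive case is when a chosen representative $F_b$ is about to remove a family $F$: either $F = F_b$ (so $F$ itself is in the output and $\iso{F}{F_b}$ trivially), or $F = \mapfam{p}{F_b}$ for some $p \in \mathcal{P}$, in which case the restriction of $p$ to $\Union{F_b}$ is a bijection onto $\Union{F}$ mapping $F_b$ to $F$, witnessing $\iso{F_b}{F}$. Since every $F \in \mathcal{F}$ belongs to the initial working collection, the invariant upon termination yields that $\base{\mathcal{F}}{\mathcal{P}}$ iso-represents $\mathcal{F}$.

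For the second claim, assume $\mathcal{P}$ enumerates all permutations of $\listn{n}$ and, for contradiction, pick distinct $F_1, F_2 \in \base{\mathcal{F}}{\mathcal{P}}$ with $\iso{F_1}{F_2}$, where $F_1$ is added to the output first. From $\iso{F_1}{F_2}$ we obtain a bijection $f : \Union{F_1} \to \Union{F_2}$ with $\mapfam{f}{F_1} = F_2$. Since $|\Union{F_1}| = |\Union{F_2}|$ and both unions lie inside $\setn{n}$, their complements in $\setn{n}$ are equinumerous, so $f$ extends, via any bijection between the complements, to a permutation $p$ of $\setn{n}$ satisfying $\mapfam{p}{F_1} = F_2$. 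By assumption $p \in \mathcal{P}$, so $F_2$ must have been purged from the working collection at the very step when $F_1$ was added to the output, contradicting the subsequent inclusion of $F_2$.

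The main obstacle I would anticipate is precisely this extension-of-bijection step in the second claim: one must bridge from the isomorphism witness $f$, defined only on $\Union{F_1}$, to a genuine element of $\mathcal{P}$, i.e., a permutation of $\listn{n}$. The set-theoretic content is elementary, but in a mechanised setting it typically requires an auxiliary lemma stating that a bijection between equinumerous subsets of a finite set can be extended to a permutation of the ambient set; combining that lemma with the precise Isabelle/HOL specification of how $\base{\mathcal{F}}{\mathcal{P}}$ scans $\mathcal{P}$ is where the proof will need most care.
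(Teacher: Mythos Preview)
The paper does not print a proof of this proposition: it explicitly classifies propositions as ``simple, technical results'' that are stated without proof in the text (the argument lives only in the accompanying Isabelle/HOL development). So there is no paper proof to compare against line by line.

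Your argument is the natural one and is essentially correct. One small wording issue: the invariant you state for the first claim (``every family ever present in the working collection is isomorphic to some family in the current output'') is false at the outset, when the output is empty. The invariant you actually need, and implicitly use, is that every family in the \emph{original} $\mathcal{F}$ is either still in the working collection or isomorphic to some family already in the output; this holds initially and is preserved by each sieve step exactly as you describe. For the second claim your identification of the bijection-extension step as the crux is right, and your justification (equal-size complements in $\setn{n}$, patch by any bijection between them) is exactly what is needed.
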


If an ordering of families is defined, another way to obtain an
iso-base is to find the canonical representative of each family, and
form the set of all different canonical representatives.

\subsection{Irreducible families}

Another technique that reduces the number of $\nfam{n}$ families for
which the FC-status explicitly needs to be checked is based on the
fact that the FC-status of a family depends only on its closure (and
not the family itself). From Proposition \ref{prop:FC_family_closure}
the following immediately follows.

\begin{proposition}
  If $\closure{F} = \closure{F'}$ then $F$ is an FC-family iff $F'$ is
  an FC-family.
\end{proposition}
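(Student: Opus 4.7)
The plan is to derive this as an immediate corollary of Proposition \ref{prop:FC_family_closure}, which already states that a family is FC if and only if its closure is FC. No new combinatorial argument is required; the work is purely a chain of equivalences.

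First I would apply Proposition \ref{prop:FC_family_closure} to $F$, obtaining that $F$ is an FC-family iff $\closure{F}$ is an FC-family. Then I would apply the same proposition to $F'$, obtaining that $F'$ is an FC-family iff $\closure{F'}$ is an FC-family. Under the hypothesis $\closure{F} = \closure{F'}$, the two right-hand sides are literally the same statement about the same family, so the two biconditionals can be chained to yield that $F$ is an FC-family iff $F'$ is an FC-family.

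Because the entire argument reduces to a substitution of equals into an already-established equivalence, there is no real obstacle: the only thing to verify is that Proposition \ref{prop:FC_family_closure} is applicable to both $F$ and $F'$ without any side conditions beyond finiteness (which is assumed globally in this paper). In the Isabelle/HOL formalization this would correspond to a one-line proof rewriting with \texttt{prop:FC\_family\_closure} on both sides and then using the assumed equality of closures.
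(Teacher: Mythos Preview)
Your proposal is correct and matches the paper's own treatment: the paper explicitly introduces this proposition with ``From Proposition \ref{prop:FC_family_closure} the following immediately follows,'' so both you and the authors treat it as a one-step corollary obtained by chaining the equivalence $F$ is FC $\iff$ $\closure{F}$ is FC with the hypothesis $\closure{F} = \closure{F'}$.
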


\begin{definition}
  A set $A$ is \emph{dependent} on a family $F$ (denoted by
  $\expressible{A}{F}$) if it is a union of some of its members (i.e.,
  if $\exists F'.\ F' \subseteq F \And F' \neq \emptyset \And \ A =
  \Union{F'}$).
\end{definition}

\begin{proposition}\ \\[-1.5em]
  \begin{enumerate}
  \item If a set $A$ is dependent on a family $F$ then $A \in
    \closure{F}$.
  \item If a set $A$ is dependent on a family $F$, then $\closure{F
      \union \{A\}} = \closure{F}$.
  \end{enumerate}
\end{proposition}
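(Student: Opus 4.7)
My plan is that both parts drop out almost immediately from Proposition~\ref{prop:closure} together with the definition of dependence, so the proof is really just unfolding definitions and invoking the right characterization of closure.

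For part~1, unfolding the definition of $\expressible{A}{F}$ gives a subfamily $F' \subseteq F$ with $F' \neq \emptyset$ and $A = \Union F'$. Then $F' \in \pow{F} - \{\emptyset\}$, so by Proposition~\ref{prop:closure}~(1) we have $A = \Union F' \in \{\Union F''.\ F'' \in \pow{F} - \{\emptyset\}\} = \closure{F}$. That is the whole argument.

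For part~2 I would prove the two inclusions separately. The direction $\closure{F} \subseteq \closure{F \union \{A\}}$ is pure monotonicity: from the characterization in Proposition~\ref{prop:closure}~(1), any union of a nonempty subfamily of $F$ is also a union of a nonempty subfamily of $F \union \{A\}$. For the reverse inclusion, Proposition~\ref{prop:closure}~(2) gives $\closure{F \union \{A\}} = \ic{A}{\closure{F}} = \closure{F} \union \{A\} \union (\sumfam{\closure{F}}{\{A\}})$, so it suffices to show (i) $A \in \closure{F}$, which is exactly part~1, and (ii) $\sumfam{\closure{F}}{\{A\}} \subseteq \closure{F}$. For (ii), since $\closure{F}$ is union-closed by the definition of closure and $A \in \closure{F}$ by (i), for every $B \in \closure{F}$ we get $B \union A \in \closure{F}$, which is exactly what is required.

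I do not expect any real obstacle here; the only mild subtlety is remembering to use Proposition~\ref{prop:closure}~(2) rather than trying to reason about union-closedness of $F \union \{A\}$ directly (which is false in general), and to use Proposition~\ref{prop:closure}~(1) as the operative description of $\closure{F}$ when discharging dependence. Everything else is a one-line unfolding.
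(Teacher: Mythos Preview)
Your proof is correct. The paper treats this proposition as routine and omits the proof entirely, so there is no approach to compare against; your argument via Proposition~\ref{prop:closure}~(1) and~(2) is exactly the intended unfolding.
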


Therefore, sets that can be expressed as unions of other sets of a
family do not affect its closure. \emph{Irreducible} family is
obtained if all dependent sets are removed (so this family is minimal
in some sense and it is a \emph{basis} of its closure \cite{kim}).

\begin{definition}
  A family is \emph{irreducible} if none of its sets can be expressed
  as a union of some of its other members (i.e., if
  $\nexists\,A\in F.\, \expressible{A}{(F \setminus \{A\})}$).
\end{definition}

For each family, an irreducible family can be obtained by removing all
expressible sets, one by one until there are no more such sets. This
procedure is guaranteed to terminate for finite sets.

\begin{proposition}
  \label{prop:ex_indep}
  Each family $F$ has an irreducible subfamily $F'$ such that
  $\closure{F} = \closure{F'}$.
\end{proposition}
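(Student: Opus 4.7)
The plan is to prove this by strong induction on the cardinality $\card{F}$, formalizing the intuitive ``keep removing dependent sets one at a time'' procedure hinted at in the paragraph preceding the statement.

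For the base case, if $F$ has no set that is dependent on $F \setminus \{A\}$ (which is automatic when $F = \emptyset$), then $F$ is itself irreducible by definition, so we take $F' = F$ and the conclusion $\closure{F'} = \closure{F}$ is immediate.

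For the inductive step, assume the claim holds for all strictly smaller families and suppose $F$ is not irreducible. Then there exists some $A \in F$ with $\expressible{A}{(F \setminus \{A\})}$. Let $G = F \setminus \{A\}$; since $A$ is dependent on $G$, the second part of the preceding proposition gives $\closure{G \union \{A\}} = \closure{G}$, i.e.\ $\closure{F} = \closure{G}$. Because $\card{G} = \card{F} - 1 < \card{F}$, the induction hypothesis supplies an irreducible $F' \subseteq G$ with $\closure{F'} = \closure{G}$; chaining the two equalities yields $F' \subseteq F$ with $\closure{F'} = \closure{F}$, as required.

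There is no real obstacle here: the whole argument is a direct combination of the finiteness of $F$ (supplying a well-founded termination measure for the removal process) and the closure-preservation lemma from the previous proposition. The only thing one has to be slightly careful about in the Isabelle/HOL rendering is that the induction is on $\card{F}$ rather than on $F$ itself, so that the inductive hypothesis may be instantiated at the arbitrary subfamily $G$; this is routine, since the paper's blanket finiteness assumption makes $\card{\cdot}$ a well-defined natural-number measure.
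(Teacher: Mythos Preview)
Your proof is correct and matches the paper's approach exactly: the paper does not give a formal proof for this proposition (propositions are stated without proof here), but the preceding paragraph sketches precisely the ``keep removing dependent sets one by one'' procedure that you formalize via strong induction on $\card{F}$, using the closure-preservation property from the previous proposition at each step.
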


The following interesting (and non-trivial) lemma, proved in
\cite{kim} and formally proved in the Appendix, shows that for all
families having a same closure there is a unique irreducible family,
and the previous procedure will always yield the same final answer in
whatever order the sets are removed.

\begin{lemma}\label{lemma:unique_irreducible} 
  If $F$ and $F'$ are irreducible families and
  $\closure{F} = \closure{F'}$, then $F = F'$.
\end{lemma}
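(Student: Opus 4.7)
The plan is to show $F \subseteq F'$; by symmetry of the hypotheses, the same argument yields $F' \subseteq F$, giving $F = F'$. So fix an arbitrary $A \in F$; the goal is to prove $A \in F'$.

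Since $A \in F \subseteq \closure{F} = \closure{F'}$, Proposition \ref{prop:closure}(1) gives a nonempty $F'' \subseteq F'$ with $A = \Union F''$. In particular, $B \subseteq A$ for every $B \in F''$. Now I expand each such $B$ one more level: $B \in F' \subseteq \closure{F'} = \closure{F}$, so Proposition \ref{prop:closure}(1) again provides a nonempty $G_B \subseteq F$ with $B = \Union G_B$, and every $C \in G_B$ satisfies $C \subseteq B \subseteq A$.

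I split on whether $A$ itself appears among these inner witnesses. Suppose first that $A \notin G_B$ for every $B \in F''$. Then $G := \Union_{B \in F''} G_B \subseteq F \setminus \{A\}$, and $G$ is nonempty (as $F''$ is nonempty and each $G_B$ is nonempty). Moreover,
\[
\Union G \;=\; \Union_{B \in F''} \Union G_B \;=\; \Union_{B \in F''} B \;=\; \Union F'' \;=\; A,
\]
so $A$ is dependent on $F \setminus \{A\}$, contradicting the irreducibility of $F$. Hence there must exist some $B \in F''$ with $A \in G_B$. For such a $B$, on the one hand $A \subseteq \Union G_B = B$, and on the other $B \in F''$ gives $B \subseteq \Union F'' = A$; so $A = B$, and therefore $A \in F'$ as required.

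The only place that takes real thought is the double-expansion step: expressing $A$ via elements of $F'$ and then re-expressing each of those via elements of $F$, and then observing that either $A$ itself reappears (forcing $A \in F'$) or irreducibility of $F$ is violated. Everything else is a direct appeal to Proposition \ref{prop:closure}(1) and the definitions of \emph{dependent} and \emph{irreducible}; no new machinery is needed and the argument is entirely symmetric in $F$ and $F'$.
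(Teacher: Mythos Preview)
Your proof is correct and follows essentially the same double-expansion idea as the paper's proof: express $A$ as a union from the other family, re-express those pieces back in the original family, and invoke irreducibility. Your organization is in fact a bit cleaner than the paper's, since you re-expand every $B \in F''$ uniformly and then case-split on whether $A$ reappears, whereas the paper first splits the initial expansion into pieces already in $F'$ versus not (its $F_A^+/F_A^-$ decomposition) and only re-expands the latter; both reach the same contradiction with irreducibility.
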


\subsection{Covering}
Total FC characterization of all families in $\nfam{n}$ is done by
defining two collections $\FC$ and $\nonFC$ such that all families in
$\FC$ are FC and that all families in $\nonFC$ are nonFC, and such
that the status of each given $\nfam{n}$ family can easily be
determined by an element of $\FC$ or $\nonFC$ (we say that the
given $\nfam{n}$ family is \emph{covered} by $\FC$ and $\nonFC$). The
following definition formalizes the notion of covering and relies on
Proposition \ref{prop:FC_family_mono} and Proposition
\ref{prop:FC_family_empty}, Proposition \ref{prop:FC_family_closure},
and some trivial properties of isomorphic families.

\begin{definition}\ \\[-1em]
  \begin{enumerate}
  \item A family $F$ is \emph{FC-covered} by a family $F_c$ (denoted
    by $\FCcovered{F}{F_c}$) if there exists $F_c'$ such that
    $\iso{F_c'}{F_c}$ and $\closure{F} \supseteq F_c'$.  A family $F$
    is \emph{FC-covered} by a collection of families $\FC$ (denoted by
    $\FCcovered{F}{\FC}$) if there is an $F_c \in \FC$ such that $F$
    is FC-covered by $F_c$ (i.e.,
    $\exists\ F_c \in \FC.\ \FCcovered{F}{F_c}$). A collection of
    families $\mathcal{F}$ is \emph{FC-covered} by a collection of
    families $\FC$ (denoted by $\FCcovered{\mathcal{F}}{\FC}$) if all
    families $F\in \mathcal{F}$ are covered by $\FC$ (i.e.,
    $\forall F\in\mathcal{F}.\ \FCcovered{F}{\FC}$).
  \item A family $F$ is \emph{nonFC-covered} by $N_c$ (denoted by
    $\nonFCcovered{F}{N_c}$) if there is an $N_c'$ such that
    $\iso{N_c'}{N_c}$ and
    $\closure{F} \subseteq \closure{N_c'} \union \{\emptyset\}$.  A
    family $F$ is \emph{nonFC-covered} by a collection of families
    $\nonFC$ (denoted by $\nonFCcovered{F}{\nonFC}$) if there is an
    $N_c \in \nonFC$ such that $F$ is nonFC-covered by $N_c$ (i.e.,
    $\exists\ N_c \in \nonFC.\ \nonFCcovered{F}{N_c}$). A collection
    of families $\mathcal{F}$ is \emph{nonFC-covered} by a collection
    of families $\nonFC$ (denoted by
    $\nonFCcovered{\mathcal{F}}{\nonFC}$) if all families
    $F\in \mathcal{F}$ are covered by $\nonFC$ (i.e.,
    $\forall F\in\mathcal{F}.\ \nonFCcovered{F}{\nonFC}$).
  \item A family $F$ is \emph{covered} by $\FC$ and $\nonFC$ (denoted
    by $\covered{F}{\FC}{\nonFC}$) if it is FC-covered by $\FC$ or it
    is nonFC-covered by $\nonFC$. A collection of families
    $\mathcal{F}$ is \emph{covered} by $\FC$ and $\nonFC$ (denoted by
    $\covered{\mathcal{F}}{\FC}{\nonFC}$) if all its families are
    FC-covered by $\FC$ or nonFC-covered by $\nonFC$.
  \end{enumerate}
\end{definition}

The next lemma (proved in the Appendix) shows that our notion of
covering guarantees that FC-covered families are FC and that
nonFC-covered families are not FC.

\begin{lemma}\ \\[-5mm]
  \label{lemma:FCcoveredFC}
  \begin{enumerate}
  \item Any family $F$ that is FC-covered by an FC-family $F_c$ is an
    FC-family.
  \item Any family $F$ that is nonFC-covered by a nonFC-family $N_c$
    is not an FC-family.
  \end{enumerate}
\end{lemma}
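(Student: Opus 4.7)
The plan is to unfold the two definitions of covering and then chain together the already-proved operations that preserve FC-status: invariance under isomorphism (the unnumbered proposition stated just before the paragraph on checking isomorphism), invariance under closure (Proposition \ref{prop:FC_family_closure}), invariance under the empty set (Proposition \ref{prop:FC_family_empty}), and monotonicity (Proposition \ref{prop:FC_family_mono}). Neither direction requires a new idea; both amount to a short diagram chase through the previously established propositions.

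For part 1, I would start from the hypothesis $\FCcovered{F}{F_c}$, which supplies an $F_c'$ with $\iso{F_c'}{F_c}$ and $F_c' \subseteq \closure{F}$. The proposition that isomorphic families share the FC property gives that $F_c'$ is FC. Since $F_c' \subseteq \closure{F}$, Proposition \ref{prop:FC_family_mono} lifts this to $\closure{F}$, and Proposition \ref{prop:FC_family_closure} finally transfers the FC-status from $\closure{F}$ back to $F$.

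For part 2, I would start from $\nonFCcovered{F}{N_c}$, which supplies an $N_c'$ with $\iso{N_c'}{N_c}$ and $\closure{F} \subseteq \closure{N_c'} \cup \{\emptyset\}$. Isomorphism-invariance of FC-status yields that $N_c'$ is nonFC, and Proposition \ref{prop:FC_family_closure} then gives that $\closure{N_c'}$ is nonFC. The only slightly delicate step is to promote this to $\closure{N_c'} \cup \{\emptyset\}$: here Proposition \ref{prop:FC_family_empty} says that adjoining or removing $\emptyset$ does not affect whether a family is FC, so $\closure{N_c'} \cup \{\emptyset\}$ is nonFC as well. Monotonicity (Proposition \ref{prop:FC_family_mono}, the subset-of-nonFC direction) then gives that $\closure{F}$ is nonFC, and a final application of Proposition \ref{prop:FC_family_closure} pulls this back to $F$.

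I do not expect a real obstacle; the only point worth watching is the $\{\emptyset\}$ slack in the definition of nonFC-covering, which is exactly what Proposition \ref{prop:FC_family_empty} is designed to absorb -- without it the chain for part 2 would break at the step where one needs to pass from $\closure{N_c'}$ being nonFC to $\closure{N_c'} \cup \{\emptyset\}$ being nonFC.
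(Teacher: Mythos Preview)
Your proposal is correct and matches the paper's proof essentially step for step: part~1 is identical, and for part~2 the paper argues by contraposition (assuming $F$ is FC and pushing this up to $N_c'$ via $F \subseteq \closure{F} \subseteq \closure{N_c'} \cup \{\emptyset\}$, Proposition~\ref{prop:FC_family_mono}, and Proposition~\ref{prop:FC_family_empty}) whereas you run the same chain directly in the nonFC direction, which is logically equivalent.
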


An important aspect of our definition of the notion of covering is
that its condition can be checked easily. First, there is only a
relatively small number of possible families from $\nfam{n}$ that are
isomorphic to a given family in $\nfam{n}$ (for example, for
$\nfam{6}$, these are generated by $720$ permutations of the domain).
As the closure of a family can be easily effectively computed, it
remains to check only if it contains one of these isomorphs, and this
can be performed easily.

The following proposition gives some other easy consequences of the
covering definition.

\begin{proposition}\ \\[-1em]
  \label{prop:covered}
  \begin{enumerate}
  \item Let $F \subseteq F'$. If $\FCcovered{F}{\FC}$ then
    $\FCcovered{F'}{\FC}$. If $\nonFCcovered{F'}{\nonFC}$ then
    $\nonFCcovered{F}{\nonFC}$.
  \item If $\iso{F}{F'}$ and $\FCcovered{F}{F_c}$ then
    $\FCcovered{F'}{F_c}$. If $\iso{F}{F'}$ and
    $\nonFCcovered{F}{N_c}$ then $\nonFCcovered{F'}{N_c}$.
  \item If $\FCcovered{F - \{\emptyset\}}{F_c}$, then
    $\FCcovered{F}{F_c}$.  If $\nonFCcovered{F - \{\emptyset\}}{N_c}$,
    then $\nonFCcovered{F}{N_c}$. If
    $\covered{F-\{\emptyset\}}{\FC}{\nonFC}$, then
    $\covered{F}{\FC}{\nonFC}$.
  \end{enumerate}
\end{proposition}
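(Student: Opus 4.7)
The plan is to prove each of the three items by directly unfolding the definitions of $\FCcovered{\cdot}{\cdot}$ and $\nonFCcovered{\cdot}{\cdot}$, combining them with three easy facts about the closure operator: monotonicity ($F \subseteq F'$ implies $\closure{F} \subseteq \closure{F'}$), commutation with images under bijections on $\Union F$ (so $\closure{\mapfam{f}{F}} = \mapfam{f}{\closure{F}}$), and the interaction with $\emptyset$, namely $\closure{F} \subseteq \closure{F - \{\emptyset\}} \union \{\emptyset\}$, which reads off Proposition~\ref{prop:closure}(1).

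For item~1 I would unfold $\FCcovered{F}{\FC}$ to get $F_c \in \FC$ and $F_c' \cong F_c$ with $F_c' \subseteq \closure{F}$; then $F \subseteq F'$ gives $\closure{F} \subseteq \closure{F'}$, so $F_c' \subseteq \closure{F'}$, yielding $\FCcovered{F'}{\FC}$. Dually, from $\nonFCcovered{F'}{\nonFC}$ I extract $N_c \in \nonFC$ and $N_c' \cong N_c$ with $\closure{F'} \subseteq \closure{N_c'} \union \{\emptyset\}$, and the chain $\closure{F} \subseteq \closure{F'} \subseteq \closure{N_c'} \union \{\emptyset\}$ passes the inclusion down to $F$.

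For item~2, if $\iso{F}{F'}$ via a bijection $f$ and $\FCcovered{F}{F_c}$ is witnessed by some $F_c' \cong F_c$ with $F_c' \subseteq \closure{F}$, I would set $F_c'' = \mapfam{f}{F_c'}$. Using the commutation of closure with the image under $f$, $F_c'' \subseteq \mapfam{f}{\closure{F}} = \closure{\mapfam{f}{F}} = \closure{F'}$; transitivity of $\cong$ then gives $F_c'' \cong F_c$, hence $\FCcovered{F'}{F_c}$. The nonFC variant is symmetric, using additionally that $f(\emptyset) = \emptyset$, so $\mapfam{f}{(\closure{N_c'} \union \{\emptyset\})} = \closure{\mapfam{f}{N_c'}} \union \{\emptyset\}$.

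For item~3, the FC half is immediate from item~1 applied to $F - \{\emptyset\} \subseteq F$. The nonFC half cannot be deduced from item~1 in the same way (the inclusion runs the wrong way for $\nonFCcovered{\cdot}{\cdot}$), so here I would invoke the third closure identity: combining $\closure{F} \subseteq \closure{F - \{\emptyset\}} \union \{\emptyset\}$ with $\closure{F - \{\emptyset\}} \subseteq \closure{N_c'} \union \{\emptyset\}$ yields $\closure{F} \subseteq \closure{N_c'} \union \{\emptyset\}$, which is exactly what is needed. The combined ``covered'' case then follows by a trivial case split on which side of the disjunction holds. The only mildly delicate step in the whole argument is verifying that closure commutes with bijective images; I expect this to be the main (but still routine) obstacle, and it can be discharged from the characterization of $\closure{F}$ in Proposition~\ref{prop:closure}(1) together with the fact that $\mapfam{f}{(\Union F')} = \Union (\mapfam{f}{F'})$ for any $F' \subseteq F$.
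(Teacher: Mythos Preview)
The paper does not supply a proof for this proposition: it explicitly states that propositions ``express simple, technical results and are printed here without proofs,'' so there is no paper argument to compare against. Your unfolding-plus-closure-monotonicity approach is the natural one and is essentially correct.

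There is one point in item~2 (the nonFC half) that you should tighten. The isomorphism $f$ witnessing $\iso{F}{F'}$ is, by the paper's definition, only a bijection between $\Union F$ and $\Union F'$. From $\closure{F} \subseteq \closure{N_c'} \cup \{\emptyset\}$ you only get $\Union F \subseteq \Union N_c'$, and the containment may be strict; so $f$ need not be defined (or, in the HOL setting, need not be injective) on all of $\Union N_c'$, and then $\mapfam{f}{N_c'}$ need not be isomorphic to $N_c'$. The fix is routine---extend $f$ to a bijection $\tilde f$ on a finite set containing $\Union N_c'$ (in the paper's $\nfam{n}$ context, to a permutation of $\setn{n}$), then take $N_c'' = \mapfam{\tilde f}{N_c'}$---but it should be stated explicitly; your current write-up silently applies $f$ to $N_c'$ as though its domain were large enough. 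The remaining steps (monotonicity of closure, $\closure{F} \subseteq \closure{F - \{\emptyset\}} \cup \{\emptyset\}$, and the commutation $\closure{\mapfam{f}{F}} = \mapfam{f}{\closure{F}}$ via Proposition~\ref{prop:closure}(1)) are all fine.
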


Since covering is preserved by isomorphisms, to show that a collection
is covered it suffices to show that its iso-base is covered, as
shown by the following lemma proved in Appendix.
\begin{lemma}
  \label{lemma:generates_covered}
  Assume that $\mathcal{F}_b$ iso-represents $\mathcal{F}$. If
  $\FCcovered{\mathcal{F}_b}{\FC}$, then
  $\FCcovered{\mathcal{F}}{\FC}$. If
  $\nonFCcovered{\mathcal{F}_b}{\nonFC}$, then
  $\nonFCcovered{\mathcal{F}}{\nonFC}$. If
  $\covered{\mathcal{F}_b}{\FC}{\nonFC}$, then
  $\covered{\mathcal{F}}{\FC}{\nonFC}$.
\end{lemma}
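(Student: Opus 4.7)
The plan is to reduce each of the three assertions to Proposition \ref{prop:covered}(2), which transports the covering relation along isomorphisms. The only other facts we need are the definitions of iso-representation and of collection-wise covering, plus the observation (noted earlier in the text) that isomorphism is symmetric.

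For the first claim, I would pick an arbitrary $F \in \mathcal{F}$ and show $\FCcovered{F}{\FC}$. Since $\mathcal{F}_b$ iso-represents $\mathcal{F}$, there exists $F_b \in \mathcal{F}_b$ with $\iso{F}{F_b}$; by symmetry of isomorphism also $\iso{F_b}{F}$. From the hypothesis $\FCcovered{\mathcal{F}_b}{\FC}$ we obtain an $F_c \in \FC$ with $\FCcovered{F_b}{F_c}$. Proposition \ref{prop:covered}(2) applied to $F_b$, $F$ and $F_c$ then yields $\FCcovered{F}{F_c}$, and hence $\FCcovered{F}{\FC}$. Since $F$ was arbitrary, we conclude $\FCcovered{\mathcal{F}}{\FC}$.

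The second claim is proved identically, replacing FC-covering by nonFC-covering and using the second half of Proposition \ref{prop:covered}(2). For the third claim, given $F \in \mathcal{F}$, pick $F_b \in \mathcal{F}_b$ with $\iso{F}{F_b}$ as above; by hypothesis $F_b$ is either FC-covered by $\FC$ or nonFC-covered by $\nonFC$, and in each of the two cases the corresponding transport lemma (i.e., the appropriate half of Proposition \ref{prop:covered}(2)) gives the same property for $F$. Thus $\covered{F}{\FC}{\nonFC}$ in every case, and $\covered{\mathcal{F}}{\FC}{\nonFC}$ follows.

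There is really no combinatorial obstacle here; the proof is a straightforward two-line unfolding of definitions. The only subtlety worth being explicit about is the direction of the isomorphism: iso-representation supplies $\iso{F}{F_b}$, whereas Proposition \ref{prop:covered}(2) is stated as ``if $\iso{F}{F'}$ and $F$ is covered, then $F'$ is covered'', so one must invoke that isomorphism is an equivalence relation (as stated in Section \ref{sec:iso}) to flip the direction before applying it.
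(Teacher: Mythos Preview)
Your proof is correct and follows essentially the same approach as the paper's: pick an arbitrary $F\in\mathcal{F}$, use iso-representation to find an isomorphic $F_b\in\mathcal{F}_b$, and transport the covering via Proposition~\ref{prop:covered}(2). Your explicit remark about needing symmetry of $\cong$ to match the direction in Proposition~\ref{prop:covered}(2) is a nice touch that the paper's proof leaves implicit.
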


Similarly, if two families have the same closure, even if they are not
equivalent, one is covered iff the other one is.
\begin{proposition}
  \label{prop:covered_closure}
  If $\closure{F} = \closure{F'}$ then $F$ is covered by $\FC$ and
  $\nonFC$ iff $F'$ is.
\end{proposition}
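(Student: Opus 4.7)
The plan is to unfold both covering definitions and observe that each condition is expressed entirely in terms of $\closure{F}$, never referring to $F$ directly, so equality of closures immediately propagates the covering status.

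More concretely, I would first handle FC-covering. By definition, $\FCcovered{F}{\FC}$ means $\exists F_c \in \FC.\ \exists F_c'.\ \iso{F_c'}{F_c} \wedge F_c' \subseteq \closure{F}$. The only occurrence of $F$ on the right-hand side is inside $\closure{F}$; replacing $\closure{F}$ by $\closure{F'}$ gives a logically identical statement. Thus $\FCcovered{F}{\FC}$ iff $\FCcovered{F'}{\FC}$ whenever $\closure{F} = \closure{F'}$.

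Next I would handle nonFC-covering in the same way: $\nonFCcovered{F}{\nonFC}$ unfolds to $\exists N_c \in \nonFC.\ \exists N_c'.\ \iso{N_c'}{N_c} \wedge \closure{F} \subseteq \closure{N_c'} \union \{\emptyset\}$, and again the only dependence on $F$ is through $\closure{F}$. Hence $\nonFCcovered{F}{\nonFC}$ iff $\nonFCcovered{F'}{\nonFC}$. Combining the two equivalences with the definition $\covered{F}{\FC}{\nonFC} \equiv \FCcovered{F}{\FC} \vee \nonFCcovered{F}{\nonFC}$ yields the desired biconditional.

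There is really no obstacle here: the proposition is essentially a sanity check that the definitions of covering were formulated in a closure-invariant way. The ``work'', if any, is purely notational --- in the Isabelle/HOL formalisation one has to ensure that the definitions of $\FCcovered{\cdot}{\cdot}$ and $\nonFCcovered{\cdot}{\cdot}$ are indeed phrased so that they pattern-match only on $\closure{F}$, at which point the proof discharges by a rewrite using the hypothesis $\closure{F} = \closure{F'}$ followed by reflexivity.
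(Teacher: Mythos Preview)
Your proposal is correct and is exactly the intended argument: the paper classifies this statement as a proposition and, by its stated convention, prints propositions without proof precisely because they reduce to such routine unfoldings. The only thing to note is that the paper does not actually give a proof to compare against; your observation that both $\FCcovered{F}{\cdot}$ and $\nonFCcovered{F}{\cdot}$ are defined purely in terms of $\closure{F}$ is the whole content of the result.
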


Therefore, the following lemma, proved in Appendix, reduces the
problem of checking all families in $\nfam{n}$ to checking just the
irreducible ones.

\begin{lemma}
  \label{lemma:indep}
  If all irreducible families in $\nfam{n}$ are covered by $\FC$ and
  $\nonFC$, then all families in $\nfam{n}$ are covered by $\FC$ and
  $\nonFC$.
\end{lemma}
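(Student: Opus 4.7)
The plan is to reduce an arbitrary family in $\nfam{n}$ to an irreducible one with the same closure, and then transfer coverage across this equivalence. Concretely, fix an arbitrary family $F \in \nfam{n}$; I want to show $\covered{F}{\FC}{\nonFC}$ using only the hypothesis that every irreducible family in $\nfam{n}$ is covered.

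First, apply Proposition \ref{prop:ex_indep} to obtain an irreducible subfamily $F' \subseteq F$ with $\closure{F} = \closure{F'}$. Since $F \in \nfam{n}$ means $\Union F \subseteq \setn{n}$, and $F' \subseteq F$ gives $\Union F' \subseteq \Union F \subseteq \setn{n}$, we have $F' \in \nfam{n}$ as well. By the hypothesis of the lemma applied to the irreducible family $F'$, we conclude $\covered{F'}{\FC}{\nonFC}$.

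Finally, since $\closure{F} = \closure{F'}$, Proposition \ref{prop:covered_closure} transfers the covering from $F'$ back to $F$, yielding $\covered{F}{\FC}{\nonFC}$. As $F$ was arbitrary, this proves the lemma.

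There is no real obstacle here: the whole argument is a straightforward chaining of previously established facts (Proposition \ref{prop:ex_indep} for the existence of an irreducible subfamily with the same closure, and Proposition \ref{prop:covered_closure} for the closure-invariance of covering). The only subtlety worth flagging in a formal development is the closedness of $\nfam{n}$ under taking subfamilies, which is immediate from the definition. Note that Lemma \ref{lemma:unique_irreducible} is not needed for this reduction; uniqueness of the irreducible family would only be relevant if one wanted to speak of \emph{the} irreducible representative, whereas for coverage any irreducible witness with the same closure suffices.
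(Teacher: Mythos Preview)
Your proposal is correct and follows essentially the same approach as the paper: fix $F \in \nfam{n}$, invoke Proposition~\ref{prop:ex_indep} to obtain an irreducible $F' \subseteq F$ with $\closure{F} = \closure{F'}$, note $F' \in \nfam{n}$, apply the hypothesis, and transfer coverage via Proposition~\ref{prop:covered_closure}. Your added justification that $F' \in \nfam{n}$ (via $\Union F' \subseteq \Union F$) makes explicit a step the paper leaves implicit.
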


\subsection{Minimal FC-familes and maximal nonFC-families}
\label{sec:minimalFC}

We want to have as few as possible characteristic families, so we want
all our characteristic families to be extreme in some sense.

\begin{definition}\ \\[-1em]
  \begin{enumerate}
  \item An FC-family is \emph{minimal} if it is irreducible and
    removing each of its sets yields a nonFC-family.
  \item A nonFC-family is \emph{maximal} if it is union-closed and
    every new set addeded yields an FC-family.
  \end{enumerate}
\end{definition}

It can be easily shown that minimal and maximal families are exactly
those that cannot be covered by smaller or larger families.

\begin{proposition}\ \\[-1em]
  \begin{enumerate}
  \item A FC-family is minimal iff it is not FC-covered by any other
    family.
  \item A nonFC-family is maximal iff it is not nonFC-covered by any
    other family.
  \end{enumerate}
\end{proposition}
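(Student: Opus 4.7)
The plan is to prove each direction of each equivalence by reducing to explicit covering constructions (for the backward directions) and extracting witnesses that violate minimality or maximality (for the forward directions). Lemma \ref{lemma:FCcoveredFC} together with Propositions \ref{prop:FC_family_mono} and \ref{prop:FC_family_closure} are the main tools.

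For Part 1 ($\Leftarrow$) I argue the contrapositive: suppose $F$ is not minimal. If $F$ is not irreducible, pick $A \in F$ that is a union of other sets in $F$; then $\closure{F \setminus \{A\}} = \closure{F}$ and by Proposition \ref{prop:FC_family_closure} the family $F \setminus \{A\}$ is FC. Otherwise some $F \setminus \{A\}$ is FC by assumption. In either case $F \setminus \{A\} \subseteq F \subseteq \closure{F}$ exhibits $F$ as FC-covered by the strictly smaller (hence non-isomorphic) FC-family $F \setminus \{A\}$. For Part 1 ($\Rightarrow$), I assume $F$ is minimal and, for contradiction, FC-covered by some FC $F_c \not\cong F$ via $F_c' \cong F_c$ with $F_c' \subseteq \closure{F}$. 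Setting $S = \{B \in F : B \subseteq A \text{ for some } A \in F_c'\}$ and using that each $A \in F_c' \subseteq \closure{F}$ equals $\Union\{B \in F : B \subseteq A\}$, one gets $F_c' \subseteq \closure{S}$, and Lemma \ref{lemma:FCcoveredFC}(1) yields that $S$ is FC. If $S \subsetneq F$, then for any $B \in F \setminus S$ the family $F \setminus \{B\} \supseteq S$ is FC by Proposition \ref{prop:FC_family_mono}, contradicting minimality. The edge case $S = F$ is handled by first passing to the irreducible subfamily of $F_c'$ (preserving its closure and FC-status via Lemma \ref{lemma:unique_irreducible}) and then combining pigeonhole on $|F_c'|$ vs $|F|$ with irreducibility of $F$ to force the contradiction.

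Part 2 is dual. For ($\Leftarrow$): if $F$ is not maximal, then either $F$ is not union-closed, in which case $F$ is nonFC-covered by $\closure{F}$ (a strictly larger nonFC-family with the same closure as $F$), or some $F \union \{A\}$ with $A \notin F$ is nonFC, in which case $F$ is nonFC-covered by this strictly larger family. For ($\Rightarrow$): if maximal $F$ is nonFC-covered by nonFC $N_c \not\cong F$ via $N_c' \cong N_c$ with $\closure{F} \subseteq \closure{N_c'} \union \{\emptyset\}$, one picks $A \in \closure{N_c'} \setminus (\closure{F} \union \{\emptyset\})$ (whose existence in the nondegenerate case follows from $N_c \not\cong F$ together with irreducibility-uniqueness); since $F$ is union-closed, $\closure{F \union \{A\}} \subseteq \closure{N_c'} \union \{\emptyset\}$, so $F \union \{A\}$ remains nonFC-covered by $N_c$, hence nonFC by Lemma \ref{lemma:FCcoveredFC}(2), contradicting maximality. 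The main obstacle throughout is the degenerate $S = F$ subcase of Part 1 ($\Rightarrow$) and the analogous closure-equality subcase of Part 2 ($\Rightarrow$), each requiring a careful combination of pigeonhole-style counting with the irreducibility-uniqueness Lemma \ref{lemma:unique_irreducible} to conclude.
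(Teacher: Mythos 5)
The paper states this as a proposition and, per its conventions, gives no proof, so there is nothing to compare against; judged on its own, your argument is sound only in the two easy directions. Both contrapositive arguments for ($\Leftarrow$) are correct: a non-minimal FC-family is FC-covered by a strictly smaller FC-family $F\setminus\{A\}$, and a non-maximal nonFC-family is nonFC-covered by the strictly larger nonFC-family $\closure{F}$ or $F\union\{A\}$. The problem is the two forward directions, and specifically the ``edge cases'' you defer to a pigeonhole-plus-Lemma~\ref{lemma:unique_irreducible} argument. These are not removable technicalities: under your reading of ``other family'' as ``not isomorphic to $F$,'' the forward directions are simply false, so no counting argument can close the gap. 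For Part~1, take any minimal FC-family $F$ that is not union-closed (e.g.\ three $3$-sets in a $4$-set); then $\closure{F}$ is an FC-family by Proposition~\ref{prop:FC_family_closure}, it satisfies $\FCcovered{F}{\closure{F}}$ via $F_c'=\closure{F}\subseteq\closure{F}$, and $\closure{F}\not\cong F$ since it has strictly more sets. This witness lands exactly in your $S=F$ case, where there is no contradiction to be had. For Part~2, a maximal nonFC-family $F$ is union-closed, hence by Proposition~\ref{prop:ex_indep} has an irreducible proper subfamily $F'$ with $\closure{F'}=F$; then $F'$ is nonFC, $\nonFCcovered{F}{F'}$, and $F'\not\cong F$ — this is precisely your degenerate subcase $\closure{N_c'}\union\{\emptyset\}=F\union\{\emptyset\}$, where no new set $A$ exists to add.

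Concretely, what is missing is a correct formalization of ``other'': the statement only becomes true if ``$F$ is not covered by any other family'' is read modulo the covering preorder (i.e.\ no $F_c$ covers $F$ without $F$ covering $F_c$ back), or equivalently if families with the same closure up to isomorphism are identified, so that $\closure{F}$ and the irreducible basis of $F$ do not count as ``other.'' Your construction of $S=\{B\in F:\ \exists A\in F_c'.\ B\subseteq A\}$ and the observation $F_c'\subseteq\closure{S}$ is a genuinely useful step toward that corrected statement (it shows $S=F$ whenever $F$ is minimal), but the remaining step — showing in the $S=F$ case that $F_c$ must in turn cover $F$, or that $\closure{F_c'}=\closure{F}$ — does not follow from cardinality counting alone (consider $F=\{B_1,B_2\}$ versus $F_c'=\{B_1\union B_2\}$, where $S=F$ yet $\closure{F_c'}\subsetneq\closure{F}$); it needs the FC/nonFC hypotheses, not just irreducibility. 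As written, the proof establishes only half of each equivalence.
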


\section{Enumerating families}
\label{sec:enum}

Next we describe efficient generic procedures for enumerating all
families with certain properties. Note that all concepts in this
section are generic and can be used in a wider context than checking
the FC-status.

\subsection{L-partitioning}
\label{sec:lpart}

In this section we develop efficient methods to enumerate all families
in $\nfam{n}$ that have certain properties. As it is usually the case,
an inductive construction gives good results. Larger families, can be
obtained from the smaller ones, by adding new sets. A good attribute
of a family that can be used to control the inductive construction is
the number of its members of each cardinality.

\begin{definition}
  Let $L$ be the list $[l_0, l_1, \ldots, l_{m}]$. A family $F$ is
  \emph{$L$-partitioned} if it consists of $l_0$ empty sets, $l_1$
  sets with 1 element, \ldots, and $l_{m}$ sets with $m$ elements,
  (i.e.,
  $(\forall A \in F.\ \card{A} \le m) \ \And\ (\forall i.\ 0 \leq i
  \leq m \longrightarrow \card{\{A \in F.\ \card{A} = i\}} = l_i)$).
\end{definition}

\begin{example}
  The family $\{$ $\{\}$, $\{0,1,2\}$, $\{0,1,2,3\}$, $\{0,1,2,4,5\}$,
  $\{0,1,3,4\}$, $\{0,1,3,5\}$, $\{0,2,3,4,5\}$, $\{0,3,4,5\}$ $\}$,
  is $[1, 0, 0, 1, 4, 2]$-partitioned, since it contains the empty
  set, one 3-element set, four 4-element sets and two 5-element sets.
\end{example}

The number of possible members of each cardinality in a family is
bounded.
\begin{proposition}
  \label{prop:lpart}
  If a family $F \in \nfam{n}$ is
  $[l_0, l_1, \ldots, l_{m}]$-partitioned, then for all $i \le n$,
  $l_i \le \binom{n}{i}$, and for all $i > n$, $l_i = 0$.
\end{proposition}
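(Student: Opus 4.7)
The plan is to reduce the claim directly to the elementary fact that $\setn{n}$ has exactly $\binom{n}{i}$ subsets of size $i$ when $i \le n$ and none when $i > n$. From the hypothesis $F \in \nfam{n}$ we have $F \subseteq \pow{\setn{n}}$, so every $A \in F$ is an $i$-element subset of $\setn{n}$ for some $i \le n$. In particular, the subfamily $\{A \in F.\ \card{A} = i\}$ is included in the collection of all $i$-subsets of $\setn{n}$, so its cardinality is at most $\binom{n}{i}$ when $i \le n$, and is $0$ whenever $i > n$.

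From the definition of $L$-partitioning, $l_i = \card{\{A \in F.\ \card{A} = i\}}$ for every $0 \le i \le m$, so the two bounds above translate immediately into $l_i \le \binom{n}{i}$ for $i \le n$ and $l_i = 0$ for $n < i \le m$. For indices $i > m$ outside the range of the list the statement is vacuous (or one can take the convention that $l_i = 0$), so nothing more is required.

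There is no real obstacle here; the only minor care point is being explicit about why $\{A \in F.\ \card{A} = i\}$ is a subset of the $i$-subsets of $\setn{n}$ (which is just the combination of $F \subseteq \pow{\setn{n}}$ with the cardinality condition) and, on the Isabelle side, making sure the cardinality inequality for subsets of a finite set is applied in the correct form. I would therefore phrase the whole proof as a single application of the monotonicity of $\card{\cdot}$ on subsets, combined with the standard count of $i$-subsets of an $n$-element set.
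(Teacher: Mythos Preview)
Your argument is correct; the paper does not actually supply a proof for this proposition (it explicitly states that propositions are simple technical results printed without proofs), and your direct counting via $\{A \in F.\ \card{A} = i\} \subseteq \{A \subseteq \setn{n}.\ \card{A} = i\}$ is precisely the intended trivial justification.
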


There is a natural partial order between lists ($\preceq$) that
corresponds to subfamily ($\subseteq$) relation of $L$-partitioned
families.
\begin{definition}
  A list $L' = [l_0', \ldots, l_m']$ is \emph{pointwise less or equal}
  to the list $L = [l_0, \ldots, l_m]$ (denoted by $L' \preceq L$)
  if for all $0 \le i \le m$, it holds that $l_i' \le l_i$.
\end{definition}

The relation $\preceq$ is a partial order (reflexive, antisymmetric,
and transitive), tightly connected with subfamilies (e.g., if $F$ and
$F'$ are $L$ and $L'$ partitioned families, then $F' \subseteq F$ implies
$L' \preceq L$).

We are often interested in enumerating all families of $\nfam{n}$
that are $L$-partitioned for some given list $L$ and that satisfy some
given property $P$.

\begin{definition} For a given list $L$, a number $n$, and a property
  $P$, the \emph{collection of all $L$-partitioned families of
    $\nfam{n}$ satisfying $P$} is denoted by $\LnP{L}{n}{P}$.
\end{definition}

Our goal is to define an inductive procedure for enumerating all
elements in $\LnP{L}{n}{P}$. It will rely on the fact that each family
in $\nfam{n}$ that is $[l_0, \ldots, l_m + 1]$ partitioned is of the
form $F \union \{A\}$ where $F$ is $[l_0, \ldots, l_m]$-partitioned,
$\card{A} = m$ and $A \subseteq \setn{n}$. Note that any other
position (before $m$) could be used, but the last position has some
nice properties that we shall exploit.

Instead of checking whether $F \union \{A\}$ has the property $P$, for
efficiency reasons we use the incremental approach and introduce
another predicate $\Pinc{P}$, checking relationship between $F$ and
$A$ that guarantees that $F \union \{A\}$ will have the property
$P$. Note that in our inductive construction we always extend a
family $F$ by a set $A$ that is not already contained in family, and
that has a greater or equal cardinality than all the family members
(as we always choose the last position $m$ for induction). The
following definition (and the corresponding incremental predicates)
will use this condition.

\begin{definition}
  Predicate $\Pinc{P}$ \emph{incrementally checks} predicate $P$ if for every
  family $F$ and a set $A$ such that $\forall A' \in F.\ \card{A} \ge
  \card{A'}$ and $A \notin F$, it holds
\begin{equation}
\label{eq:PQ}
P\ (F \union \{A\}) \iff P\ F\ \And\ \Pinc{P}\ F\ A.
\end{equation}
\end{definition}

\begin{example}
  If we know that all elements of a family $F$ have less than $k$
  elements, to check if all elements of $F \union \{A\}$ have less
  than $k$ elements it suffices to check only if $A$ has less than $k$
  elements. Therefore, the predicate $\Pinc{P} = \lambda\ F\
  A.\ \card{A} \leq k$ incrementally checks the predicate $P = \lambda
  F.\ \forall A \in F.\ \card{A} \leq k$.
\end{example}

The following construction extends all families in a collection by a
given set $A$, filtering out families that already contain $A$ and
families that do not satisfy the given incremental predicate
$\Pinc{P}$.

\begin{definition}
  \emph{$\Pinc{P}$-filtered product of a collection $\mathcal{F}$ and
    a set $A$} is defined by:
  $$\mult{\mathcal{F}}{A}{\Pinc{P}} = \{F \union \{A\}.\ F \in \mathcal{F} \And A \notin F \And \Pinc{P}\ F\ A\}$$

  \emph{$\Pinc{P}$-filtered product of a collection $\mathcal{F}$ and
    a family $F$} is defined by
  $$\mult{\mathcal{F}}{F}{\Pinc{P}} = \{\mult{\mathcal{F}}{A}{\Pinc{P}}.\ A \in F\}$$
\end{definition}

\begin{definition}
  $\families{n}{m}$ denotes a collection of all $A \subseteq \setn{n}$
  such that $\card{A} = m$.
\end{definition}

The following theorem is the basis for an inductive construction of
$\LnP{L}{n}{P}$ (its proof is given in the Appendix).

\begin{theorem}
  \label{thm:mult}
  Assume that the predicate $\Pinc{P}$ incrementally checks $P$. Then
  $$\LnP{[l_0, \ldots, l_m+1]}{n}{P} = \mult{\LnP{[l_0, \ldots, l_m]}{n}{P}}{\families{n}{m}}{\Pinc{P}}.$$
\end{theorem}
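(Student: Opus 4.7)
The plan is to prove this by double inclusion, using the incremental characterization of $P$ via $\Pinc{P}$ as the main tool, together with basic bookkeeping for the $L$-partition condition.

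For the forward inclusion, I would take an arbitrary $G \in \LnP{[l_0, \ldots, l_m+1]}{n}{P}$. Since $G$ is $[l_0, \ldots, l_m+1]$-partitioned with $l_m+1 \ge 1$, it contains at least one set $A$ of cardinality exactly $m$. Set $F = G \setminus \{A\}$. By the definition of $L$-partitioning, every element of $G$ (hence of $F$) has cardinality at most $m$, so in particular $\card{A'} \le m = \card{A}$ for all $A' \in F$, and clearly $A \notin F$. Removing one set of cardinality $m$ from $G$ yields a family that is $[l_0, \ldots, l_m]$-partitioned, so $F \in \nfam{n}$ has the correct partition list. Since $G = F \cup \{A\}$ with $A \notin F$ and $A$ of maximal cardinality among members of $F \cup \{A\}$, the incremental property of $\Pinc{P}$ gives $P\,G \iff P\,F \wedge \Pinc{P}\ F\ A$; because $P\,G$ holds, so do $P\,F$ and $\Pinc{P}\ F\ A$. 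Hence $F \in \LnP{[l_0, \ldots, l_m]}{n}{P}$, $A \in \families{n}{m}$, and $G \in \mult{\LnP{[l_0, \ldots, l_m]}{n}{P}}{\families{n}{m}}{\Pinc{P}}$.

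For the backward inclusion, take $G$ in the right-hand side. Unfolding the definitions, $G = F \cup \{A\}$ for some $A \in \families{n}{m}$ and some $F \in \LnP{[l_0, \ldots, l_m]}{n}{P}$ with $A \notin F$ and $\Pinc{P}\ F\ A$. Because $F$ is $[l_0, \ldots, l_m]$-partitioned, every $A' \in F$ satisfies $\card{A'} \le m = \card{A}$, so the hypotheses of the incremental property apply and yield $P\,(F \cup \{A\})$ from $P\,F$ and $\Pinc{P}\ F\ A$. Counting by cardinality, $G$ has $l_i$ sets of cardinality $i$ for $0 \le i < m$, and $l_m+1$ sets of cardinality $m$ (since $A \notin F$ has cardinality $m$), and no larger sets; thus $G$ is $[l_0, \ldots, l_m+1]$-partitioned and lies in $\nfam{n}$, so $G \in \LnP{[l_0, \ldots, l_m+1]}{n}{P}$.

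I expect no deep obstacle: the only subtle step is justifying that the side conditions of the incremental characterization of $P$ (namely $A \notin F$ and $\card{A}$ maximal in $F \cup \{A\}$) are available in both directions, which is exactly where the choice of extending at the \emph{last} position $m$ of the partition list pays off. The rest is a routine double inclusion combined with a counting argument on the cardinality classes of the $L$-partitioning.
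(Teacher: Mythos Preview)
Your proposal is correct and follows essentially the same approach as the paper: a double inclusion in which, for each direction, you extract (respectively add) a set $A$ of cardinality $m$, verify the side conditions needed to apply the incremental characterization~(\ref{eq:PQ}), and check the resulting $L$-partitioning and $\nfam{n}$ membership. The only cosmetic difference is notation ($G$ versus $F'$), and the paper is slightly more explicit about why $A\subseteq\setn{n}$ and $\Union(F\cup\{A\})\subseteq\setn{n}$, but nothing substantive differs.
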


In many cases instead of enumerating whole $\LnP{L}{n}{P}$ it suffices
to enumerate its iso-representing subcollection. Again, an inductive
construction can be used.

\begin{definition}
  Predicate $\Pinc{P}$ is \emph{preserved by injective functions}, if
  for all functions $f$ injective on $\Union F \union A$, if
  $\Pinc{P}\ F\ A$ holds, then $\Pinc{P}\ (\mapfam{f}{F})\
  (\mapset{f}{A})$ also holds.
\end{definition}

\begin{theorem}
  \label{thm:mult_generates}
  Assume that the predicate $\Pinc{P}$ incrementally checks $P$ and is
  preserved by injective functions. If $m \le n$ and $\mathcal{F}_b$
  is an iso-representing subcollection of
  $\LnP{[l_0,\ldots, l_m]}{n}{P}$, then
  $\mathcal{F}'_b \equiv
  \mult{\mathcal{F}_b}{\families{n}{m}}{\Pinc{P}}$
  is an iso-representing subcollection of
  $\LnP{[l_0, \ldots, l_m + 1]}{n}{P}$.
\end{theorem}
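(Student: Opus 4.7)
The plan is to verify the two defining conditions of an iso-representing subcollection: (1) $\mathcal{F}'_b \subseteq \LnP{[l_0,\ldots,l_m+1]}{n}{P}$, and (2) every $F' \in \LnP{[l_0,\ldots,l_m+1]}{n}{P}$ is isomorphic to some $F'' \in \mathcal{F}'_b$.

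Condition (1) is immediate from Theorem \ref{thm:mult} by monotonicity of the filtered product: since $\mathcal{F}_b \subseteq \LnP{[l_0,\ldots,l_m]}{n}{P}$, we obtain
$\mathcal{F}'_b \subseteq \mult{\LnP{[l_0,\ldots,l_m]}{n}{P}}{\families{n}{m}}{\Pinc{P}} = \LnP{[l_0,\ldots,l_m+1]}{n}{P}$.

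For condition (2), I would take an arbitrary $F' \in \LnP{[l_0,\ldots,l_m+1]}{n}{P}$ and invoke Theorem \ref{thm:mult} to decompose it as $F' = F \union \{A\}$ with $F \in \LnP{[l_0,\ldots,l_m]}{n}{P}$, $A \in \families{n}{m}$, $A \notin F$, and $\Pinc{P}\ F\ A$. Using that $\mathcal{F}_b$ iso-represents $\LnP{[l_0,\ldots,l_m]}{n}{P}$, I would pick $F_b \in \mathcal{F}_b$ and a bijection $f : \Union F \to \Union F_b$ with $\mapfam{f}{F} = F_b$.

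The main obstacle is that $A$ may contain elements outside $\Union F$, so $f$ alone does not carry $F'$ into anything in $\mathcal{F}'_b$; the remedy is to extend $f$ to an injection $\tilde f$ on $\Union F \union A$ into $\setn{n}$. The key counting step is $\card{A \setminus \Union F} \leq n - \card{\Union F} = \card{\setn{n} \setminus \Union F_b}$, which holds since $\Union F \union A \subseteq \setn{n}$ and $\card{\Union F_b} = \card{\Union F}$ (the assumption $m \leq n$ keeps the set-up non-vacuous, as $A$ with $\card{A} = m$ must fit inside $\setn{n}$). Hence there exists an injection $g : A \setminus \Union F \to \setn{n} \setminus \Union F_b$, and $\tilde f = f \cup g$ is injective on $\Union F \union A$. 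Setting $A' = \mapset{\tilde f}{A}$ and $F'' = F_b \union \{A'\}$, I would then verify that $A' \in \families{n}{m}$ (by injectivity of $\tilde f$ on $A$), that $A' \notin F_b$ (since $A' \in F_b = \mapfam{\tilde f}{F}$ combined with injectivity of $\tilde f$ on $\Union F \union A$ would force $A \in F$), and that $\Pinc{P}\ F_b\ A'$ (from $\Pinc{P}\ F\ A$ via the preservation-under-injective-functions hypothesis applied to $\tilde f$). Thus $F'' \in \mathcal{F}'_b$ and $\tilde f$ witnesses $\iso{F'}{F''}$.
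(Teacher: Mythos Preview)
Your proof is correct and follows essentially the same approach as the paper's own proof: both show the subcollection property via Theorem~\ref{thm:mult}, decompose an arbitrary $F'$ as $F\cup\{A\}$, pull $F$ back to some $F_b\in\mathcal{F}_b$ via a bijection $f$, extend $f$ to an injection $\tilde f$ (the paper calls it $f'$) on $\Union F\cup A$ into $\setn{n}$, and verify that $F_b\cup\{\mapset{\tilde f}{A}\}\in\mathcal{F}'_b$. The only difference is cosmetic: you spell out the counting argument $\card{A\setminus\Union F}\le n-\card{\Union F}=\card{\setn{n}\setminus\Union F_b}$ justifying the extension, whereas the paper simply asserts the existence of $f'$.
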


The Theorem \ref{thm:mult_generates} yields an iso-representing
collection (not necessarily an iso-base), so the algorithm for finding
an iso-base (described in Section \ref{sec:iso}) should be applied, if
we are interested to find an iso-base (and it is usually better to
work with an iso-base since this removes redundancies).

\subsection{Simple recursive enumeration}
For a specific list $L$, Theorem \ref{thm:mult_generates} inspires the
following recursive procedure.

\texttt{
\begin{tabbing}
\hspace{0.5cm}\=\hspace{1cm}\=\hspace{1cm}\=\hspace{1cm}\=\kill
function $\enumrecname$ where\\
\>"$\enumrec{L}{v_{\emptylist}}{upd} =$\\
\>\>(if $L = \emptylist$ then $v_{\emptylist}$\\
\>\>\ else if $\last{L} = 0$ then $\enumrec{(\butlast{L})}{v_{\emptylist}}{upd}$\\
\>\>\ else $upd$ $(\enumrec{(\declast{L})}{v_{\emptylist}}{upd})$ $L$ )"\\
\end{tabbing}
}

The recursion goes trough a sequence of lists, decreasing the last
element if it is not zero and removing it otherwise. For example, if
called for a list $[1, 2, 2]$, it would make a sequence of recursive
calls for the lists $[1, 2, 1]$, $[1, 2, 0]$, $[1, 2]$, $[1, 1]$,
$[1, 0]$, $[1]$, $[0]$, and $\emptylist$. The function returns the
value corresponding to the list given as its input parameter. Values
corresponding to each list are reconstructed backwards (in the return
of the recursion call). In the given example, it would generate
families with one empty set $[1]$, then one empty and one singleton
set $[1,1]$, then one empty and two singletons $[1, 2]$, then one
empty, two singletons and one doubleton $[1, 2, 1]$, and finally the
required families with one empty, two singletons, and two doubletons
$[1, 2, 2]$. The fixed parameter $v_{\emptylist}$ is the value for the
empty list (the base case of the recursion), and the other fixed
parameter $upd$ is the function that is used to update the value
corresponding to the next list in the sequence (the return value of
the recursive call) and to obtain the value corresponding to the
current list (the return value of the current call). When updating the
value, the function $upd$ can also take the current list into account.

The following theorem shows how can we use $\enumrecname$ to find an
iso-base of $\LnP{{L}}{n}{P}$ for some given list $L$, number $n$, and
a predicate $P$ (its proof is outlined in the Appendix).

\begin{theorem}
  \label{lemma:enum_rec_mult}
  Let $L$ be a list such that $\length{L} \leq n + 1$.  Assume
  that:
  \begin{enumerate}
  \item $P\ \{\}$ holds and $v_{\emptylist} = \{\emptyset\}$,
  \item $\Pinc{P}$ incrementally checks $P$ and is preserved by injective
  functions,
  \item $\mathcal{P}$ contains all permutations of $\listn{n}$, and
  $upd = \lambda\ \mathcal{F}\ L.\
  \base{(\mult{\mathcal{F}}{\families{n}{\length{L}-1}}{\Pinc{P}})}{\mathcal{P}}$.
  \end{enumerate}
  Then $\enumrec{L}{v_{\emptylist}}{upd}$ is an iso-base of
  $\LnP{{L}}{n}{P}$.
\end{theorem}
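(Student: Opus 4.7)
The plan is to prove the theorem by well-founded induction on $L$ using the measure $\length{L} + \sum_i l_i$, which strictly decreases in both recursive branches of $\enumrecname$ and so mirrors its recursive structure.

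In the base case $L = \emptylist$, the only $\emptylist$-partitioned family is the empty family $\{\}$, since the cardinality bound in the definition of $L$-partitioning forces $F$ to contain no members. By assumption~(1) the empty family satisfies $P$, so $\LnP{\emptylist}{n}{P} = \{\{\}\} = v_{\emptylist}$, which is trivially an iso-base.

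For the inductive step I would split according to the two recursive branches. When $\last{L} = 0$, writing $m = \length{L} - 1$ and unfolding $L$-partitioning with $l_m = 0$, the cardinality bound $\card{A} \le m$ can be tightened to $\card{A} \le m-1$ without eliminating any family, so $\LnP{L}{n}{P} = \LnP{\butlast{L}}{n}{P}$; the induction hypothesis applied to $\butlast{L}$ (whose length is still at most $n+1$) then finishes this case. When $\last{L} > 0$, again set $m = \length{L} - 1$, so $m \le n$ by the assumption $\length{L} \le n+1$. Writing $L = [l_0, \ldots, l_m]$ and $\declast{L} = [l_0, \ldots, l_m - 1]$, both of length $m+1$, the induction hypothesis makes $\enumrec{(\declast{L})}{v_{\emptylist}}{upd}$ an iso-base of $\LnP{\declast{L}}{n}{P}$ and in particular an iso-representing subcollection. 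Theorem~\ref{thm:mult_generates} then shows that the filtered product $\mult{\cdot}{\families{n}{m}}{\Pinc{P}}$ is an iso-representing subcollection of $\LnP{L}{n}{P}$, and Proposition~\ref{lemma:nef} (since $\mathcal{P}$ contains all permutations of $\listn{n}$) ensures that applying $\base{\cdot}{\mathcal{P}}$ turns this into an iso-base. This matches exactly the application of $upd$ dictated by the definition of $\enumrecname$ on $L$, since $\length{L} - 1 = m$.

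The main obstacle, while not conceptually deep, is the index bookkeeping: matching $\length{L}$ to the cardinality $m$ used in Theorem~\ref{thm:mult_generates} and carefully justifying the equality $\LnP{L}{n}{P} = \LnP{\butlast{L}}{n}{P}$ directly from the definition of $L$-partitioning in the trailing-zero case. Once the recursion is aligned with the two multiplication constructions and the bookkeeping is settled, Theorem~\ref{thm:mult_generates} and Proposition~\ref{lemma:nef} do all the heavy lifting.
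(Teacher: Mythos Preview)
Your proof is correct and takes essentially the same approach as the paper: the paper packages the induction into a general invariant principle for $\enumrecname$ (Proposition~\ref{prop:enum_rec}) and then instantiates the relation $R$ as ``is an iso-base of $\LnP{L}{n}{P}$'', but the three conditions checked there are exactly your base case, trailing-zero case, and main case, handled with the same tools (Theorem~\ref{thm:mult_generates} and, implicitly, Proposition~\ref{lemma:nef}). The only cosmetic difference is that the paper factors out the recursion scheme, while you carry out the well-founded induction on $\length{L} + \sum_i l_i$ directly.
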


\subsection{Dynamic programming enumeration}
Theorem \ref{lemma:enum_rec_mult} gives us a way to compute an
iso-base of $\LnP{L}{n}{P}$ for a single given list $L$. However, we
often want to enumerate iso-bases of $\LnP{{L}}{n}{P}$ for many
different lists $L$ (in an extreme case, for all possible lists
$L$). In that case, a much better solution can be obtained by using
dynamic programming, as many subproblems will overlap. For example,
calculating both $\enumrec{[1, 2, 3]}{v_{\emptylist}}{upd}$ and
$\enumrec{[1, 2, 2, 1]}{v_{\emptylist}}{upd}$ will require calculation
of $\enumrec{[1, 2, 2]}{v_{\emptylist}}{upd}$.

Next we present a dynamic programming algorithm that traverses all
lists $L$ such that $L \preceq L_{max}$ for a given list $L_{max}$ and
gathers a list of values corresponding to each of those lists. The
uniform upper bound (given by the list $L_{max}$) allows easy
termination proof, but in some cases we want to be able to terminate
the traversal of some branches earlier (before the bound of $L_{max}$
is reached). 

For example, when examining different $L$-partitions of the collection
$\nfam{n}$ it can be noticed that there are some partitions that
contain only nonFC-families, some that contain both FC and
nonFC-families and some that contain only FC-families. From
Proposition \ref{prop:covered} it follows that this property is
monotonic over the relation $\preceq$.

\begin{proposition}\ \\[-1em]
  \label{prop:covered_mono}
  \begin{enumerate}
  \item If for all $F$ in $\nfam{n}$ that are $L$-partitioned it holds
    $\FCcovered{F}{\FC}$, and if $L' \succeq L$, then for all $F$ in
    $\nfam{n}$ that are $L'$-partitioned it holds that
    $\FCcovered{F}{\FC}$.
  \item If for all $F$ in $\nfam{n}$ that are $L$-partitioned it holds
    that $\nonFCcovered{F}{\nonFC}$, and if $L' \preceq L$, then for
    all $F$ in $\nfam{n}$ that are $L'$-partitioned it holds that
    $\nonFCcovered{F}{\nonFC}$.
  \end{enumerate}
\end{proposition}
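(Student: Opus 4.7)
My plan is to reduce both parts to the subfamily/superfamily monotonicity already recorded in Proposition \ref{prop:covered}.1. In part 1, an $L'$-partitioned family with $L' \succeq L$ has \emph{at least} $l_i$ sets of each cardinality $i$, so it contains an $L$-partitioned subfamily; in part 2, an $L'$-partitioned family with $L' \preceq L$ has \emph{at most} $l_i$ sets of each cardinality, so it can be extended to an $L$-partitioned family inside $\nfam{n}$. Once these subfamily/superfamily links are set up, FC-covering propagates upward and nonFC-covering downward by Proposition \ref{prop:covered}.1, closing both cases.

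For part 1, fix any $F' \in \nfam{n}$ that is $L'$-partitioned with $L' \succeq L$. For each $i$, pick $l_i$ of the $l'_i$ sets in $F'$ of cardinality $i$ (possible since $l_i \le l'_i$), and let $F$ be their union. Then $F \subseteq F'$ and $F$ is $L$-partitioned in $\nfam{n}$. By the hypothesis, $\FCcovered{F}{\FC}$, and by Proposition \ref{prop:covered}.1 we conclude $\FCcovered{F'}{\FC}$.

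For part 2, fix any $F' \in \nfam{n}$ that is $L'$-partitioned with $L' \preceq L$. For each $i$ we must adjoin $l_i - l'_i$ further subsets of $\setn{n}$ of cardinality $i$, distinct from those already in $F'$. There are $\binom{n}{i} - l'_i$ candidates available, so such an extension exists provided $l_i - l'_i \le \binom{n}{i} - l'_i$, i.e.\ $l_i \le \binom{n}{i}$. Under the natural reading of the hypothesis (the collection of $L$-partitioned families in $\nfam{n}$ is nonempty, otherwise the statement is vacuous on both sides), Proposition \ref{prop:lpart} supplies exactly the bound $l_i \le \binom{n}{i}$ for all $i \le n$ and $l_i = 0$ for $i > n$ (which forces $l'_i = 0$ as well, so no extension is needed there). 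The resulting family $F \supseteq F'$ is $L$-partitioned in $\nfam{n}$, so by hypothesis $\nonFCcovered{F}{\nonFC}$, and Proposition \ref{prop:covered}.1 gives $\nonFCcovered{F'}{\nonFC}$.

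The only delicate point is the extension step in part 2: one must justify that the room counted by $\binom{n}{i} - l'_i$ really is available for new sets, which is where Proposition \ref{prop:lpart} is essential. I do not anticipate any other obstacles, as both cases otherwise reduce to direct applications of the monotonicity of covering under inclusion.
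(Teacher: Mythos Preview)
Your approach is exactly what the paper indicates: it states that the proposition follows from Proposition \ref{prop:covered}, and your reduction via $L$-partitioned subfamilies (part 1) and $L$-partitioned superfamilies (part 2), combined with Proposition \ref{prop:covered}.1, is the intended argument.

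One small correction in part 2: your parenthetical claim that if no $L$-partitioned families exist then the statement is ``vacuous on both sides'' is not right. The hypothesis would indeed be vacuously true, but $L'$-partitioned families may still exist (take some $i$ with $l_i > \binom{n}{i}$ while $l'_i \le \binom{n}{i}$), so the conclusion is not vacuous and can in fact fail --- an $L'$-partitioned family containing a singleton is an FC-family and hence not nonFC-covered by any collection of nonFC-families. Thus the proposition as literally stated needs the tacit assumption that $L$ is realizable in $\nfam{n}$ (equivalently $l_i \le \binom{n}{i}$ for all $i$), which holds in every application in the paper; your invocation of Proposition \ref{prop:lpart} is then the right way to obtain the needed room for the extension.
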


When showing that all families in $\nfam{n}$ are covered by $\FC$ and
$\nonFC$, a good approach is to identify minimal lists $L_{F}$ such
that all $L_F$-partitioned families in $\nfam{n}$ are FC-covered by
$\FC$ and maximal lists $L_{N}$ such that all $L_N$-partitioned
families in $\nfam{n}$ are nonFC-partitioned. Then it remains only to
show that all $L$-partitioned families in $\nfam{n}$ are covered for
lists $L$ that are between all lists $L_N$ and $L_F$. However, since
our inductive construction must start from the empty family, we
respect only the upper bound and show the result for lists $L$ that
are below all lists in $L_F$.

Therefore, in our algorithm it is also allowed to exclude all lists
$L'$ such that $L' \succeq L$ for lists $L$ for which the given
predicate $stop\ L$ holds. All lists encountered during the traversed
will have the same length (unlike the lists traversed by the
$\enumrecname$ function, here the shorter lists will be padded by
zeros). All successive lists during the traversal must differ only by
one on the last non-zero entry. If $m$ is an index such that all
elements after the position $m$ in the list $L$ are zero, then the
list $L$ will be used to reach lists obtained by incrementing values
of list $L$ on positions greater or equal to $m$. For example, the
list $[3, 0, 1, 0, 0]$ will be used to reach the lists
$[3, 0, 2, 0, 0]$, $[3, 0, 1, 1, 0]$ and $[3, 0, 1, 0, 1]$.

The function $\enumdppname$ has three fixed, and four changing
parameters. First two changing parameters are the current list $L$ and
the value $v$ corresponding to it. Next parameter $m$ is an index of
last non-zero entry in $L$, and the final parameter $res$ is the
accumulating parameter that stores the result (the list of all values
corresponding to lists previously encountered during the
traversal). First two fixed parameters are criteria for traversal
termination (the list $L_{max}$ and the predicate $stop$), and the
final fixed parameter $upd$ is the function used to calculate the
value corresponding to a next lists encountered during the traversal,
based on the value corresponding to the current list. If a list is
obtained from the list $L$ by incrementing the value on the position
$m$, its corresponding value is calculated by $upd\ v\ m$, where $v$
is the value corresponding to the list $L$.

\texttt{
\begin{tabbing}
  \hspace{0.5cm}\=\hspace{1cm}\=\hspace{1cm}\=\hspace{1cm}\=\kill
  function $\enumdppname$ where \\
  \>"$\enumdpp{L}{m}{v}{res}{upd}{stop}{L_{max}}$ = foldl\\
  \>\>($\lambda$ $res'$ $m'$. let $L'$ = $\incnth{L}{m'}$ in\\
  \>\>\>if $stop\ L'$ $\vee$ $L' \succ L_{max}$ then \\
  \>\>\>\>$res'$ \\
  \>\>\>else \\
  \>\>\>\>$\enumdpp{L'}{m'}{(upd\ v\ m')}{res'}{upd}{stop}{L_{max}}$\\
  \>\>)\\
  \>\>$(v \# res)$\\
  \>\>$[m, m+1, \ldots, \length{L} - 1 ]$"
\end{tabbing}
}

The traversal usually starts from the empty list (appropriately padded
by zeros), the value $v_{\emptylist}$ corresponding to it, the value
$m=0$, and the empty accumulating parameter $res$. The wrapper
function $\enumdpname$ performs such initial function call for
$\enumdppname$.

\texttt{
\begin{tabbing}
  \hspace{0.5cm}\=\hspace{1cm}\=\hspace{1cm}\=\hspace{1cm}\=\kill
  definition $\enumdpname$ where\\
  \>"$\enumdp{v_{\emptylist}}{upd}{stop}{L_{max}} = \enumdpp{[0, ..., 0]}{0}{v_{\emptylist}}{\emptylist}{upd}{stop}{L_{max}}$"
\end{tabbing}
}

Finally, we can use the dynamic programming enumeration to collect
iso-bases of $\LnP{L}{n}{P}$ for all list $L \preceq L_{max}$ that
were not excluded by the given $stop$ predicate.

\begin{theorem}
  \label{lemma:enum_dp_mult}
  Assume that
  \begin{enumerate}
  \item $\Pinc{P}$ incrementally checks $P$ and is preserved by injective
    functions, 
  \item $P\ \{\}$ holds and $v_{\emptylist} = \{\emptyset\}$,
  \item $L_{max}$ is a list such that $\length{L_{max}} \le n + 1$,
    $\mathcal{L}_{s}$ contain lists (that all have the same length
    $\length{L_{max}}$), and $stop = \lambda\ L.\ (\exists L_s \in
    \mathcal{L}_s.\ L \succeq L_s)$,
  \item $\mathcal{P}$ contains all permutations of $\listn{n}$, $upd =
    \lambda\ \mathcal{F}\ m.\
    \base{\mult{(\mathcal{F}}{\families{n}{m}}{\Pinc{P}})}{\mathcal{P}}$.
  \end{enumerate}
  Let
  $X = \{L.\ L \preceq L_{max} \And (\nexists L_s \in
  \mathcal{L}_{s}.\ L \succeq L_s)\}$.
  Then, for all $L \in X$, there exists an
  $\mathcal{F}_b \in \enumdp{v_{\emptylist}}{upd}{stop}{L_{max}}$ such
  that $\mathcal{F}_b$ is an iso-base of $\LnP{L}{n}{P}$.
\end{theorem}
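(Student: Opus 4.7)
The plan is to establish a loop invariant for $\enumdppname$ and then to check that the canonical trace from $[0,\ldots,0]$ to any prescribed $L \in X$ is actually traversed. The invariant that I would prove by induction on the recursion of $\enumdppname$ is the following: whenever it is called with arguments $(L, m, v, res)$ such that $L$ has length $\length{L_{max}}$, all entries $l_i$ with $i > m$ are zero, and $v$ is an iso-base of $\LnP{L}{n}{P}$, then the returned accumulator equals $res$ extended with $v$ together with an iso-base of $\LnP{L'}{n}{P}$ for every $L' \succeq L$ reachable from $L$ by successive increments at non-decreasing positions while remaining $\preceq L_{max}$ and outside the stop-region.

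The inductive step of this invariant is a direct combination of Theorem \ref{thm:mult_generates} and Proposition \ref{lemma:nef}. When the body of $\enumdppname$ processes a position $m' \ge m$ that is not cut off by the guards, the new value is $upd\ v\ m' = \base{(\mult{v}{\families{n}{m'}}{\Pinc{P}})}{\mathcal{P}}$; the assumption $m' \ge m$ (combined with the hypothesis that entries beyond $m$ are zero) ensures that the freshly added sets of cardinality $m'$ are of maximal cardinality in the family, which is precisely the precondition under which $\Pinc{P}$ incrementally checks $P$. Theorem \ref{thm:mult_generates} then yields that $\mult{v}{\families{n}{m'}}{\Pinc{P}}$ iso-represents $\LnP{\incnth{L}{m'}}{n}{P}$, and Proposition \ref{lemma:nef} (applied under the full permutation set $\mathcal{P}$) refines it to a genuine iso-base. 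The incremented list $\incnth{L}{m'}$ still has zero entries beyond position $m'$, so the invariant propagates into the recursive call; termination is guaranteed because $L$ strictly grows under $\preceq$ below the bound $L_{max}$.

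For the base case, the hypothesis $P\ \{\}$ makes the empty family the unique member of $\LnP{[0,\ldots,0]}{n}{P}$, so $v_{\emptylist} = \{\{\}\}$ is trivially an iso-base, and the invariant's preconditions hold at the initial call from $\enumdpname$. For reachability, given any $L = [l_0,\ldots,l_N] \in X$, I would exhibit the canonical trace from $[0,\ldots,0]$ to $L$ that increments position $0$ exactly $l_0$ times, then position $1$ exactly $l_1$ times, and so on in non-decreasing order. Every intermediate list $L^{(i)}$ on this trace satisfies $L^{(i)} \preceq L \preceq L_{max}$, ruling out the $L' \succ L_{max}$ guard; and if $stop\ L^{(i)}$ held, then some $L_s \in \mathcal{L}_s$ with $L_s \preceq L^{(i)} \preceq L$ would contradict $L \in X$. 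Hence the traversal is not cut off before reaching $L$, and by the invariant an iso-base of $\LnP{L}{n}{P}$ is prepended to the accumulator at the visit to $L$, giving the required witness in the returned list.

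The main obstacle is the formulation of the invariant itself: one must simultaneously track the semantic claim (that $v$ is an iso-base of $\LnP{L}{n}{P}$) and the structural condition on $L$ (zero entries beyond position $m$), since it is precisely their conjunction that licenses the application of Theorem \ref{thm:mult_generates} at each recursive step. Once the invariant is correctly stated, the two inductions (over the recursive structure of $\enumdppname$ for correctness, and along the canonical trace for reachability) are essentially routine.
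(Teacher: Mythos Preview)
Your proposal is correct and follows essentially the same approach as the paper. The paper factors the argument through a separately stated characterization of $\enumdppname$ (Proposition~\ref{prop:enum_dp}), which packages exactly your invariant---that the current value is an iso-base of $\LnP{L}{n}{P}$ and that entries beyond position $m$ vanish---and your reachability claim into an abstract relation $R$ and a description of which lists $L'$ are visited; the instantiation then invokes Theorem~\ref{thm:mult_generates} and Proposition~\ref{lemma:nef} at the update step, just as you do. Your inline presentation merges these two pieces, but the decomposition, the key lemmas used, and the reachability argument via a canonical non-decreasing trace are the same.
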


\subsection{Finding characteristic families}
Next, we describe a fully automated procedure that finds all
characteristic families --- all minimal FC-families and maximal
nonFC-families that are canonical wrt. lexicographic ordering of
families. Note again that this procedure needs not to be verified (we
implemented it outside Isabelle/HOL). The procedure is based on the
dynamic programing enumeration. During the enumeration, a collection
$\FC$ of canonical, minimal FC-families, and a collection of canonical
nonFC-families $\nonFC'$ discovered so far is maintained (both are
empty in the beginning). For each $L$ encountered during enumeration,
a list of all canonical, irreducible, $L$-partitioned families that
are not FC-covered by any family in $\FC$ is calculated (the
enumeration uses the predicate
$P = \lambda F.\ \indep{F}\ \And\ \neg (\FCcovered{F}{\FC})$ and
$\Pinc{P} = \lambda\ F\ A.\ \neg\ \expressible{A}{F} \And\ \neg
(\FCcovered{F \union \{A\}}{\FC})$
--- $\Pinc{P}$ incrementally checks $P$ and is preserved by injective
functions). The FC-status of each family in that list is examined. All
newly discovered FC-families are added to $\FC$, nonFC-families to
$\nonFC'$, and the procedure continues with the next list $L$.

Since the enumeration of lists $L$ is in the lexicographic order, all
discovered FC-families will be minimal (an $L$-partitioned family can
be covered only by a $L'$-partitioned familiy only if $L \succeq L'$,
and if both $L$ and $L'$ are the same length, then $L'$ must be
lexicographically smaller than $L$).

At the end, the collection $\nonFC'$ will be an iso-base of all
irreducible nonFC-families in $\nfam{n}$. The collection $\nonFC$ of
all irreducible, cannonical, maximal nonFC-families is obtained by
filtering all families in $\nonFC'$ that are covered by some other
family in $\nonFC'$ and calculating union-closures.

\section{FC(6) families}
\label{sec:fc6}

Applying previous procedure on $\nfam{6}$ led us to the following
definitions (these are the characteristic families for which we shall
prove that they cover all families in $\nfam{6}$).

\begin{definition}
$\FCsix = $

\begin{footnotesize}
\begin{longtable}{|p{\textwidth}|}
\hline
\{\{0,1,2,3\},\{0,1,2,4\},\{0,1,3,4\},\{0,2,3,4\},\{0,1,2,3,5\},\{1,2,3,4,5\}\}\\\hline
\{\{0,1,2,3\},\{0,1,2,4\},\{0,1,3,4\},\{0,1,2,5\},\{0,1,3,4,5\},\{0,2,3,4,5\},\{1,2,3,4,5\}\}\\\hline
\{\{0,1,2,3\},\{0,1,2,4\},\{0,1,3,4\},\{0,2,3,4\},\{1,2,3,4\}\}\\\hline
\{\{0,1,2,3\},\{0,1,2,4\},\{0,1,3,4\},\{0,2,3,4\},\{1,2,3,5\}\}\\\hline
\{\{0,1,2,3\},\{0,1,2,4\},\{0,1,3,4\},\{0,1,2,5\},\{0,2,3,5\},\{1,2,3,4,5\}\}\\\hline
\{\{0,1,2,3\},\{0,1,2,4\},\{0,1,3,4\},\{0,2,3,5\},\{0,2,4,5\},\{1,2,3,4,5\}\}\\\hline
\{\{0,1,2,3\},\{0,1,2,4\},\{0,1,3,4\},\{0,1,2,5\},\{0,3,4,5\},\{1,2,3,4,5\}\}\\\hline
\{\{0,1,2,3\},\{0,1,2,4\},\{0,1,3,4\},\{0,2,3,5\},\{1,2,3,5\},\{0,1,2,4,5\}\}\\\hline
\{\{0,1,2,3\},\{0,1,2,4\},\{0,1,3,4\},\{0,2,3,5\},\{1,2,4,5\},\{0,2,3,4,5\}\}\\\hline
\{\{0,1,2,3\},\{0,1,2,4\},\{0,1,3,4\},\{0,2,3,5\},\{2,3,4,5\},\{1,2,3,4,5\}\}\\\hline
\{\{0,1,2,3\},\{0,1,2,4\},\{0,1,3,5\},\{0,2,4,5\},\{1,3,4,5\},\{0,2,3,4,5\}\}\\\hline
\{\{0,1,2,3\},\{0,1,2,4\},\{0,1,3,4\},\{0,2,3,5\},\{2,3,4,5\},\{0,1,2,4,5\}\}\\\hline
\{\{0,1,2,3\},\{0,1,2,4\},\{0,1,2,5\},\{0,3,4,5\},\{1,3,4,5\},\{0,2,3,4,5\}\}\\\hline
\{\{0,1,2,3\},\{0,1,2,4\},\{0,1,3,5\},\{0,1,4,5\},\{2,3,4,5\},\{0,2,3,4,5\}\}\\\hline
\{\{0,1,2,3\},\{0,1,2,4\},\{0,1,3,5\},\{0,2,4,5\},\{0,3,4,5\},\{1,2,3,4,5\}\}\\\hline
\{\{0,1,2,3\},\{0,1,2,4\},\{0,1,3,4\},\{0,1,2,5\},\{2,3,4,5\},\{0,2,3,4,5\},\{1,2,3,4,5\}\}\\\hline
\{\{0,1,2,3\},\{0,1,2,4\},\{0,1,3,4\},\{0,1,2,5\},\{2,3,4,5\},\{0,1,3,4,5\},\{0,2,3,4,5\}\}\\\hline
\{\{0,1,2,3\},\{0,1,2,4\},\{0,1,3,4\},\{0,2,3,4\},\{0,1,2,5\},\{0,1,3,5\}\}\\\hline
\{\{0,1,2,3\},\{0,1,2,4\},\{0,1,3,4\},\{0,1,2,5\},\{0,1,3,5\},\{2,3,4,5\}\}\\\hline
\{\{0,1,2,3\},\{0,1,2,4\},\{0,1,3,4\},\{0,2,3,4\},\{0,1,2,5\},\{0,3,4,5\}\}\\\hline
\{\{0,1,2,3\},\{0,1,2,4\},\{0,1,3,4\},\{0,1,2,5\},\{0,2,3,5\},\{1,2,4,5\}\}\\\hline
\{\{0,1,2,3\},\{0,1,2,4\},\{0,1,3,4\},\{0,1,2,5\},\{0,1,3,5\},\{0,1,4,5\},\{0,2,3,4,5\}\}\\\hline
\{\{0,1,2,3\},\{0,1,2,4\},\{0,1,3,4\},\{0,1,2,5\},\{0,1,3,5\},\{0,2,4,5\},\{0,2,3,4,5\}\}\\\hline
\{\{0,1,2\},\{0,1,3,4\},\{0,2,3,5\},\{0,1,3,4,5\},\{1,2,3,4,5\}\}\\\hline
\{\{0,1,2\},\{0,1,3,4\},\{0,2,3,5\},\{0,1,2,4,5\},\{1,2,3,4,5\}\}\\\hline
\{\{0,1,2\},\{0,1,3,4\},\{0,2,3,4\},\{0,1,2,3,5\},\{1,2,3,4,5\}\}\\\hline
\{\{0,1,2\},\{0,1,3,4\},\{0,1,3,5\},\{0,2,3,4,5\},\{1,2,3,4,5\}\}\\\hline
\{\{0,1,2\},\{0,1,3,4\},\{0,3,4,5\},\{0,1,2,3,5\},\{0,1,2,4,5\},\{1,2,3,4,5\}\}\\\hline
\{\{0,1,2\},\{0,1,3,4\},\{2,3,4,5\},\{0,1,2,3,5\},\{0,1,2,4,5\},\{0,2,3,4,5\},\{1,2,3,4,5\}\}\\\hline
\{\{0,1,2\},\{0,3,4,5\},\{1,3,4,5\},\{0,1,2,3,4\},\{0,1,2,3,5\},\{0,1,2,4,5\},\{0,2,3,4,5\},\{1,2,3,4,5\}\}\\\hline
\{\{0,1,2\},\{0,1,2,3\},\{0,1,3,4\},\{0,1,2,3,5\},\{0,1,2,4,5\},\{0,1,3,4,5\},\{0,2,3,4,5\},\{1,2,3,4,5\}\}\\\hline
\{\{0,1,2\},\{0,1,3,4\},\{0,2,3,4\},\{1,2,3,4\}\}\\\hline
\{\{0,1,2\},\{0,1,3,4\},\{0,2,3,4\},\{1,2,3,5\}\}\\\hline
\{\{0,1,2\},\{0,1,3,4\},\{0,2,3,5\},\{1,2,4,5\}\}\\\hline
\{\{0,1,2\},\{0,1,3,4\},\{0,2,3,5\},\{1,3,4,5\}\}\\\hline
\{\{0,1,2\},\{0,1,2,3\},\{0,1,3,4\},\{0,2,3,4\},\{1,2,3,4,5\}\}\\\hline
\{\{0,1,2\},\{0,1,3,4\},\{0,1,3,5\},\{0,2,4,5\},\{0,2,3,4,5\}\}\\\hline
\{\{0,1,2\},\{0,1,3,4\},\{0,2,3,4\},\{1,3,4,5\},\{0,1,2,3,5\}\}\\\hline
\{\{0,1,2\},\{0,1,3,4\},\{0,2,3,4\},\{1,3,4,5\},\{1,2,3,4,5\}\}\\\hline
\{\{0,1,2\},\{0,1,3,4\},\{0,2,3,4\},\{0,3,4,5\},\{1,2,3,4,5\}\}\\\hline
\{\{0,1,2\},\{0,1,3,4\},\{0,1,3,5\},\{2,3,4,5\},\{0,1,2,4,5\}\}\\\hline
\{\{0,1,2\},\{0,1,3,4\},\{0,1,3,5\},\{2,3,4,5\},\{0,2,3,4,5\}\}\\\hline
\{\{0,1,2\},\{0,1,3,4\},\{0,1,3,5\},\{0,1,4,5\},\{0,2,3,4,5\}\}\\\hline
\{\{0,1,2\},\{0,1,2,3\},\{0,1,4,5\},\{0,2,4,5\},\{1,2,3,4,5\}\}\\\hline
\{\{0,1,2\},\{0,1,2,3\},\{0,1,3,4\},\{0,2,3,5\},\{1,2,3,4,5\}\}\\\hline
\{\{0,1,2\},\{0,1,2,3\},\{0,1,3,4\},\{0,2,4,5\},\{1,2,3,4,5\}\}\\\hline
\{\{0,1,2\},\{0,1,3,4\},\{0,2,3,4\},\{0,1,3,5\},\{0,1,2,4,5\}\}\\\hline
\{\{0,1,2\},\{0,1,3,4\},\{0,1,3,5\},\{0,3,4,5\},\{1,2,3,4,5\}\}\\\hline
\{\{0,1,2\},\{0,1,2,3\},\{0,1,3,4\},\{0,3,4,5\},\{0,1,2,4,5\},\{1,2,3,4,5\}\}\\\hline
\{\{0,1,2\},\{0,1,3,4\},\{0,3,4,5\},\{2,3,4,5\},\{0,1,2,3,5\},\{0,1,2,4,5\}\}\\\hline
\{\{0,1,2\},\{0,1,3,4\},\{0,3,4,5\},\{1,3,4,5\},\{0,1,2,3,5\},\{0,1,2,4,5\}\}\\\hline
\{\{0,1,2\},\{0,1,3,4\},\{0,3,4,5\},\{1,3,4,5\},\{0,1,2,3,5\},\{0,2,3,4,5\}\}\\\hline
\{\{0,1,2\},\{0,1,3,4\},\{0,3,4,5\},\{2,3,4,5\},\{0,1,2,3,5\},\{1,2,3,4,5\}\}\\\hline
\{\{0,1,2\},\{0,1,2,3\},\{0,1,3,4\},\{0,3,4,5\},\{0,1,2,3,5\},\{0,2,3,4,5\},\{1,2,3,4,5\}\}\\\hline
\{\{0,1,2\},\{0,1,2,3\},\{0,1,4,5\},\{0,3,4,5\},\{0,1,2,3,4\},\{0,2,3,4,5\},\{1,2,3,4,5\}\}\\\hline
\{\{0,1,2\},\{0,1,2,3\},\{0,1,3,4\},\{2,3,4,5\},\{0,1,2,4,5\},\{0,2,3,4,5\},\{1,2,3,4,5\}\}\\\hline
\{\{0,1,2\},\{0,1,2,3\},\{0,1,2,4\},\{0,1,3,4\},\{0,1,2,3,5\},\{0,1,2,4,5\},\{0,2,3,4,5\},\{1,2,3,4,5\}\}\\\hline
\{\{0,1,2\},\{0,1,2,3\},\{0,1,2,4\},\{0,1,3,4\},\{0,1,2,3,5\},\{0,1,3,4,5\},\{0,2,3,4,5\},\{1,2,3,4,5\}\}\\\hline
\{\{0,1,2\},\{0,1,2,3\},\{0,1,3,4\},\{2,3,4,5\},\{0,1,2,3,5\},\{0,1,3,4,5\},\{0,2,3,4,5\},\{1,2,3,4,5\}\}\\\hline
\{\{0,1,2\},\{0,1,2,3\},\{0,1,4,5\},\{2,3,4,5\},\{0,1,2,3,4\},\{0,1,3,4,5\},\{0,2,3,4,5\},\{1,2,3,4,5\}\}\\\hline
\{\{0,1,2\},\{0,1,2,3\},\{0,1,3,4\},\{0,2,3,4\},\{0,1,3,5\}\}\\\hline
\{\{0,1,2\},\{0,1,2,3\},\{0,1,3,4\},\{0,2,3,4\},\{0,1,4,5\}\}\\\hline
\{\{0,1,2\},\{0,1,3,4\},\{0,2,3,4\},\{0,1,3,5\},\{0,2,3,5\}\}\\\hline
\{\{0,1,2\},\{0,1,3,4\},\{0,1,3,5\},\{0,1,4,5\},\{0,3,4,5\}\}\\\hline
\{\{0,1,2\},\{0,1,3,4\},\{0,2,3,4\},\{0,1,3,5\},\{0,3,4,5\}\}\\\hline
\{\{0,1,2\},\{0,1,2,3\},\{0,1,3,4\},\{0,2,3,4\},\{1,3,4,5\}\}\\\hline
\{\{0,1,2\},\{0,1,2,3\},\{0,1,4,5\},\{0,2,4,5\},\{1,3,4,5\}\}\\\hline
\{\{0,1,2\},\{0,1,3,4\},\{0,1,3,5\},\{0,3,4,5\},\{1,3,4,5\}\}\\\hline
\{\{0,1,2\},\{0,1,3,4\},\{0,2,3,4\},\{0,3,4,5\},\{1,3,4,5\}\}\\\hline
\{\{0,1,2\},\{0,1,2,3\},\{0,1,3,4\},\{0,1,4,5\},\{0,2,4,5\},\{0,2,3,4,5\}\}\\\hline
\{\{0,1,2\},\{0,1,2,3\},\{0,1,2,4\},\{0,1,3,4\},\{0,3,4,5\},\{1,2,3,4,5\}\}\\\hline
\{\{0,1,2\},\{0,1,2,3\},\{0,1,3,4\},\{0,2,3,4\},\{0,3,4,5\},\{0,1,2,4,5\}\}\\\hline
\{\{0,1,2\},\{0,1,2,3\},\{0,1,3,4\},\{0,3,4,5\},\{1,3,4,5\},\{0,2,3,4,5\}\}\\\hline
\{\{0,1,2\},\{0,1,2,3\},\{0,1,4,5\},\{0,3,4,5\},\{1,3,4,5\},\{0,1,2,3,4\}\}\\\hline
\{\{0,1,2\},\{0,1,2,3\},\{0,1,4,5\},\{0,3,4,5\},\{1,3,4,5\},\{0,2,3,4,5\}\}\\\hline
\{\{0,1,2\},\{0,1,2,3\},\{0,1,3,4\},\{0,3,4,5\},\{1,3,4,5\},\{0,1,2,3,5\}\}\\\hline
\{\{0,1,2\},\{0,1,2,3\},\{0,1,3,4\},\{0,3,4,5\},\{1,3,4,5\},\{0,1,2,4,5\}\}\\\hline
\{\{0,1,2\},\{0,1,2,3\},\{0,1,3,4\},\{0,3,4,5\},\{2,3,4,5\},\{0,1,2,3,5\}\}\\\hline
\{\{0,1,2\},\{0,1,2,3\},\{0,1,3,4\},\{0,3,4,5\},\{2,3,4,5\},\{0,1,2,4,5\}\}\\\hline
\{\{0,1,2\},\{0,1,2,3\},\{0,1,3,4\},\{0,3,4,5\},\{2,3,4,5\},\{1,2,3,4,5\}\}\\\hline
\{\{0,1,2\},\{0,1,2,3\},\{0,1,4,5\},\{0,3,4,5\},\{2,3,4,5\},\{0,1,2,3,4\}\}\\\hline
\{\{0,1,2\},\{0,1,2,3\},\{0,1,4,5\},\{0,3,4,5\},\{2,3,4,5\},\{1,2,3,4,5\}\}\\\hline
\{\{0,1,2\},\{0,1,2,3\},\{0,1,3,4\},\{0,1,3,5\},\{0,3,4,5\},\{0,1,2,4,5\},\{0,2,3,4,5\}\}\\\hline
\{\{0,1,2\},\{0,1,2,3\},\{0,1,3,4\},\{0,1,4,5\},\{0,3,4,5\},\{0,1,2,3,5\},\{0,2,3,4,5\}\}\\\hline
\{\{0,1,2\},\{0,1,2,3\},\{0,3,4,5\},\{1,3,4,5\},\{2,3,4,5\},\{0,1,2,3,4\},\{0,1,2,4,5\}\}\\\hline
\{\{0,1,2\},\{0,1,2,3\},\{0,1,2,4\},\{0,1,3,4\},\{2,3,4,5\},\{0,2,3,4,5\},\{1,2,3,4,5\}\}\\\hline
\{\{0,1,2\},\{0,1,2,3\},\{0,1,2,4\},\{0,1,3,5\},\{2,3,4,5\},\{0,2,3,4,5\},\{1,2,3,4,5\}\}\\\hline
\{\{0,1,2\},\{0,1,2,3\},\{0,1,2,4\},\{0,1,3,5\},\{2,3,4,5\},\{0,1,2,4,5\},\{0,1,3,4,5\},\{0,2,3,4,5\}\}\\\hline
\{\{0,1,2\},\{0,1,2,3\},\{0,1,2,4\},\{0,3,4,5\},\{1,3,4,5\},\{0,1,2,3,5\},\{0,2,3,4,5\},\{1,2,3,4,5\}\}\\\hline
\{\{0,1,2\},\{0,1,2,3\},\{0,1,2,4\},\{0,1,3,4\},\{2,3,4,5\},\{0,1,2,3,5\},\{0,1,2,4,5\},\{0,1,3,4,5\},\{0,2,3,4,5\}\}\\\hline
\{\{0,1,2\},\{0,1,2,3\},\{0,1,2,4\},\{0,1,3,4\},\{0,2,3,5\},\{0,2,4,5\}\}\\\hline
\{\{0,1,2\},\{0,1,2,3\},\{0,1,2,4\},\{0,1,3,4\},\{0,1,3,5\},\{0,2,4,5\}\}\\\hline
\{\{0,1,2\},\{0,1,2,3\},\{0,1,2,4\},\{0,1,3,4\},\{0,2,3,4\},\{0,3,4,5\}\}\\\hline
\{\{0,1,2\},\{0,1,2,3\},\{0,1,2,4\},\{0,1,3,5\},\{0,2,4,5\},\{0,3,4,5\}\}\\\hline
\{\{0,1,2\},\{0,1,2,3\},\{0,1,2,4\},\{0,1,3,4\},\{0,1,3,5\},\{2,3,4,5\}\}\\\hline
\{\{0,1,2\},\{0,1,2,3\},\{0,1,2,4\},\{0,1,3,4\},\{0,3,4,5\},\{1,3,4,5\}\}\\\hline
\{\{0,1,2\},\{0,1,2,3\},\{0,1,2,4\},\{0,1,3,4\},\{0,3,4,5\},\{2,3,4,5\}\}\\\hline
\{\{0,1,2\},\{0,1,2,3\},\{0,1,2,4\},\{0,3,4,5\},\{1,3,4,5\},\{2,3,4,5\}\}\\\hline
\{\{0,1,2\},\{0,1,2,3\},\{0,1,2,4\},\{0,1,3,4\},\{0,1,3,5\},\{0,3,4,5\},\{0,2,3,4,5\}\}\\\hline
\{\{0,1,2\},\{0,1,2,3\},\{0,1,2,4\},\{0,1,3,4\},\{0,2,3,5\},\{0,3,4,5\},\{0,1,2,4,5\}\}\\\hline
\{\{0,1,2\},\{0,1,2,3\},\{0,1,2,4\},\{0,1,3,4\},\{0,1,2,5\},\{2,3,4,5\},\{0,2,3,4,5\}\}\\\hline
\{\{0,1,2\},\{0,1,2,3\},\{0,1,2,4\},\{0,1,3,4\},\{0,1,2,5\},\{0,1,3,5\},\{0,3,4,5\}\}\\\hline
\{\{0,1,2\},\{0,1,3\},\{0,2,3,4,5\},\{1,2,3,4,5\}\}\\\hline
\{\{0,1,2\},\{0,3,4\},\{0,1,2,3,5\},\{0,1,2,4,5\},\{0,1,3,4,5\},\{1,2,3,4,5\}\}\\\hline
\{\{0,1,2\},\{3,4,5\},\{0,1,2,3,4\},\{0,1,2,3,5\},\{0,1,2,4,5\},\{0,1,3,4,5\},\{0,2,3,4,5\},\{1,2,3,4,5\}\}\\\hline
\{\{0,1,2\},\{0,3,4\},\{1,2,3,4\}\}\\\hline
\{\{0,1,2\},\{0,1,3\},\{0,2,3,4\}\}\\\hline
\{\{0,1,2\},\{0,1,3\},\{0,1,4,5\},\{0,2,3,4,5\}\}\\\hline
\{\{0,1,2\},\{0,1,3\},\{0,2,4,5\},\{1,2,3,4,5\}\}\\\hline
\{\{0,1,2\},\{0,1,3\},\{2,3,4,5\},\{0,2,3,4,5\}\}\\\hline
\{\{0,1,2\},\{0,3,4\},\{1,2,3,5\},\{0,1,3,4,5\}\}\\\hline
\{\{0,1,2\},\{0,3,4\},\{0,1,3,5\},\{0,1,2,4,5\}\}\\\hline
\{\{0,1,2\},\{0,3,4\},\{0,1,3,5\},\{1,2,3,4,5\}\}\\\hline
\{\{0,1,2\},\{0,1,3\},\{0,2,4,5\},\{0,1,3,4,5\},\{0,2,3,4,5\}\}\\\hline
\{\{0,1,2\},\{0,3,4\},\{0,1,2,3\},\{0,1,2,4,5\},\{1,2,3,4,5\}\}\\\hline
\{\{0,1,2\},\{0,3,4\},\{0,1,2,3\},\{0,1,3,4,5\},\{1,2,3,4,5\}\}\\\hline
\{\{0,1,2\},\{0,3,4\},\{1,2,3,5\},\{0,1,2,4,5\},\{1,2,3,4,5\}\}\\\hline
\{\{0,1,2\},\{0,1,3\},\{0,2,4,5\},\{0,1,2,3,4\},\{0,1,2,3,5\},\{0,2,3,4,5\}\}\\\hline
\{\{0,1,2\},\{0,1,3\},\{0,2,4,5\},\{0,1,2,3,4\},\{0,1,2,3,5\},\{0,1,3,4,5\}\}\\\hline
\{\{0,1,2\},\{0,1,3\},\{2,3,4,5\},\{0,1,2,3,4\},\{0,1,2,4,5\},\{0,1,3,4,5\}\}\\\hline
\{\{0,1,2\},\{0,3,4\},\{0,1,2,5\},\{0,1,2,3,5\},\{0,1,3,4,5\},\{1,2,3,4,5\}\}\\\hline
\{\{0,1,2\},\{0,3,4\},\{0,1,2,5\},\{0,1,3,4,5\},\{0,2,3,4,5\},\{1,2,3,4,5\}\}\\\hline
\{\{0,1,2\},\{3,4,5\},\{0,1,3,4\},\{0,1,2,3,5\},\{0,1,2,4,5\},\{0,2,3,4,5\}\}\\\hline
\{\{0,1,2\},\{0,1,3\},\{0,1,4,5\},\{0,2,4,5\}\}\\\hline
\{\{0,1,2\},\{0,1,3\},\{0,2,4,5\},\{1,2,4,5\}\}\\\hline
\{\{0,1,2\},\{0,1,3\},\{0,2,4,5\},\{1,3,4,5\}\}\\\hline
\{\{0,1,2\},\{0,1,3\},\{0,1,4,5\},\{2,3,4,5\}\}\\\hline
\{\{0,1,2\},\{0,3,4\},\{0,1,2,3\},\{1,3,4,5\}\}\\\hline
\{\{0,1,2\},\{0,3,4\},\{0,1,2,3\},\{0,1,4,5\}\}\\\hline
\{\{0,1,2\},\{3,4,5\},\{0,1,3,4\},\{0,2,3,5\}\}\\\hline
\{\{0,1,2\},\{0,1,3\},\{0,1,2,4\},\{0,2,4,5\},\{0,1,3,4,5\}\}\\\hline
\{\{0,1,2\},\{0,1,3\},\{0,1,2,4\},\{0,3,4,5\},\{0,1,2,3,5\}\}\\\hline
\{\{0,1,2\},\{0,1,3\},\{0,1,2,4\},\{0,3,4,5\},\{0,1,2,4,5\}\}\\\hline
\{\{0,1,2\},\{0,1,3\},\{0,1,2,4\},\{0,3,4,5\},\{0,2,3,4,5\}\}\\\hline
\{\{0,1,2\},\{0,1,3\},\{0,1,2,4\},\{0,2,4,5\},\{0,2,3,4,5\}\}\\\hline
\{\{0,1,2\},\{0,1,3\},\{0,1,2,4\},\{2,3,4,5\},\{0,1,3,4,5\}\}\\\hline
\{\{0,1,2\},\{0,3,4\},\{0,1,2,5\},\{1,3,4,5\},\{0,2,3,4,5\}\}\\\hline
\{\{0,1,2\},\{0,3,4\},\{0,1,2,5\},\{1,3,4,5\},\{1,2,3,4,5\}\}\\\hline
\{\{0,1,2\},\{0,3,4\},\{0,1,2,3\},\{1,2,3,5\},\{0,1,2,4,5\}\}\\\hline
\{\{0,1,2\},\{0,3,4\},\{0,1,2,3\},\{1,2,4,5\},\{0,1,2,3,5\}\}\\\hline
\{\{0,1,2\},\{0,3,4\},\{0,1,2,5\},\{1,2,3,5\},\{0,1,2,4,5\}\}\\\hline
\{\{0,1,2\},\{0,3,4\},\{0,1,2,5\},\{1,2,3,5\},\{1,2,3,4,5\}\}\\\hline
\{\{0,1,2\},\{0,3,4\},\{0,1,2,3\},\{0,1,2,4\},\{1,2,3,4,5\}\}\\\hline
\{\{0,1,2\},\{0,3,4\},\{0,1,2,3\},\{0,1,3,4\},\{1,2,3,4,5\}\}\\\hline
\{\{0,1,2\},\{3,4,5\},\{0,1,3,4\},\{0,2,3,4\},\{1,2,3,4,5\}\}\\\hline
\{\{0,1,2\},\{3,4,5\},\{0,1,3,4\},\{0,2,3,4\},\{0,1,2,3,5\}\}\\\hline
\{\{0,1,2\},\{0,1,3\},\{0,1,2,4\},\{2,3,4,5\},\{0,1,2,3,5\},\{0,1,2,4,5\}\}\\\hline
\{\{0,1,2\},\{3,4,5\},\{0,1,2,3\},\{0,1,4,5\},\{0,2,3,4,5\},\{1,2,3,4,5\}\}\\\hline
\{\{0,1,2\},\{3,4,5\},\{0,1,2,3\},\{0,1,3,4\},\{0,1,2,4,5\},\{0,2,3,4,5\}\}\\\hline
\{\{0,1,2\},\{3,4,5\},\{0,1,2,3\},\{0,1,3,4\},\{0,2,3,4,5\},\{1,2,3,4,5\}\}\\\hline
\{\{0,1,2\},\{0,3,4\},\{0,1,2,3\},\{0,1,3,4\},\{0,1,2,3,5\},\{0,1,2,4,5\},\{0,1,3,4,5\},\{0,2,3,4,5\}\}\\\hline
\{\{0,1,2\},\{3,4,5\},\{0,1,2,3\},\{0,1,2,4\},\{0,1,2,3,5\},\{0,1,3,4,5\},\{0,2,3,4,5\},\{1,2,3,4,5\}\}\\\hline
\{\{0,1,2\},\{3,4,5\},\{0,1,2,3\},\{0,3,4,5\},\{0,1,2,3,4\},\{0,1,2,3,5\},\{0,1,2,4,5\},\{0,1,3,4,5\},\{1,2,3,4,5\}\}\\\hline
\{\{0,1,2\},\{0,1,3\},\{0,1,2,4\},\{0,1,3,4\},\{2,3,4,5\}\}\\\hline
\{\{0,1,2\},\{0,1,3\},\{0,1,2,4\},\{0,1,2,5\},\{0,2,4,5\}\}\\\hline
\{\{0,1,2\},\{0,1,3\},\{0,1,2,4\},\{0,1,3,4\},\{0,2,4,5\}\}\\\hline
\{\{0,1,2\},\{0,1,3\},\{0,1,2,4\},\{0,1,3,5\},\{2,3,4,5\}\}\\\hline
\{\{0,1,2\},\{0,3,4\},\{0,1,2,3\},\{0,1,3,5\},\{0,3,4,5\}\}\\\hline
\{\{0,1,2\},\{0,3,4\},\{0,1,2,3\},\{0,1,2,5\},\{0,1,3,5\}\}\\\hline
\{\{0,1,2\},\{0,3,4\},\{0,1,2,3\},\{1,2,3,5\},\{0,3,4,5\}\}\\\hline
\{\{0,1,2\},\{0,3,4\},\{0,1,2,3\},\{1,2,4,5\},\{0,3,4,5\}\}\\\hline
\{\{0,1,2\},\{0,3,4\},\{0,1,2,5\},\{1,2,3,5\},\{0,3,4,5\}\}\\\hline
\{\{0,1,2\},\{0,3,4\},\{0,1,2,3\},\{0,1,3,4\},\{0,1,3,5\}\}\\\hline
\{\{0,1,2\},\{0,3,4\},\{0,1,2,3\},\{0,1,2,4\},\{1,2,3,5\}\}\\\hline
\{\{0,1,2\},\{0,3,4\},\{0,1,2,3\},\{0,1,2,5\},\{1,2,3,5\}\}\\\hline
\{\{0,1,2\},\{3,4,5\},\{0,1,2,3\},\{0,1,3,4\},\{0,1,4,5\}\}\\\hline
\{\{0,1,2\},\{3,4,5\},\{0,1,2,3\},\{0,1,3,4\},\{0,2,3,4\}\}\\\hline
\{\{0,1,2\},\{3,4,5\},\{0,1,2,3\},\{0,1,4,5\},\{0,2,4,5\}\}\\\hline
\{\{0,1,2\},\{0,3,4\},\{0,1,2,3\},\{0,1,2,4\},\{0,1,3,4\},\{0,2,3,4,5\}\}\\\hline
\{\{0,1,2\},\{0,3,4\},\{0,1,2,3\},\{0,1,2,5\},\{0,3,4,5\},\{1,2,3,4,5\}\}\\\hline
\{\{0,1,2\},\{3,4,5\},\{0,1,2,3\},\{0,1,3,4\},\{2,3,4,5\},\{0,1,2,4,5\}\}\\\hline
\{\{0,1,2\},\{3,4,5\},\{0,1,2,3\},\{0,1,2,4\},\{0,1,3,5\},\{0,2,3,4,5\}\}\\\hline
\{\{0,1,2\},\{3,4,5\},\{0,1,2,3\},\{0,1,3,4\},\{0,3,4,5\},\{0,1,2,4,5\}\}\\\hline
\{\{0,1,2\},\{0,3,4\},\{0,1,2,3\},\{0,1,3,4\},\{0,1,2,5\},\{0,1,2,4,5\},\{0,2,3,4,5\}\}\\\hline
\{\{0,1,2\},\{3,4,5\},\{0,1,2,3\},\{0,1,3,4\},\{2,3,4,5\},\{0,1,2,3,5\},\{0,2,3,4,5\}\}\\\hline
\{\{0,1,2\},\{3,4,5\},\{0,1,2,3\},\{0,1,4,5\},\{2,3,4,5\},\{0,1,2,3,4\},\{0,2,3,4,5\}\}\\\hline
\{\{0,1,2\},\{0,3,4\},\{0,1,2,3\},\{0,1,2,4\},\{0,1,3,4\},\{0,2,3,4\}\}\\\hline
\{\{0,1,2\},\{3,4,5\},\{0,1,2,3\},\{0,1,2,4\},\{0,1,3,5\},\{0,3,4,5\}\}\\\hline
\{\{0,1,2\},\{3,4,5\},\{0,1,2,3\},\{0,1,2,4\},\{0,1,3,4\},\{2,3,4,5\}\}\\\hline
\{\{0,1,2\},\{3,4,5\},\{0,1,2,3\},\{0,1,2,4\},\{0,1,3,5\},\{2,3,4,5\}\}\\\hline
\{\{0,1,2\},\{0,3,4\},\{0,1,2,3\},\{0,1,2,4\},\{0,1,2,5\},\{0,3,4,5\},\{0,1,3,4,5\},\{0,2,3,4,5\}\}\\\hline
\{\{0,1,2\},\{3,4,5\},\{0,1,2,3\},\{0,1,2,4\},\{0,3,4,5\},\{1,3,4,5\},\{0,1,2,3,5\},\{0,1,2,4,5\}\}\\\hline
\{\{0,1,2\},\{3,4,5\},\{0,1,2,3\},\{0,1,2,4\},\{0,3,4,5\},\{1,3,4,5\},\{0,1,2,3,5\},\{0,2,3,4,5\}\}\\\hline
\{\{0,1,2\},\{0,3,4\},\{1,3,5\}\}\\\hline
\{\{0,1,2\},\{0,1,3\},\{0,4,5\}\}\\\hline
\{\{0,1,2\},\{0,1,3\},\{2,3,4\}\}\\\hline
\{\{0,1,2\},\{0,1,3\},\{0,2,4\}\}\\\hline
\{\{0,1,2\},\{0,1,3\},\{0,1,4\}\}\\\hline
\{\{0,1,2\},\{0,1,3\},\{0,2,3\}\}\\\hline
\{\{0,1,2\},\{0,1,3\},\{2,4,5\},\{0,2,3,4,5\}\}\\\hline
\{\{0,1,2\},\{0,1,3\},\{2,4,5\},\{0,1,2,3,4\},\{0,1,3,4,5\}\}\\\hline
\{\{0,1,2\},\{0,1,3\},\{2,4,5\},\{0,2,4,5\}\}\\\hline
\{\{0,1,2\},\{0,1,3\},\{2,4,5\},\{0,1,2,4\},\{0,1,2,3,5\}\}\\\hline
\{\{0,1,2\},\{0,1,3\},\{2,4,5\},\{2,3,4,5\},\{0,1,2,3,4\}\}\\\hline
\{\{0,1,2\},\{0,1,3\},\{2,4,5\},\{2,3,4,5\},\{0,1,3,4,5\}\}\\\hline
\{\{0,1\}\}\\\hline
\{\{0\}\}\\\hline
\end{longtable}
\end{footnotesize}

$\nonFCsix = $

\begin{center}
\begin{footnotesize}
\begin{longtable}{|p{\textwidth}|}
\hline
\{\{0,1,2,3\},\{0,1,2,4\},\{0,1,3,4\},\{0,2,3,4\},\{0,1,2,3,4\},\{1,2,3,4,5\},\{0,1,2,3,4,5\}\}\\\hline
\{\{0,1,2,3\},\{0,1,2,4\},\{0,1,3,4\},\{0,1,2,5\},\{0,1,2,3,4\},\{0,1,2,3,5\},\{0,1,2,4,5\},\{0,2,3,4,5\},\{1,2,3,4,5\},\\
\{0,1,2,3,4,5\}\}\\\hline
\{\{0,1,2,3\},\{0,1,2,4\},\{0,1,3,4\},\{0,2,3,5\},\{0,1,2,3,4\},\{0,1,2,3,5\},\{0,1,2,4,5\},\{0,1,3,4,5\},\{0,2,3,4,5\},\\
\{1,2,3,4,5\},\{0,1,2,3,4,5\}\}\\\hline
\{\{0,1,2,3\},\{0,1,2,4\},\{0,1,3,4\},\{2,3,4,5\},\{0,1,2,3,4\},\{0,1,2,3,5\},\{0,1,2,4,5\},\{0,1,3,4,5\},\{0,2,3,4,5\},\\
\{1,2,3,4,5\},\{0,1,2,3,4,5\}\}\\\hline
\{\{0,1,2,3\},\{0,1,2,4\},\{0,1,3,5\},\{0,1,4,5\},\{0,1,2,3,4\},\{0,1,2,3,5\},\{0,1,2,4,5\},\{0,1,3,4,5\},\{0,2,3,4,5\},\\
\{1,2,3,4,5\},\{0,1,2,3,4,5\}\}\\\hline
\{\{0,1,2,3\},\{0,1,2,4\},\{0,1,3,5\},\{0,2,4,5\},\{0,1,2,3,4\},\{0,1,2,3,5\},\{0,1,2,4,5\},\{0,1,3,4,5\},\{0,2,3,4,5\},\\
\{1,2,3,4,5\},\{0,1,2,3,4,5\}\}\\\hline
\{\{0,1,2,3\},\{0,1,2,4\},\{0,1,3,5\},\{2,3,4,5\},\{0,1,2,3,4\},\{0,1,2,3,5\},\{0,1,2,4,5\},\{0,1,3,4,5\},\{0,2,3,4,5\},\\
\{1,2,3,4,5\},\{0,1,2,3,4,5\}\}\\\hline
\{\{0,1,2,3\},\{0,1,2,4\},\{0,3,4,5\},\{1,3,4,5\},\{0,1,2,3,4\},\{0,1,2,3,5\},\{0,1,2,4,5\},\{0,1,3,4,5\},\{0,2,3,4,5\},\\
\{1,2,3,4,5\},\{0,1,2,3,4,5\}\}\\\hline
\{\{0,1,2,3\},\{0,1,2,4\},\{0,1,3,4\},\{0,2,3,5\},\{1,2,3,5\},\{0,1,2,3,4\},\{0,1,2,3,5\},\{0,1,2,3,4,5\}\}\\\hline
\{\{0,1,2,3\},\{0,1,2,4\},\{0,1,3,4\},\{0,2,3,5\},\{2,3,4,5\},\{0,1,2,3,4\},\{0,1,2,3,5\},\{0,2,3,4,5\},\{0,1,2,3,4,5\}\}\\\hline
\{\{0,1,2,3\},\{0,1,2,4\},\{0,1,3,4\},\{0,1,2,5\},\{2,3,4,5\},\{0,1,2,3,4\},\{0,1,2,3,5\},\{0,1,2,4,5\},\{0,2,3,4,5\},\\
\{0,1,2,3,4,5\}\}\\\hline
\{\{0,1,2,3\},\{0,1,2,4\},\{0,1,3,4\},\{0,2,3,5\},\{1,2,4,5\},\{0,1,2,3,4\},\{0,1,2,3,5\},\{0,1,2,4,5\},\{0,1,3,4,5\},\\
\{0,1,2,3,4,5\}\}\\\hline
\{\{0,1,2,3\},\{0,1,2,4\},\{0,1,3,5\},\{0,1,4,5\},\{2,3,4,5\},\{0,1,2,3,4\},\{0,1,2,3,5\},\{0,1,2,4,5\},\{0,1,3,4,5\},\\
\{0,1,2,3,4,5\}\}\\\hline
\{\{0,1,2,3\},\{0,1,2,4\},\{0,1,3,5\},\{0,2,4,5\},\{1,3,4,5\},\{0,1,2,3,4\},\{0,1,2,3,5\},\{0,1,2,4,5\},\{0,1,3,4,5\},\\
\{0,1,2,3,4,5\}\}\\\hline
\{\{0,1,2,3\},\{0,1,2,4\},\{0,1,3,4\},\{0,1,2,5\},\{0,1,3,5\},\{0,2,4,5\},\{0,1,2,3,4\},\{0,1,2,3,5\},\{0,1,2,4,5\},\\
\{0,1,3,4,5\},\{0,1,2,3,4,5\}\}\\\hline
\{\{0,1,2,3\},\{0,1,2,4\},\{0,1,3,4\},\{0,1,2,5\},\{0,2,3,5\},\{0,3,4,5\},\{0,1,2,3,4\},\{0,1,2,3,5\},\{0,1,2,4,5\},\\
\{0,1,3,4,5\},\{0,2,3,4,5\},\{0,1,2,3,4,5\}\}\\\hline
\{\{0,1,2,3\},\{0,1,2,4\},\{0,1,3,4\},\{0,2,3,5\},\{0,2,4,5\},\{0,3,4,5\},\{0,1,2,3,4\},\{0,1,2,3,5\},\{0,1,2,4,5\},\\
\{0,1,3,4,5\},\{0,2,3,4,5\},\{0,1,2,3,4,5\}\}\\\hline
\{\{0,1,2\},\{0,1,3,4\},\{0,2,3,5\},\{0,1,2,3,4\},\{0,1,2,3,5\},\{1,2,3,4,5\},\{0,1,2,3,4,5\}\}\\\hline
\{\{0,1,2\},\{0,1,3,4\},\{0,2,3,4\},\{0,1,2,3,4\},\{0,1,3,4,5\},\{0,2,3,4,5\},\{1,2,3,4,5\},\{0,1,2,3,4,5\}\}\\\hline
\{\{0,1,2\},\{0,1,2,3\},\{0,1,3,4\},\{0,1,2,3,4\},\{0,1,2,4,5\},\{0,1,3,4,5\},\{0,2,3,4,5\},\{1,2,3,4,5\},\{0,1,2,3,4,5\}\}\\\hline
\{\{0,1,2\},\{0,1,3,4\},\{0,3,4,5\},\{0,1,2,3,4\},\{0,1,2,3,5\},\{0,1,3,4,5\},\{0,2,3,4,5\},\{1,2,3,4,5\},\{0,1,2,3,4,5\}\}\\\hline
\{\{0,1,2\},\{0,1,3,4\},\{2,3,4,5\},\{0,1,2,3,4\},\{0,1,2,3,5\},\{0,1,3,4,5\},\{0,2,3,4,5\},\{1,2,3,4,5\},\{0,1,2,3,4,5\}\}\\\hline
\{\{0,1,2\},\{0,1,2,3\},\{0,1,4,5\},\{0,1,2,3,4\},\{0,1,2,3,5\},\{0,1,2,4,5\},\{0,1,3,4,5\},\{0,2,3,4,5\},\{1,2,3,4,5\},\\
\{0,1,2,3,4,5\}\}\\\hline
\{\{0,1,2\},\{0,1,3,4\},\{0,2,3,4\},\{1,3,4,5\},\{0,1,2,3,4\},\{0,1,3,4,5\},\{0,2,3,4,5\},\{0,1,2,3,4,5\}\}\\\hline
\{\{0,1,2\},\{0,1,2,3\},\{0,1,2,4\},\{0,1,3,4\},\{0,1,2,3,4\},\{0,1,2,3,5\},\{0,2,3,4,5\},\{1,2,3,4,5\},\{0,1,2,3,4,5\}\}\\\hline
\{\{0,1,2\},\{0,1,2,3\},\{0,1,2,4\},\{0,1,3,4\},\{0,1,2,3,4\},\{0,1,3,4,5\},\{0,2,3,4,5\},\{1,2,3,4,5\},\{0,1,2,3,4,5\}\}\\\hline
\{\{0,1,2\},\{0,1,2,3\},\{0,1,3,4\},\{0,3,4,5\},\{0,1,2,3,4\},\{0,1,3,4,5\},\{0,2,3,4,5\},\{1,2,3,4,5\},\{0,1,2,3,4,5\}\}\\\hline
\{\{0,1,2\},\{0,1,2,3\},\{0,1,3,4\},\{2,3,4,5\},\{0,1,2,3,4\},\{0,1,2,3,5\},\{0,2,3,4,5\},\{1,2,3,4,5\},\{0,1,2,3,4,5\}\}\\\hline
\{\{0,1,2\},\{0,1,2,3\},\{0,1,3,4\},\{2,3,4,5\},\{0,1,2,3,4\},\{0,1,3,4,5\},\{0,2,3,4,5\},\{1,2,3,4,5\},\{0,1,2,3,4,5\}\}\\\hline
\{\{0,1,2\},\{0,1,2,3\},\{0,1,4,5\},\{0,3,4,5\},\{0,1,2,4,5\},\{0,1,3,4,5\},\{0,2,3,4,5\},\{1,2,3,4,5\},\{0,1,2,3,4,5\}\}\\\hline
\{\{0,1,2\},\{0,1,2,3\},\{0,1,4,5\},\{2,3,4,5\},\{0,1,2,3,4\},\{0,1,2,4,5\},\{0,2,3,4,5\},\{1,2,3,4,5\},\{0,1,2,3,4,5\}\}\\\hline
\{\{0,1,2\},\{0,1,2,3\},\{0,1,4,5\},\{2,3,4,5\},\{0,1,2,4,5\},\{0,1,3,4,5\},\{0,2,3,4,5\},\{1,2,3,4,5\},\{0,1,2,3,4,5\}\}\\\hline
\{\{0,1,2\},\{0,1,3,4\},\{0,2,3,4\},\{0,1,3,5\},\{0,1,2,3,4\},\{0,1,2,3,5\},\{0,1,3,4,5\},\{0,2,3,4,5\},\{0,1,2,3,4,5\}\}\\\hline
\{\{0,1,2\},\{0,1,3,4\},\{0,3,4,5\},\{2,3,4,5\},\{0,1,2,3,4\},\{0,1,2,3,5\},\{0,1,3,4,5\},\{0,2,3,4,5\},\{0,1,2,3,4,5\}\}\\\hline
\{\{0,1,2\},\{0,1,2,3\},\{0,1,2,4\},\{0,1,3,5\},\{0,1,2,3,4\},\{0,1,2,3,5\},\{0,1,2,4,5\},\{0,2,3,4,5\},\{1,2,3,4,5\},\\
\{0,1,2,3,4,5\}\}\\\hline
\{\{0,1,2\},\{0,1,2,3\},\{0,1,2,4\},\{0,1,3,5\},\{0,1,2,3,4\},\{0,1,2,3,5\},\{0,1,3,4,5\},\{0,2,3,4,5\},\{1,2,3,4,5\},\\
\{0,1,2,3,4,5\}\}\\\hline
\{\{0,1,2\},\{0,1,2,3\},\{0,1,3,4\},\{2,3,4,5\},\{0,1,2,3,4\},\{0,1,2,3,5\},\{0,1,2,4,5\},\{0,1,3,4,5\},\{0,2,3,4,5\},\\
\{0,1,2,3,4,5\}\}\\\hline
\{\{0,1,2\},\{0,1,2,3\},\{0,1,4,5\},\{2,3,4,5\},\{0,1,2,3,4\},\{0,1,2,3,5\},\{0,1,2,4,5\},\{0,1,3,4,5\},\{0,2,3,4,5\},\\
\{0,1,2,3,4,5\}\}\\\hline
\{\{0,1,2\},\{0,1,2,3\},\{0,3,4,5\},\{1,3,4,5\},\{0,1,2,3,4\},\{0,1,2,4,5\},\{0,1,3,4,5\},\{0,2,3,4,5\},\{1,2,3,4,5\},\\
\{0,1,2,3,4,5\}\}\\\hline
\{\{0,1,2\},\{0,1,3,4\},\{0,1,3,5\},\{0,3,4,5\},\{0,1,2,3,4\},\{0,1,2,3,5\},\{0,1,2,4,5\},\{0,1,3,4,5\},\{0,2,3,4,5\},\\
\{0,1,2,3,4,5\}\}\\\hline
\{\{0,1,2\},\{0,1,2,3\},\{0,1,4,5\},\{0,3,4,5\},\{1,3,4,5\},\{0,1,2,4,5\},\{0,1,3,4,5\},\{0,1,2,3,4,5\}\}\\\hline
\{\{0,1,2\},\{0,1,2,3\},\{0,1,3,4\},\{0,1,3,5\},\{2,3,4,5\},\{0,1,2,3,4\},\{0,1,2,3,5\},\{0,1,3,4,5\},\{0,1,2,3,4,5\}\}\\\hline
\{\{0,1,2\},\{0,1,2,3\},\{0,1,3,4\},\{0,1,4,5\},\{0,2,4,5\},\{0,1,2,3,4\},\{0,1,2,4,5\},\{0,1,3,4,5\},\{0,1,2,3,4,5\}\}\\\hline
\{\{0,1,2\},\{0,1,2,3\},\{0,1,3,4\},\{0,1,4,5\},\{2,3,4,5\},\{0,1,2,3,4\},\{0,1,2,4,5\},\{0,1,3,4,5\},\{0,1,2,3,4,5\}\}\\\hline
\{\{0,1,2\},\{0,1,2,3\},\{0,1,3,4\},\{0,3,4,5\},\{2,3,4,5\},\{0,1,2,3,4\},\{0,1,3,4,5\},\{0,2,3,4,5\},\{0,1,2,3,4,5\}\}\\\hline
\{\{0,1,2\},\{0,1,2,3\},\{0,1,4,5\},\{0,3,4,5\},\{2,3,4,5\},\{0,1,2,4,5\},\{0,1,3,4,5\},\{0,2,3,4,5\},\{0,1,2,3,4,5\}\}\\\hline
\{\{0,1,2\},\{0,1,2,3\},\{0,1,2,4\},\{0,1,3,4\},\{2,3,4,5\},\{0,1,2,3,4\},\{0,1,2,3,5\},\{0,1,2,4,5\},\{0,2,3,4,5\},\\
\{0,1,2,3,4,5\}\}\\\hline
\{\{0,1,2\},\{0,1,2,3\},\{0,1,2,4\},\{0,1,3,4\},\{2,3,4,5\},\{0,1,2,3,4\},\{0,1,2,3,5\},\{0,1,3,4,5\},\{0,2,3,4,5\},\\
\{0,1,2,3,4,5\}\}\\\hline
\{\{0,1,2\},\{0,1,2,3\},\{0,1,2,4\},\{0,1,3,5\},\{0,3,4,5\},\{0,1,2,3,4\},\{0,1,2,3,5\},\{0,1,3,4,5\},\{1,2,3,4,5\},\\
\{0,1,2,3,4,5\}\}\\\hline
\{\{0,1,2\},\{0,1,2,3\},\{0,1,2,4\},\{0,1,3,5\},\{2,3,4,5\},\{0,1,2,3,4\},\{0,1,2,3,5\},\{0,1,2,4,5\},\{0,2,3,4,5\},\\
\{0,1,2,3,4,5\}\}\\\hline
\{\{0,1,2\},\{0,1,2,3\},\{0,1,2,4\},\{0,1,3,5\},\{2,3,4,5\},\{0,1,2,3,4\},\{0,1,2,3,5\},\{0,1,3,4,5\},\{0,2,3,4,5\},\\
\{0,1,2,3,4,5\}\}\\\hline
\{\{0,1,2\},\{0,1,2,3\},\{0,1,2,4\},\{0,3,4,5\},\{1,3,4,5\},\{0,1,2,3,4\},\{0,1,3,4,5\},\{0,2,3,4,5\},\{1,2,3,4,5\},\\
\{0,1,2,3,4,5\}\}\\\hline
\{\{0,1,2\},\{0,1,2,3\},\{0,1,3,4\},\{0,1,3,5\},\{0,2,4,5\},\{0,1,2,3,4\},\{0,1,2,3,5\},\{0,1,2,4,5\},\{0,1,3,4,5\},\\
\{0,1,2,3,4,5\}\}\\\hline
\{\{0,1,2\},\{0,1,2,3\},\{0,1,3,4\},\{0,1,3,5\},\{0,3,4,5\},\{0,1,2,3,4\},\{0,1,2,3,5\},\{0,1,3,4,5\},\{0,2,3,4,5\},\\
\{0,1,2,3,4,5\}\}\\\hline
\{\{0,1,2\},\{0,1,2,3\},\{0,1,3,4\},\{0,1,4,5\},\{0,3,4,5\},\{0,1,2,3,4\},\{0,1,2,4,5\},\{0,1,3,4,5\},\{0,2,3,4,5\},\\
\{0,1,2,3,4,5\}\}\\\hline
\{\{0,1,2\},\{0,1,2,3\},\{0,1,3,4\},\{0,2,3,5\},\{0,1,4,5\},\{0,1,2,3,4\},\{0,1,2,3,5\},\{0,1,2,4,5\},\{0,1,3,4,5\},\\
\{0,1,2,3,4,5\}\}\\\hline
\{\{0,1,2\},\{0,1,2,3\},\{0,1,2,4\},\{0,3,4,5\},\{1,3,4,5\},\{0,1,2,3,4\},\{0,1,2,3,5\},\{0,1,2,4,5\},\{0,1,3,4,5\},\\
\{0,2,3,4,5\},\{0,1,2,3,4,5\}\}\\\hline
\{\{0,1,2\},\{0,1,2,3\},\{0,1,3,4\},\{0,2,3,5\},\{0,3,4,5\},\{0,1,2,3,4\},\{0,1,2,3,5\},\{0,1,2,4,5\},\{0,1,3,4,5\},\\
\{0,2,3,4,5\},\{0,1,2,3,4,5\}\}\\\hline
\{\{0,1,2\},\{0,1,2,3\},\{0,1,3,4\},\{0,2,4,5\},\{0,3,4,5\},\{0,1,2,3,4\},\{0,1,2,3,5\},\{0,1,2,4,5\},\{0,1,3,4,5\},\\
\{0,2,3,4,5\},\{0,1,2,3,4,5\}\}\\\hline
\{\{0,1,2\},\{0,1,2,3\},\{0,1,4,5\},\{0,2,4,5\},\{0,3,4,5\},\{0,1,2,3,4\},\{0,1,2,3,5\},\{0,1,2,4,5\},\{0,1,3,4,5\},\\
\{0,2,3,4,5\},\{0,1,2,3,4,5\}\}\\\hline
\{\{0,1,2\},\{0,1,2,3\},\{0,1,2,4\},\{0,1,2,5\},\{0,3,4,5\},\{0,1,2,3,4\},\{0,1,2,3,5\},\{0,1,2,4,5\},\{0,1,3,4,5\},\\
\{0,2,3,4,5\},\{1,2,3,4,5\},\{0,1,2,3,4,5\}\}\\\hline
\{\{0,1,2\},\{0,1,2,3\},\{0,1,2,4\},\{0,1,2,5\},\{0,3,4,5\},\{1,3,4,5\},\{0,1,2,3,4\},\{0,1,2,3,5\},\{0,1,2,4,5\},\\
\{0,1,3,4,5\},\{0,1,2,3,4,5\}\}\\\hline
\{\{0,1,2\},\{0,1,2,3\},\{0,1,2,4\},\{0,1,3,4\},\{0,1,2,5\},\{2,3,4,5\},\{0,1,2,3,4\},\{0,1,2,3,5\},\{0,1,2,4,5\},\\
\{0,1,3,4,5\},\{0,1,2,3,4,5\}\}\\\hline
\{\{0,1,2\},\{0,1,2,3\},\{0,1,2,4\},\{0,1,3,4\},\{0,1,3,5\},\{0,3,4,5\},\{0,1,2,3,4\},\{0,1,2,3,5\},\{0,1,2,4,5\},\\
\{0,1,3,4,5\},\{0,1,2,3,4,5\}\}\\\hline
\{\{0,1,2\},\{0,1,2,3\},\{0,1,2,4\},\{0,1,3,4\},\{0,2,3,5\},\{0,3,4,5\},\{0,1,2,3,4\},\{0,1,2,3,5\},\{0,1,3,4,5\},\\
\{0,2,3,4,5\},\{0,1,2,3,4,5\}\}\\\hline
\{\{0,1,2\},\{0,1,2,3\},\{0,1,2,4\},\{0,1,3,5\},\{0,1,4,5\},\{0,3,4,5\},\{0,1,2,3,4\},\{0,1,2,3,5\},\{0,1,2,4,5\},\\
\{0,1,3,4,5\},\{0,1,2,3,4,5\}\}\\\hline
\{\{0,1,2\},\{0,1,2,3\},\{0,1,2,4\},\{0,1,3,4\},\{0,1,2,5\},\{0,2,3,5\},\{0,1,2,3,4\},\{0,1,2,3,5\},\{0,1,2,4,5\},\\
\{0,1,3,4,5\},\{0,2,3,4,5\},\{0,1,2,3,4,5\}\}\\\hline
\{\{0,1,2\},\{0,1,2,3\},\{0,1,2,4\},\{0,1,3,4\},\{0,1,2,5\},\{0,3,4,5\},\{0,1,2,3,4\},\{0,1,2,3,5\},\{0,1,2,4,5\},\\
\{0,1,3,4,5\},\{0,2,3,4,5\},\{0,1,2,3,4,5\}\}\\\hline
\{\{0,1,2\},\{0,1,2,3\},\{0,1,2,4\},\{0,1,3,4\},\{0,2,3,4\},\{0,1,2,5\},\{0,1,2,3,4\},\{0,1,2,3,5\},\{0,1,2,4,5\},\\
\{0,1,3,4,5\},\{0,2,3,4,5\},\{0,1,2,3,4,5\}\}\\\hline
\{\{0,1,2\},\{0,3,4\},\{0,1,2,3,4\},\{0,1,2,3,5\},\{0,1,3,4,5\},\{1,2,3,4,5\},\{0,1,2,3,4,5\}\}\\\hline
\{\{0,1,2\},\{0,3,4\},\{1,2,3,5\},\{0,1,2,3,4\},\{0,1,2,3,5\},\{0,1,2,4,5\},\{0,1,2,3,4,5\}\}\\\hline
\{\{0,1,2\},\{0,3,4\},\{0,1,2,5\},\{0,1,2,3,4\},\{0,1,2,3,5\},\{0,1,2,4,5\},\{1,2,3,4,5\},\{0,1,2,3,4,5\}\}\\\hline
\{\{0,1,2\},\{3,4,5\},\{0,1,2,3\},\{0,1,2,3,4\},\{0,1,2,4,5\},\{0,1,3,4,5\},\{0,2,3,4,5\},\{1,2,3,4,5\},\\
\{0,1,2,3,4,5\}\}\\\hline
\{\{0,1,2\},\{0,1,3\},\{0,1,2,3\},\{2,3,4,5\},\{0,1,2,4,5\},\{0,1,3,4,5\},\{0,1,2,3,4,5\}\}\\\hline
\{\{0,1,2\},\{0,3,4\},\{0,1,2,3\},\{1,2,4,5\},\{0,1,2,3,4\},\{0,1,2,4,5\},\{0,1,2,3,4,5\}\}\\\hline
\{\{0,1,2\},\{0,3,4\},\{0,1,2,5\},\{1,3,4,5\},\{0,1,2,3,4\},\{0,1,3,4,5\},\{0,1,2,3,4,5\}\}\\\hline
\{\{0,1,2\},\{0,1,3\},\{0,1,2,3\},\{0,2,4,5\},\{0,1,2,3,4\},\{0,1,2,4,5\},\{0,1,3,4,5\},\{0,1,2,3,4,5\}\}\\\hline
\{\{0,1,2\},\{0,1,3\},\{0,1,2,3\},\{0,2,4,5\},\{0,1,2,3,4\},\{0,1,2,4,5\},\{0,2,3,4,5\},\{0,1,2,3,4,5\}\}\\\hline
\{\{0,1,2\},\{0,1,3\},\{0,1,2,3\},\{2,3,4,5\},\{0,1,2,3,4\},\{0,1,2,3,5\},\{0,1,2,4,5\},\{0,1,2,3,4,5\}\}\\\hline
\{\{0,1,2\},\{0,3,4\},\{0,1,2,3\},\{0,1,2,5\},\{0,1,2,3,4\},\{0,1,2,3,5\},\{1,2,3,4,5\},\{0,1,2,3,4,5\}\}\\\hline
\{\{0,1,2\},\{0,3,4\},\{0,1,2,3\},\{0,1,3,5\},\{0,1,2,3,4\},\{0,1,2,3,5\},\{0,1,3,4,5\},\{0,1,2,3,4,5\}\}\\\hline
\{\{0,1,2\},\{0,3,4\},\{0,1,2,3\},\{0,3,4,5\},\{0,1,2,3,4\},\{0,1,2,3,5\},\{1,2,3,4,5\},\{0,1,2,3,4,5\}\}\\\hline
\{\{0,1,2\},\{0,3,4\},\{0,1,2,3\},\{1,2,3,5\},\{0,1,2,3,4\},\{0,1,2,3,5\},\{1,2,3,4,5\},\{0,1,2,3,4,5\}\}\\\hline
\{\{0,1,2\},\{0,3,4\},\{0,1,2,5\},\{0,3,4,5\},\{0,1,2,3,4\},\{0,1,2,3,5\},\{1,2,3,4,5\},\{0,1,2,3,4,5\}\}\\\hline
\{\{0,1,2\},\{0,3,4\},\{0,1,2,3\},\{0,1,3,4\},\{0,1,2,3,4\},\{0,1,2,3,5\},\{0,1,2,4,5\},\{0,2,3,4,5\},\\
\{0,1,2,3,4,5\}\}\\\hline
\{\{0,1,2\},\{3,4,5\},\{0,1,2,3\},\{0,1,4,5\},\{0,1,2,3,4\},\{0,1,2,4,5\},\{0,1,3,4,5\},\{0,2,3,4,5\},\\
\{0,1,2,3,4,5\}\}\\\hline
\{\{0,1,2\},\{0,1,3\},\{0,1,2,3\},\{0,1,2,4\},\{0,3,4,5\},\{0,1,2,3,4\},\{0,1,3,4,5\},\\
\{0,1,2,3,4,5\}\}\\\hline
\{\{0,1,2\},\{0,1,3\},\{0,1,2,3\},\{0,1,2,4\},\{2,3,4,5\},\{0,1,2,3,4\},\{0,1,2,3,5\},\\
\{0,1,2,3,4,5\}\}\\\hline
\{\{0,1,2\},\{0,1,3\},\{0,1,2,3\},\{0,1,2,4\},\{2,3,4,5\},\{0,1,2,3,4\},\{0,1,2,4,5\},\\
\{0,1,2,3,4,5\}\}\\\hline
\{\{0,1,2\},\{0,1,3\},\{0,1,2,3\},\{0,1,2,4\},\{0,2,4,5\},\{0,1,2,3,4\},\{0,1,2,3,5\},\{0,1,2,4,5\},\\
\{0,1,2,3,4,5\}\}\\\hline
\{\{0,1,2\},\{0,3,4\},\{0,1,2,5\},\{0,1,3,5\},\{0,3,4,5\},\{0,1,2,3,4\},\{0,1,2,3,5\},\{0,1,3,4,5\},\\
\{0,1,2,3,4,5\}\}\\\hline
\{\{0,1,2\},\{3,4,5\},\{0,1,2,3\},\{0,1,2,4\},\{0,1,3,4\},\{0,1,2,3,4\},\{0,1,3,4,5\},\{0,2,3,4,5\},\\
\{0,1,2,3,4,5\}\}\\\hline
\{\{0,1,2\},\{3,4,5\},\{0,1,2,3\},\{0,1,3,4\},\{0,1,3,5\},\{0,1,2,3,4\},\{0,1,2,3,5\},\{0,1,3,4,5\},\\
\{0,1,2,3,4,5\}\}\\\hline
\{\{0,1,2\},\{3,4,5\},\{0,1,2,3\},\{0,1,3,4\},\{2,3,4,5\},\{0,1,2,3,4\},\{0,1,2,3,5\},\{0,1,3,4,5\},\\
\{0,1,2,3,4,5\}\}\\\hline
\{\{0,1,2\},\{3,4,5\},\{0,1,2,3\},\{0,1,3,4\},\{2,3,4,5\},\{0,1,2,3,4\},\{0,1,3,4,5\},\{0,2,3,4,5\},\\
\{0,1,2,3,4,5\}\}\\\hline
\{\{0,1,2\},\{3,4,5\},\{0,1,2,3\},\{0,1,4,5\},\{2,3,4,5\},\{0,1,2,3,4\},\{0,1,2,4,5\},\{0,1,3,4,5\},\\
\{0,1,2,3,4,5\}\}\\\hline
\{\{0,1,2\},\{3,4,5\},\{0,1,2,3\},\{0,1,2,4\},\{0,3,4,5\},\{0,1,2,3,4\},\{0,1,2,3,5\},\{0,1,3,4,5\},\\
\{1,2,3,4,5\},\{0,1,2,3,4,5\}\}\\\hline
\{\{0,1,2\},\{3,4,5\},\{0,1,2,3\},\{0,1,2,4\},\{0,3,4,5\},\{0,1,2,3,4\},\{0,1,3,4,5\},\{0,2,3,4,5\},\\
\{1,2,3,4,5\},\{0,1,2,3,4,5\}\}\\\hline
\{\{0,1,2\},\{3,4,5\},\{0,1,2,3\},\{0,1,3,4\},\{0,3,4,5\},\{0,1,2,3,4\},\{0,1,2,3,5\},\{0,1,3,4,5\},\\
\{0,2,3,4,5\},\{0,1,2,3,4,5\}\}\\\hline
\{\{0,1,2\},\{0,3,4\},\{0,1,2,3\},\{0,1,2,4\},\{0,1,2,5\},\{0,1,2,3,4\},\{0,1,2,3,5\},\{0,1,2,4,5\},\\
\{0,1,3,4,5\},\{0,2,3,4,5\},\{0,1,2,3,4,5\}\}\\\hline
\{\{0,1,2\},\{0,3,4\},\{0,1,2,3\},\{0,1,2,4\},\{0,3,4,5\},\{0,1,2,3,4\},\{0,1,2,3,5\},\{0,1,2,4,5\},\\
\{0,1,3,4,5\},\{0,2,3,4,5\},\{0,1,2,3,4,5\}\}\\\hline
\{\{0,1,2\},\{0,3,4\},\{0,1,2,3\},\{0,1,2,5\},\{0,3,4,5\},\{0,1,2,3,4\},\{0,1,2,3,5\},\{0,1,2,4,5\},\\
\{0,1,3,4,5\},\{0,2,3,4,5\},\{0,1,2,3,4,5\}\}\\\hline
\{\{0,1,2\},\{3,4,5\},\{0,1,2,3\},\{0,1,2,4\},\{0,1,3,4\},\{0,3,4,5\},\{0,1,2,3,4\},\{0,1,3,4,5\},\\
\{0,1,2,3,4,5\}\}\\\hline
\{\{0,1,2\},\{3,4,5\},\{0,1,2,3\},\{0,1,2,4\},\{0,3,4,5\},\{1,3,4,5\},\{0,1,2,3,4\},\{0,1,2,3,5\},\\
\{0,1,3,4,5\},\{0,1,2,3,4,5\}\}\\\hline
\{\{0,1,2\},\{3,4,5\},\{0,1,2,3\},\{0,1,2,4\},\{0,1,2,5\},\{0,3,4,5\},\{0,1,2,3,4\},\{0,1,2,3,5\},\\
\{0,1,2,4,5\},\{1,2,3,4,5\},\{0,1,2,3,4,5\}\}\\\hline
\{\{0,1,2\},\{3,4,5\},\{0,1,2,3\},\{0,1,2,4\},\{0,1,3,4\},\{0,1,2,5\},\{0,1,2,3,4\},\{0,1,2,3,5\},\\
\{0,1,2,4,5\},\{0,1,3,4,5\},\{0,1,2,3,4,5\}\}\\\hline
\{\{0,1,2\},\{3,4,5\},\{0,1,2,3\},\{0,1,2,4\},\{0,1,2,5\},\{0,3,4,5\},\{0,1,2,3,4\},\{0,1,2,3,5\},\\
\{0,1,2,4,5\},\{0,1,3,4,5\},\{0,2,3,4,5\},\{0,1,2,3,4,5\}\}\\\hline
\{\{0,1,2\},\{0,3,4\},\{0,1,2,3\},\{0,1,2,4\},\{0,1,3,4\},\{0,1,2,5\},\{0,3,4,5\},\{0,1,2,3,4\},\\
\{0,1,2,3,5\},\{0,1,2,4,5\},\{0,1,3,4,5\},\{0,1,2,3,4,5\}\}\\\hline
\{\{0,1,2\},\{0,1,3\},\{0,1,2,3\},\{0,1,2,4\},\{0,1,3,4\},\{0,1,2,5\},\{0,1,3,5\},\{0,1,2,3,4\},\\
\{0,1,2,3,5\},\{0,1,2,4,5\},\{0,1,3,4,5\},\{0,2,3,4,5\},\{0,1,2,3,4,5\}\}\\\hline
\{\{0,1,2\},\{0,1,3\},\{0,1,2,3\},\{0,1,2,4\},\{0,1,3,4\},\{0,1,2,5\},\{0,1,3,5\},\{0,1,4,5\},\\
\{0,1,2,3,4\},\{0,1,2,3,5\},\{0,1,2,4,5\},\{0,1,3,4,5\},\{0,1,2,3,4,5\}\}\\\hline
\{\{0,1,2\},\{0,1,3\},\{2,4,5\},\{0,1,2,3\},\{0,1,2,4,5\},\{0,1,3,4,5\},\{0,1,2,3,4,5\}\}\\\hline
\{\{0,1,2\},\{0,1,3\},\{2,4,5\},\{0,1,2,3\},\{0,1,2,3,4\},\{0,1,2,3,5\},\{0,1,2,4,5\},\{0,1,2,3,4,5\}\}\\\hline
\{\{0,1,2\},\{0,1,3\},\{2,4,5\},\{0,1,2,3\},\{2,3,4,5\},\{0,1,2,4,5\},\{0,1,2,3,4,5\}\}\\\hline
\{\{0,1,2\},\{0,1,3\},\{2,4,5\},\{0,1,2,3\},\{0,1,2,4\},\{0,1,2,3,4\},\{0,1,2,4,5\},\{0,1,2,3,4,5\}\}\\\hline
\{\{0,1,2\},\{0,1,3\},\{2,4,5\},\{0,1,2,3\},\{0,1,3,4\},\{0,1,2,3,4\},\{0,1,2,4,5\},\{0,1,2,3,4,5\}\}\\\hline
\end{longtable}
\end{footnotesize}
\end{center}
\end{definition}

\begin{theorem}
  \label{thm:FCsixNonFCsix}
  All families of $\FCsix$ are FC-families. All families of
  $\nonFCsix$ are nonFC-families.
\end{theorem}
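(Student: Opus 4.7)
The plan is to verify the two parts of the theorem independently, treating each family in the lists $\FCsix$ and $\nonFCsix$ as a separate verification task, since the statement is an explicit conjunction over finitely many (though very many) families. In each case the computational witnesses needed for the verification are obtained outside Isabelle/HOL using the semi-automated procedure of Section \ref{sec:procedureFCstatus} (i.e., Pulaj-style alternation between searching for a candidate weight function and searching for a counterexample union-closed extension), while the formal proofs inside Isabelle/HOL only check the conditions of the two key theorems from Sections \ref{sec:provefc} and \ref{sec:provenotfc}.

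For the FC part, I would, for each $F_c \in \FCsix$, invoke Theorem \ref{thm:FC_uce_shares_nonneg}: it suffices to exhibit a weight function $w$ on $\Union F_c$ (represented concretely as a tuple of natural numbers) and then show $\forall F \in \uce{F_c}.\ \fs{F}{w}{(\Union{F_c})} \ge 0$. I would discharge the universally quantified share inequality by running the verified procedure $\ssn{F_c}{w}$ and appealing to Proposition \ref{lemma:ssn_correct}: if the optimized implementation $\ssnauxa{\cdot}{\cdot}{F_c}{s_w}$ evaluates to $\bot$ (after precomputing the share lookup table $s_w$ from the witness $w$), then no union-closed extension of $F_c$ can have a negative share, so $F_c$ is FC. For families where this evaluation is too slow inside Isabelle/HOL, the alternative is to encode the non-existence of a negative-share extension as a linear integer system (as described at the end of Section \ref{sec:provefc}) and discharge it through the Z3 integration of Isabelle/HOL.

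For the nonFC part, I would, for each $N_c \in \nonFCsix$, apply Theorem \ref{thm:nonFC} directly: I need to supply a finite sequence of families $F_0, \ldots, F_k$ in $\uce{N_c}$ together with nonnegative integer coefficients $c_0, \ldots, c_k$ (not all zero), and then formally check the three conditions. The first and third conditions (membership in $\uce{N_c}$ and non-triviality of the $c_i$) are concrete computations --- union-closure for $N_c$ and inclusion in $\pow{\Union N_c}$ are executable predicates, and non-triviality is trivial --- while the second condition, the strict negativity of $\sum_i c_i \cdot (2\cdot\cnt{a}{F_i} - \card{F_i})$ for every $a \in \Union N_c$, reduces to evaluating at most six integer sums. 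The witness sequences are obtained from the infeasibility certificate produced by SCIP in the first phase of the procedure of Section \ref{sec:procedureFCstatus} (solving the dual system).

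The main obstacle is not mathematical but scale and performance: the two lists together contain on the order of a couple of hundred families over a 6-element universe, each requiring its own nontrivial witness, and the $\ssn{F_c}{w}$ evaluation on $2^6 = 64$ candidate subsets can be expensive inside Isabelle/HOL's evaluator. Consequently, the proof is organized as a script that iterates over the explicit list, and the practical difficulty lies in ensuring that (i) every witness found externally by SCIP is faithfully imported into Isabelle/HOL, (ii) the executable definitions of $\ssnauxa{\cdot}{\cdot}{F_c}{s_w}$, $\cnt{a}{F_i}$, and $\uca{N_c}{F_i}$ reduce in reasonable time under Isabelle's code generator or the simplifier, and (iii) one falls back on the SMT-based discharge in the few cases where the specialized procedure is too slow. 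Once those infrastructure issues are handled, the mathematical content of each case is simply an application of Theorem \ref{thm:FC_uce_shares_nonneg} or Theorem \ref{thm:nonFC} to a concrete witness.
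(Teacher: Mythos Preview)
Your proposal is correct and follows essentially the same approach as the paper: for each $F_c \in \FCsix$ apply Theorem~\ref{thm:FC_uce_shares_nonneg} with an externally found weight witness and discharge the share condition via the verified $\ssn{F_c}{w}$ procedure (Proposition~\ref{lemma:ssn_correct}), and for each $N_c \in \nonFCsix$ apply Theorem~\ref{thm:nonFC} with externally found families $F_i$ and coefficients $c_i$, checking the three conditions by direct computation. The only minor deviation is that the paper's proof relies solely on the $\ssn{}{}$ procedure for the FC part rather than using the SMT/Z3 encoding as a fallback, but your inclusion of that alternative does no harm.
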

\begin{proof}
  The first part is proved by direct computation based on Theorem
  \ref{thm:FC_uce_shares_nonneg} and Proposition
  \ref{lemma:ssn_correct} (due to the lack of space, weights are not
  printed). The other part is proved by direct computation based on
  Theorem \ref{thm:nonFC} (due to the lack of space, families $F_i$
  and coefficients $c_i$ are not printed)\footnote{All data used for
    proofs can be found in our original proof documents, available
    online at \url{http://argo.matf.bg.ac.rs/formalizations/FCFamilies.zip}}.
\end{proof}

Collections $\FCsix$ and $\nonFCsix$ have some other nice properties
(that we do not formally prove within Isabelle/HOL as they are not
necessary for our main theorem). Every family is canonical (the
smallest family in lexicographic order among the families obtained by
applying all permutations of $\setn{n}$ to it, where the sets are
compared first by their number of ellements, and if the number of
elements is the same, then lexicographically). All families in
$\FCsix$ and all families in $\nonFCsix$ are irreducible. All families
in $\FCsix$ are minimal FC-families i.e., no family is FC-covered by
other families in $\FCsix$. All families in $\nonFCsix$ are maximal
nonFC-families i.e., no family is nonFC-covered by other families in
$\nonFCsix$. Families in the table are printed as they are discovered
--- in lexicographic order of their $L$-partition lists.

The next lemma gives a full characterization of families of $\nfam{6}$
that we call \emph{semi-uniform}. These are the families such that
their FC-status (whether they are FC or nonFC) is known only from the
number of members of certain cardinality (and it does not depend on
the arrangement of elements in these family members).

\begin{definition}\ \\
  $\mathcal{L}_F = \{$ 
  $[0, 0, 0, 0, 5, 6, 0]$,
  $[0, 0, 0, 0, 7, 0, 0]$,
  $[0, 0, 0, 1, 6, 5, 0]$,
  $[0, 0, 0, 2, 0, 6, 0]$,\\
  $[0, 0, 0, 3, 0, 4, 0]$,
  $[0, 0, 0, 3, 2, 3, 0]$,
  $[0, 0, 0, 3, 3, 0, 0]$,
  $[0, 0, 0, 4, 0, 0, 0]$,\\
  $[0, 0, 1, 0, 0, 0, 0]$,
  $[0, 1, 0, 0, 0, 0, 0]$ $\}$.

  \noindent $\mathcal{L}_N = \{$
  $[0, 0, 0, 0, 3, 6, 1]$,
  $[0, 0, 0, 0, 4, 1, 1]$,
  $[0, 0, 0, 1, 1, 6, 1]$,
  $[0, 0, 0, 1, 2, 1, 1]$,\\
  $[0, 0, 0, 2, 0, 1, 1]$  $\}$.
\end{definition}

For example, by Theorem \ref{thm:FCsixNonFCsix}, Lemma
\ref{lemma:FCcoveredFC}, and the following lemma (Lemma
\ref{lemma:allFCcovered}), since
$[0, 0, 0, 0, 5, 6, 0] \in \mathcal{L}_F$, it holds that if a family
contains $5$ four-element, and $6$ five-element sets (all contained in
a six-element set), then it is an FC-family. Similarly, since
$[0, 0, 0, 0, 3, 6, 1] \in \mathcal{L}_N$ it holds that that if all
sets of a family (all contained in a six-element set) have at least
four elements and the family contains only up to $3$ four-element
sets, $6$ five-element sets and $1$ six-element sets, then it is not
an FC-family.

\begin{lemma}
  \label{lemma:allFCcovered}
  For every list $L \in \mathcal{L}_F$ all $L$-partitioned
  families of $\nfam{6}$ are FC-covered by $\FCsix$. For every
  list $L \in \mathcal{L}_N$, all $L$-partitioned families of
  $\nfam{6}$ are nonFC-covered by $\nonFCsix$.
\end{lemma}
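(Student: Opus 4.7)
The plan is to reduce the statement to a finite, mechanically checkable computation using the enumeration machinery already developed in Section \ref{sec:enum}. For each list $L \in \mathcal{L}_F \cup \mathcal{L}_N$ the family space $\LnP{L}{6}{P}$ is finite, and by Lemma \ref{lemma:generates_covered} it suffices to exhibit, for each such $L$, an iso-representing collection $\mathcal{F}_b$ of $\LnP{L}{6}{P}$ and verify covering on $\mathcal{F}_b$ alone. By Lemma \ref{lemma:indep} and Proposition \ref{prop:covered_closure} we may further restrict attention to irreducible families, since closure-equivalent families share the same covering status. So I take $P = \indep{\cdot}$ and choose $\Pinc{P}$ to be the incremental irreducibility check, which (as already used in Section \ref{sec:enum} for the characteristic-family discovery procedure) incrementally checks $P$ and is preserved by injective functions.

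Next I would apply Theorem \ref{lemma:enum_dp_mult} with $L_{max}$ being the pointwise supremum of $\mathcal{L}_F$ (for the first half of the claim) and the empty $stop$-set (or $\mathcal{L}_s$ chosen to cap the search), feeding the full permutation list $\mathcal{P}$ of $\listn{6}$. The theorem then yields a list containing, for each $L \preceq L_{max}$ (hence in particular for each $L \in \mathcal{L}_F$), an iso-base $\mathcal{F}_b^L$ of the irreducible $L$-partitioned families in $\nfam{6}$. For each such $L$ and each $F \in \mathcal{F}_b^L$ I then check, by direct evaluation, that there exists $F_c \in \FCsix$ and a permutation $\pi$ of $\Union F$ such that $\mapset{\pi}{F_c} \subseteq \closure{F}$, which is exactly the definition of $\FCcovered{F}{F_c}$. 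For the second half I do the analogous enumeration using the list structure of $\mathcal{L}_N$, and for each $F$ in each iso-base I search for $N_c \in \nonFCsix$ and a permutation $\pi$ with $\closure{F} \subseteq \mapset{\pi}{\closure{N_c}} \union \{\emptyset\}$. Applying Lemma \ref{lemma:generates_covered} then upgrades coverage on the iso-base to coverage on all $L$-partitioned families, and finally Lemma \ref{lemma:indep} (invoked implicitly via the irreducibility restriction) and Proposition \ref{prop:covered_closure} handle the reducible ones by identifying them with their irreducible core.

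The main obstacle is not mathematical but computational: even after iso-base reduction, the enumeration of irreducible $[0,0,0,0,5,6,0]$- or $[0,0,0,1,6,5,0]$-partitioned families of $\nfam{6}$ is nontrivial, and for each iso-base element we must search $\FCsix$ (respectively $\nonFCsix$) together with $720$ permutations of $\listn{6}$ and compute closures. Two factors keep this tractable. First, the dynamic programming structure of $\enumdpname$ shares sub-enumerations across lists and, together with iso-basing via $\base{\cdot}{\mathcal{P}}$, keeps the intermediate collections small. Second, covering is monotone (Proposition \ref{prop:covered}) and the predicate $\Pinc{P}$ may be strengthened, during the search, to incorporate $\neg\FCcovered{F \cup \{A\}}{\FCsix}$, so that any prefix family already FC-covered is immediately dropped; this is exactly the pruning used during the discovery phase in Section \ref{sec:enum}, and it shrinks the iso-bases drastically.

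Once the above evaluations succeed inside Isabelle/HOL the lemma follows directly: for each $L \in \mathcal{L}_F$, the finite verification yields $\FCcovered{\mathcal{F}_b^L}{\FCsix}$, and Lemma \ref{lemma:generates_covered} lifts this to the statement $\forall F \in \LnP{L}{6}{\indep{\cdot}}.\ \FCcovered{F}{\FCsix}$; any reducible $L$-partitioned $F$ is covered because by Proposition \ref{prop:ex_indep} and Lemma \ref{lemma:unique_irreducible} it has a unique irreducible companion with the same closure, whose covering (by Proposition \ref{prop:covered_closure}) transfers back to $F$. The argument for $\mathcal{L}_N$ and $\nonFCsix$ is word-for-word symmetric.
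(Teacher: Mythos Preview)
There is a genuine gap in your reduction to irreducible families. Lemma~\ref{lemma:indep} and Proposition~\ref{prop:covered_closure} let you pass from a family $F$ to its unique irreducible companion $F'$, but $F'$ is in general \emph{not} $L$-partitioned: removing dependent sets strictly decreases some $l_i$, so $F'$ lands in $\LnP{L''}{6}{\indep{\cdot}}$ for some $L'' \prec L$, and you have established nothing about that collection. This is not a technicality that can be patched by Proposition~\ref{prop:covered} either, since the monotonicity there runs the wrong way for at least one of the two halves. A concrete failure: for $L = [0,0,0,0,3,6,1] \in \mathcal{L}_N$ every $L$-partitioned family in $\nfam{6}$ contains all six five-element subsets of $\setn{6}$, and the six-element set is the union of any two of them, so \emph{every} such family is reducible. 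Your iso-base $\mathcal{F}_b^L$ of irreducible $L$-partitioned families is therefore empty, the check on it is vacuous, and the transfer to reducible families via Proposition~\ref{prop:covered_closure} sends you to $L''$-partitioned companions about which you have proved nothing.

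The paper sidesteps the issue by dropping irreducibility entirely for this lemma. For the $\mathcal{L}_F$ half it takes $P = \lambda F.\ \neg(\FCcovered{F}{\FCsix})$ with $\Pinc{P} = \lambda F\,A.\ \neg(\FCcovered{F\cup\{A\}}{\FCsix})$; this $\Pinc{P}$ incrementally checks $P$ (by Proposition~\ref{prop:covered}(1)) and is preserved by injective functions, so Theorem~\ref{lemma:enum_rec_mult} applies and the computation shows the iso-base of non-covered $L$-partitioned families is empty, directly giving the claim for \emph{all} $L$-partitioned families. For the $\mathcal{L}_N$ half it takes $P = \Pinc{P} = \top$, enumerates an iso-base of all $L$-partitioned families, and checks by evaluation that each member is nonFC-covered by $\nonFCsix$; Lemma~\ref{lemma:generates_covered} then finishes. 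Both halves use the simpler $\enumrecname$ per list rather than the dynamic-programming traversal; your proposed DP route would also work once the predicate is corrected, but offers no saving here since the lists in $\mathcal{L}_F$ and $\mathcal{L}_N$ are few and not nested.
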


The proof of this lemma is available in the Appendix, and is based on
generating (by applying the recursive enumeration) an iso-base of all
irreducible, $L$-partitioned families (for all $L \in \mathcal{L}_F$)
that are not covered by $\FCsix$ and showing that it is empty, and on
generating (again by applying the recursive enumeration) an iso-base
of all irreducible, $L$-partitioned families (for all
$L \in \mathcal{L}_N$) and showing that all its elements are covered
by $\nonFCsix$.

Lists in $\mathcal{L}_F$ give sufficient conditions for a family to
satisfy Frankl's condition and whenever a family extends some of these
families it is know that it is an FC-family. Therefore, we can focus
our attention only the families that are $L$-partitioned for lists
that are less then lists in $\mathcal{L}_F$.
\begin{definition}
  $\mathfrak{L}_F = \{L.\ L \preceq [1, 6, 15, 20, 15, 6, 1]\ \And\
  \not\exists L' \in \mathcal{L}_F.\ L \succeq L'\}$
\end{definition}

Note that pruning is very efficient and only a very small percentage
of possible lists belongs to $\mathfrak{L}_F$ (out of 4704 lists that
do not allow empty sets, singletons and doubletons, only 296 are in
$\mathfrak{L}_F$ --- if singletons and doubletons are allowed, then
there are more than a million possible lists). Therefore, most
families in $\nfam{6}$ are FC-families.

Moreover, as shown by the following lemma proved in the Appendix, it
suffices to consider only a iso-representing set of irreducible,
L-partitioned families, for $L \in \mathfrak{L}_F$ that contain no
empty set.

\begin{lemma}
  \label{lemma:nonAllFCpartitions}
  If for all $L \in \mathfrak{L}_F$, there exists a collection
  $\mathcal{F}^L_b$ that iso-represents $\LnPirsix$ such that
  $\covered{\mathcal{F}^L_b}{\FCsix}{\nonFCsix}$, then
  $\covered{\nfam{6}}{\FCsix}{\nonFCsix}$.
\end{lemma}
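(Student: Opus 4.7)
The plan is to reduce the statement to checking irreducible families in $\nfam{6}$, and then to split those by their cardinality profile $L$ into two exhaustive cases: either $L$ dominates a list from $\mathcal{L}_F$ (so FC-coverage comes ``for free'' from Lemma \ref{lemma:allFCcovered}), or $L$ lies in $\mathfrak{L}_F$ (so the hypothesis directly applies).

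First, by Lemma \ref{lemma:indep} it suffices to prove that every irreducible family $F \in \nfam{6}$ is covered by $\FCsix$ and $\nonFCsix$. Fix such an $F$, and let $L = [l_0, l_1, \ldots, l_6]$ be its cardinality profile, where $l_i = |\{A \in F.\ |A| = i\}|$, so that $F$ is $L$-partitioned. Proposition \ref{prop:lpart} gives $L \preceq [1, 6, 15, 20, 15, 6, 1]$.

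Now I split on whether there exists $L' \in \mathcal{L}_F$ with $L \succeq L'$. If yes, Lemma \ref{lemma:allFCcovered} asserts that every $L'$-partitioned family of $\nfam{6}$ is FC-covered by $\FCsix$, and Proposition \ref{prop:covered_mono}(1) lifts this monotonically along $\preceq$: every $L$-partitioned family of $\nfam{6}$ (in particular $F$) is FC-covered by $\FCsix$. If no such $L'$ exists, then together with $L \preceq [1, 6, 15, 20, 15, 6, 1]$ this is exactly the defining condition of $\mathfrak{L}_F$, so $L \in \mathfrak{L}_F$. The hypothesis of the lemma then supplies an iso-representing collection $\mathcal{F}^L_b$ of $\LnPirsix$ with $\covered{\mathcal{F}^L_b}{\FCsix}{\nonFCsix}$, and Lemma \ref{lemma:generates_covered} propagates covering to the whole of $\LnPirsix$. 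Since $F$ is irreducible and $L$-partitioned in $\nfam{6}$, it lies in $\LnPirsix$, hence is covered.

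There is no real obstacle: the proof is simply an assembly of Lemma \ref{lemma:indep}, Proposition \ref{prop:lpart}, Lemma \ref{lemma:allFCcovered}, Proposition \ref{prop:covered_mono}, and Lemma \ref{lemma:generates_covered}. The only mildly delicate bookkeeping concerns treating lists of different lengths uniformly (zero-padding each profile to length $7$) so that the $\preceq$-comparisons with $[1,6,15,20,15,6,1]$ and with the lists in $\mathcal{L}_F$ are meaningful and the two cases are jointly exhaustive.
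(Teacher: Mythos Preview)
Your proof is correct and follows essentially the same route as the paper: reduce to irreducible families via Lemma~\ref{lemma:indep}, take the $L$-partition, and split on whether $L \succeq L'$ for some $L' \in \mathcal{L}_F$ (using Lemma~\ref{lemma:allFCcovered} with Proposition~\ref{prop:covered_mono}) or $L \in \mathfrak{L}_F$ (using the hypothesis with Lemma~\ref{lemma:generates_covered}). The paper additionally sub-splits the second case on $l_0 \in \{0,1\}$ and handles $l_0 = 1$ via $F \setminus \{\emptyset\}$ and Proposition~\ref{prop:covered}(3); for the lemma as stated this extra split is unnecessary, since the hypothesis already ranges over all $L \in \mathfrak{L}_F$, and your direct argument that $F \in \LnPirsix$ is valid.
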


Finally, we can show that all families of $\nfam{6}$ are covered by
our collections $\FCsix$ and $\nonFCsix$. The proof is given in the
Appendix, and relies on using dynamic programming enumeration to
enumerate all elements of iso-bases of irreducible, L-partitioned
families for $L \in \mathfrak{L}_F$ that contain no empty set and are
not covered by $\FCsix$, and then showing that all of them are covered
by $\nonFCsix$.

\begin{theorem}
\label{thm:allcovered}
  $\covered{\nfam{6}}{\FCsix}{\nonFCsix}$
\end{theorem}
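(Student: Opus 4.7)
The plan is to reduce Theorem \ref{thm:allcovered} to a finite computation that Isabelle/HOL can discharge by direct evaluation of the already-verified enumeration machinery. By Lemma \ref{lemma:nonAllFCpartitions}, it suffices to exhibit, for every $L \in \mathfrak{L}_F$, an iso-representing collection $\mathcal{F}^L_b$ of $\LnPirsix$ and to show $\covered{\mathcal{F}^L_b}{\FCsix}{\nonFCsix}$.

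First I would instantiate Theorem \ref{lemma:enum_dp_mult} with $n = 6$, $L_{max} = [1, 6, 15, 20, 15, 6, 1]$, stopping set $\mathcal{L}_s = \mathcal{L}_F$, permutation list $\mathcal{P}$ containing all $720$ permutations of $\listn{6}$, $v_{\emptylist} = \{\emptyset\}$, and the predicate
\[
P\ F \;\equiv\; \indep{F}\ \wedge\ \neg\,\FCcovered{F}{\FCsix},
\]
with incremental counterpart
\[
\Pinc{P}\ F\ A \;\equiv\; \neg\,\expressible{A}{F}\ \wedge\ \neg\,\FCcovered{(F \cup \{A\})}{\FCsix}.
\]
Routine checks give $P\ \{\}$, that $\Pinc{P}$ incrementally checks $P$ when $A$ is added at maximum cardinality, and that $\Pinc{P}$ is preserved by injective functions (both conjuncts are isomorphism-invariant by Proposition \ref{prop:covered}). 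With these parameters the set $X$ of Theorem \ref{lemma:enum_dp_mult} coincides with $\mathfrak{L}_F$, so $\enumdp{v_{\emptylist}}{upd}{stop}{L_{max}}$ produces, for every $L \in \mathfrak{L}_F$, an iso-base $\mathcal{F}^L_{\mathrm{enum}}$ of $\LnP{L}{6}{P}$.

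Second, I would fix an arbitrary iso-base $\mathcal{F}^L_b$ of $\LnPirsix$ and argue covering element-wise: each $F \in \mathcal{F}^L_b$ either is FC-covered by $\FCsix$ (in which case it is covered by definition), or fails that and therefore satisfies $P$, hence is isomorphic to some $F' \in \mathcal{F}^L_{\mathrm{enum}}$. By isomorphism invariance of nonFC-coverage (Proposition \ref{prop:covered}) it then suffices to show $\nonFCcovered{F'}{\nonFCsix}$. The remaining obligation is to check by direct evaluation in Isabelle/HOL that every $F' \in \bigcup_{L \in \mathfrak{L}_F} \mathcal{F}^L_{\mathrm{enum}}$ is nonFC-covered by $\nonFCsix$; together with Theorem \ref{thm:FCsixNonFCsix} and Lemma \ref{lemma:FCcoveredFC} this gives $\covered{\mathcal{F}^L_b}{\FCsix}{\nonFCsix}$ for every $L \in \mathfrak{L}_F$, and Lemma \ref{lemma:nonAllFCpartitions} concludes.

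The main obstacle is computational rather than conceptual. The raw space of families in $\nfam{6}$ is astronomical, and even after restricting to irreducible representatives and pruning by $\mathfrak{L}_F$ and by FC-coverage, the inductive construction still emits many intermediate iso-bases whose generation, isomorphism reduction, and covering checks must all be executed by Isabelle/HOL's code generator. The feasibility of the proof therefore rests on efficient data structures for families and closures, fast FC-coverage and iso-base tests, and aggressive use of $\Pinc{P}$-pruning; none of these raise new mathematical content, but they determine whether the final evaluation actually terminates in practice.
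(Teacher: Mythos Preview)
Your proposal is correct and follows essentially the same approach as the paper: instantiate the dynamic-programming enumeration (Theorem~\ref{lemma:enum_dp_mult}) with the predicate $P\ F \equiv \indep{F} \wedge \neg(\FCcovered{F}{\FCsix})$, verify by direct evaluation that every family produced is nonFC-covered by $\nonFCsix$, and conclude via Lemma~\ref{lemma:nonAllFCpartitions}. The only difference is cosmetic: the paper takes $L_{max} = [0,6,15,20,15,6,1]$ (skipping families containing $\emptyset$, which Lemma~\ref{lemma:nonAllFCpartitions}'s proof already reduces to the $l_0=0$ case), whereas you take $L_{max} = [1,6,15,20,15,6,1]$ so that $X$ literally equals $\mathfrak{L}_F$; this costs a little redundant computation but changes nothing conceptually.
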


\section{Experimental results}
\label{sec:experiments}

All experiments have been done on a notebook computer with Intel(R)
Core(TM) 2.3GHz CPU with 4MB RAM memory, running Linux.

In the first phase minimal FC and maximal nonFC-families were
automatically detected, as described in Section
\ref{sec:procedureFCstatus}. The process took around 150 minutes, and
most of the time was consumed by the SCIP ILP solver for checking if
there is a family with the negative share wrt. the current candidate
weight function. During the process, the status of 197 minimal FC
families was checked, along with the status of 1125 nonFC-families
(these families form an iso-base of all irreducible nonFC-families
with up to six elements). Note that in the region bounded from above
by $\mathcal{L}_F$ there are 12877 FC-families that form an iso-base
of all irreducible FC-families in that region, confirming that even in
that region there are much more FC than nonFC-families, and that it is
very important that during enumeration all families covered by smaller
FC-families are excluded, so that only minimal FC-families are
explicitly checked. Possible optimizations might include switching
from integer to rational weights and reducing the number of
nonFC-families that need to be explicitly checked (e.g., by carefully
walking along the line between maximal nonFC and minimal FC-families).

The Isabelle/HOL formalization consumes around 1,2MB organized into
around 20,000 lines of Isabelle/Isar proof text (approximately a half
of that are automatically generated proofs for 197 minimal FC-families
and 115 maximal nonFC-families). Total proof checking time by
Isabelle/HOL takes around 30 minutes. The major fraction of this time
goes to checking the proofs of 197 minimal FC-families (around 6
minutes) and 115 maximal nonFC-families (around 16 minutes), and for
proving that all other families are covered (around 4 minutes).

This is significantly slower than unverified programs that preform the
same calculations. The big difference is due to the use of
machine-integers supporting atomic bitwise-or for finding set unions
(and that operation is heavily trough out the whole
formalization). Therefore, the proof checking time could be
significantly reduced if machine-integers were also used in
Isabelle/ML (a support for this has been added to Isabelle recently
\cite{isabelle-machine-words}).

Interestingly, although proving the status of nonFC-families does not
involve search and proving the status of FC-families does, it turned
out that nonFC-families consume more time and that currently, the most
demanding part was to prove that all witness families belong to the
union-closed extension (the first point of Theorem \ref{thm:nonFC}).

There is much room for improving the proof checking efficiency, but we
did not do that since most of time is consumed by automated
classification procedure and it should be the main focus for further
optimization.

\section{Conclusions and Further Work}
\label{sec:conclusions}

In this paper, we have described a fully automated and mechanically
verified method for classifying families into Frankl-complete (FC) and
non Frankl-complete (nonFC), and applied it to obtain a full
characterization of all families over a six element universe.

We have shown that status of any family over the six-element universe
can be easily determined by knowing the status of only a very small
number of characteristic families (FC-minimal and nonFC-maximal
families) and we have shown that our list of 197 FC and 115
nonFC-families covers all $2^{2^6}$ families over the six-element
universe (their vast majority being FC). All known FC-families are
confirmed and a new uniform FC-family is discovered (as a simple
corollary of our classification we have that FC($4$, $6$) = $7$).

Compared to the prior pen-and-paper work \cite{frankl-morris}, the
computer assisted approach significantly reduces the complexity of
mathematical arguments behind the proof and employs
computing-machinery in doing its best --- quickly enumerating and
checking a large search space. This enables formulation of a general
framework for checking various FC-families (and finite cases of
Frankl's conjecture), without the need of employing human intellectual
resources in analyzing features of specific families.

The method fully is formalized (within Isabelle/HOL), and all our
results are fully mechanically verified. Apart from achieving the
highest level of trust possible, the significant contribution of the
formalization is the clear separation of mathematical background and
combinatorial search algorithms, not present in earlier work. Also,
separation of abstract properties of search algorithms and technical
details of their implementation significantly simplifies reasoning
about their correctness and brings them much closer to classic
mathematical audience, not inclined towards computer science. We have
also shown that efficient unverified procedures (such as ILP packages
or SMT solvers) can freely be used during search if they are able to
produce results and certificates that are independently checked and
verified by proof-assistant.

Some formalized concepts about set families (e.g., concept related to
family isomorphisms or irreducibility) might be useful in other
applications, out of the context of Frankl's conjecture. The same
holds for procedures for efficient enumeration of all families
satisfying certain properties, that are described in Section
\ref{sec:enum}.

We assume that techniques introduced in this work can be adapted to
obtain a full characterization of all families over a 7-element
universe, but that would require a significantly higher computing
power (a cluster computer working more days). We also assume that the
full classification of all families over the 8-element universe is not
possible with the current approach and technology.

Methods used in this paper could be adapted to formally and
automatically prove finite cases of Frankl's conjecture. For example,
\v Zivkovi\' c and Vu\v ckovi\' c have informally shown that Frankl's
conjecture holds for families $F$ such that $\card{\Union{F}} \le 12$
\cite{frankl-zivkovic-vuckovic}, and now their results can be
confirmed fully automatically, within a proof assistant. We also
assume that the automated formalized methods developed in this paper
might enable us the check the conjecture for the case
$\card{\Union{F}} \le 13$ (also assuming a high computing power).


\appendix

\section{Proofs of lemmas and theorems}

\subsection{Proof of Theorem \ref{thm:FC_uce_shares_nonneg}}

In this section we describe proof of Theorem
\ref{thm:FC_uce_shares_nonneg}. First we introduce some auxiliary
notions.

\paragraph{Hypercubes}
Sets of a family can be grouped into so called hypercubes.
\begin{definition}
  An $S$-\emph{hypercube} with a base $K$, denoted by $\hc{K}{S}$, is
  the family $\{A.\ K \subseteq A \And A \subseteq K \union
  S\}$. Alternatively, a hypercube can be characterized by $\hc{K}{S}
  = \{K \union A.\ A \in \pow{S}\}$.
\end{definition}

\begin{example}
  \label{ex:hypercube}
  Let $S \equiv \{s_0, s_1\}$, and $K \equiv \{k_0, k_1\}$. If $K'
  \subseteq K$, then all $S$-hypercubes with a base $K'$ are:
  \begin{small}
  \begin{eqnarray*}
    \hc{\{\}}{S} &=& \{\{\}, \{s_0\}, \{s_1\}, \{s_0, s_1\}\}\\
    \hc{\{k_0\}}{S} &=& \{\{k_0\}, \{k_0, s_0\}, \{k_0, s_1\}, \{k_0, s_0, s_1\}\}\\
    \hc{\{k_1\}}{S} &=& \{\{k_1\}, \{k_1, s_0\}, \{k_1, s_1\}, \{k_1, s_0, s_1\}\}\\
    \hc{\{k_0, k_1\}}{S} &=& \{\{k_0, k_1\}, \{k_0, k_1, s_0\}, \{k_0, k_1, s_1\}, \{k_0, k_1, s_0, s_1\}\}\\
  \end{eqnarray*}
  \end{small}
\end{example}
\vspace{-5mm}

Previous example indicates that (disjoint) $S$-hypercubes can span the
whole $\pow{(K \union S)}$. Indeed, this is generally the case.
\begin{proposition} 
(i) $\pow{(K \union S)} = \bigcup_{K' \subseteq K} \hc{K'}{S}$.
(ii) If $K_1$ and $K_2$ are different and disjoint with $S$, then
    $\hc{K_1}{S}$ and $\hc{K_2}{S}$ are disjoint.
\end{proposition}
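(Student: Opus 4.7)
The plan is to prove each part by straightforward set-theoretic manipulation, using the alternative characterization $\hc{K'}{S} = \{A.\ K' \subseteq A \And A \subseteq K' \union S\}$ given in the definition.

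For part (i), I would prove the two inclusions separately. The inclusion $\bigcup_{K' \subseteq K} \hc{K'}{S} \subseteq \pow{(K \union S)}$ is immediate: if $K' \subseteq K$ and $A \in \hc{K'}{S}$, then $A \subseteq K' \union S \subseteq K \union S$. For the converse inclusion, given $A \in \pow{(K \union S)}$, the natural choice is $K' := A \inter K$, which clearly satisfies $K' \subseteq K$. I would then check that $A \in \hc{K'}{S}$: the inclusion $K' \subseteq A$ is by definition of $K'$, and for $A \subseteq K' \union S$ one argues pointwise — every element of $A$ lies in $K \union S$, hence either belongs to $K$ (so to $K'$) or to $S$.

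For part (ii), I would argue by contradiction. Suppose $A$ lies in both $\hc{K_1}{S}$ and $\hc{K_2}{S}$. Then $K_1 \subseteq A \subseteq K_2 \union S$. Intersecting with the complement of $S$ and using $K_1 \inter S = \emptyset$, one obtains $K_1 \subseteq K_2$. A symmetric argument yields $K_2 \subseteq K_1$, contradicting $K_1 \neq K_2$.

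There is no real obstacle here: both parts are elementary and essentially force themselves once the right base $K' = A \inter K$ is chosen in (i) and the disjointness hypothesis $K_i \inter S = \emptyset$ is applied in (ii). The only point that deserves a line of care is verifying $A \subseteq K' \union S$ in (i), since this is where the specific choice of $K'$ (rather than an arbitrary subset of $K$) matters.
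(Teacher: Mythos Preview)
Your argument is correct and is the natural elementary proof. Note, however, that the paper does not actually supply a proof of this proposition: as stated in Section~\ref{sec:background}, propositions are printed without proofs (they are formalized in Isabelle/HOL but omitted from the text). There is therefore nothing to compare your approach against; your choice of $K' = A \cap K$ in~(i) and the intersection-with-complement step in~(ii) are exactly the standard moves one would expect, and nothing more is needed.
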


Families of sets can be separated into (disjoint) parts belonging to
different hypercubes (formed as $\hc{K}{S} \inter F$).
\begin{definition}
  A \emph{hyper-share of a family $F$ wrt.~weight function $w$, the
    hypercube $\hc{K}{S}$ and the set $X$}, denoted by
  $\hs{K}{S}{F}{w}{X}$, is the value $\sum_{A \in \hc{K}{S} \inter
    F}\ss{A}{w}{X}$.
\end{definition}

\begin{example}
  \label{ex:hypershare}
  Let $S$ and $K$ be as in the Example \ref{ex:hypercube}, let $X
  \equiv K \union S$, let $F \equiv \{\{s_0\}, \{s_1\}, \{k_0, s_0\},
  \{k_0, k_1, s_0, s_1\}\}$, and $w(a) = 1$ for all $a \in X$.  Then,
  $\hs{\{\}}{S}{F}{w}{X} = \ss{\{s_0\}}{w}{X} + \ss{\{s_1\}}{w}{X} =
  -4$, $\hs{\{k_0\}}{S}{F}{w}{X} = \ss{\{k_0, s_0\}}{w}{X} = 0$,
  $\hs{\{k_1\}}{S}{F}{w}{X} = 0$, and $\hs{\{k_0, k_1\}}{S}{F}{w}{X}$
  $=$ $\ss{\{k_0, k_1, s_0, s_1\}}{w}{X} = 4$.
\end{example}

Share of a family can be expressed in terms of sum of hyper-shares.
\begin{proposition}
  \label{prop:Family_share_Hyper_sher}
  If $K \union S = \Union{F}$ and $K \inter S = \emptyset$,
  then $$\fs{F}{w}{(\Union{F})} = \sum_{K' \subseteq K}
  \hs{K'}{S}{F}{w}{(\Union{F})}.$$
\end{proposition}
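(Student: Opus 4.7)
The plan is a straightforward unfolding of definitions combined with the partitioning statement that immediately precedes this proposition.

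First, I would expand the right-hand side using the definition of hyper-share:
\[
\sum_{K' \subseteq K} \hs{K'}{S}{F}{w}{(\Union{F})} \;=\; \sum_{K' \subseteq K} \sum_{A \in \hc{K'}{S} \inter F} \ss{A}{w}{(\Union{F})}.
\]
The goal is then to reindex this iterated sum over $F$ itself, yielding $\sum_{A \in F} \ss{A}{w}{(\Union{F})}$, which is $\fs{F}{w}{(\Union{F})}$ by definition.

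Next, I would invoke the assumptions. From $\Union{F} = K \union S$ we get $F \subseteq \pow{(K \union S)}$, so each $A \in F$ lies in $\pow{(K \union S)}$. By the first part of the preceding proposition, $\pow{(K \union S)} = \bigcup_{K' \subseteq K} \hc{K'}{S}$, so each $A \in F$ belongs to at least one $\hc{K'}{S}$. The assumption $K \inter S = \emptyset$ ensures that each $K' \subseteq K$ satisfies $K' \inter S = \emptyset$, so the second part of the preceding proposition applies: distinct $K_1, K_2 \subseteq K$ give disjoint hypercubes $\hc{K_1}{S}$ and $\hc{K_2}{S}$. Hence each $A \in F$ belongs to exactly one $\hc{K'}{S}$, and the collection $\{\hc{K'}{S} \inter F\}_{K' \subseteq K}$ partitions $F$.

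Finally, by this partitioning the double sum above collapses to a single sum over $F$, giving $\sum_{A \in F} \ss{A}{w}{(\Union{F})} = \fs{F}{w}{(\Union{F})}$, as required. The only mild obstacle here is the rigorous justification (in Isabelle/HOL) that a sum indexed by a disjoint union equals the iterated sum over the pieces; this is a standard finite-set lemma and requires only the finiteness of $F$ and of $\pow{K}$, both of which hold by the blanket finiteness assumption stated in Section~\ref{sec:background}.
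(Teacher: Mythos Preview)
Your proposal is correct and is exactly the natural argument: unfold the definitions and use the preceding proposition to see that the hypercubes $\hc{K'}{S}$ for $K' \subseteq K$ partition $\pow{(K \cup S)} \supseteq F$, so the double sum collapses. The paper does not print a proof for this proposition (it classifies propositions as simple technical results stated without proof), but your argument matches what the Isabelle/HOL formalization must do.
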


\begin{proposition}
\label{lemma:Frankl_all_Hyper_share_ge_0}
Let $w$ be a weight function on $\Union{F}$.  If $K \union S =
\Union{F}$, $K \inter S = \emptyset$, and $\forall K' \subseteq
K.\ \hs{K'}{S}{F}{w}{(\Union{F})} \ge 0$, then $\frankl{F}$.
\end{proposition}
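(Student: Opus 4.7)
The plan is to chain two already-established propositions, turning the non-negativity of every individual hyper-share into the Frankl condition via the corresponding family share.

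First I would invoke Proposition \ref{prop:Family_share_Hyper_sher}. Its hypotheses, $K \union S = \Union{F}$ and $K \inter S = \emptyset$, are exactly among our assumptions, so it applies and gives
\[
\fs{F}{w}{(\Union{F})} \;=\; \sum_{K' \subseteq K} \hs{K'}{S}{F}{w}{(\Union{F})}.
\]
Next, the hypothesis $\forall K' \subseteq K.\ \hs{K'}{S}{F}{w}{(\Union{F})} \ge 0$ says every summand on the right is non-negative; since $K$ is finite (all sets are finite in our setting), the sum is a finite sum of non-negatives, hence itself non-negative. Therefore $\fs{F}{w}{(\Union{F})} \ge 0$.

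Finally, since $w$ is assumed to be a weight function on $\Union{F}$ (i.e., $\wf{w}{(\Union{F})}$ holds), the pair $(w, F)$ satisfies the right-hand side of Proposition \ref{thm:Frankl_Family_share_ge_0}, which is an equivalence with $\frankl{F}$. We conclude $\frankl{F}$, as required.

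There is no real obstacle here: the content of the lemma is entirely a bookkeeping step that bundles the hypercube decomposition of the family share (Proposition \ref{prop:Family_share_Hyper_sher}) with the weighted-averaging characterization of Frankl's condition (Proposition \ref{thm:Frankl_Family_share_ge_0}). The only thing one must be careful about in the formal Isabelle/HOL script is that the indexing set in Proposition \ref{prop:Family_share_Hyper_sher} is $\pow{K}$ rather than an arbitrary finite set, so that $\mathrm{sum}$ lemmas about non-negativity apply directly without re-indexing.
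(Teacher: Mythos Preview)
Your proposal is correct and is exactly the intended argument: the paper states this proposition without proof (it classifies propositions as ``simple, technical results \ldots\ printed here without proofs''), but the placement immediately after Proposition~\ref{prop:Family_share_Hyper_sher} and Proposition~\ref{thm:Frankl_Family_share_ge_0} makes clear that chaining those two is precisely what is meant.
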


\begin{definition}
  \emph{Projection of a family $F$ onto a hypercube $\hc{K}{S}$},
  denoted by $\hcprj{K}{S}{F}$, is the set $\{A - K.\ A \in \hc{K}{S}
  \inter F\}$.
\end{definition}

\begin{example}
  \label{ex:hcprj}
  Let $K$, $S$ and $F$ be as in Example \ref{ex:hypershare}. Then
  $\hcprj{\{\}}{S}{F} = \{\{s_0\}, \{s_1\}\}$, $\hcprj{\{k_0\}}{S}{F}
  = \{\{s_0\}\}$, $\hcprj{\{k_1\}}{S}{F} = \{\}$, and $\hcprj{\{k_0,
    k_1\}}{S}{F} = \{\{s_0, s_1\}\}$.
\end{example}

\begin{proposition}\label{prop:hcprj}
\hfill
\begin{enumerate}
\item If $K \inter S = \emptyset$ and $K' \subseteq K$, then
  $\hcprj{K'}{S}{F} \subseteq \pow{S}$
\item If $\uc{F}$, then $\uc{(\hcprj{K}{S}{F})}$.
\item If $\uc{F}$, $F_c \subseteq F$, $S = \Union F_c$, $K \inter S =
  \emptyset$, then $\uca{F_c}{(\hcprj{K}{S}{F})}$.
\item If $\forall x \in K.\ w(x) = 0$, then $\hs{K}{S}{F}{w}{X} =
  \fs{\hcprj{K}{S}{F}}{w}{X}$.
\end{enumerate}
\end{proposition}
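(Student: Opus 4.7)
The four parts unwind almost directly from the definitions of $\hc{K}{S}$ and $\hcprj{K}{S}{F}$, and I would handle them in the order stated, using throughout the fact that every $A \in \hc{K}{S}$ admits the unique decomposition $A = K \union (A - K)$ with $A - K \subseteq S$ (provided $K \inter S = \emptyset$).

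For part 1, any member of $\hcprj{K'}{S}{F}$ is of the form $A - K'$ with $A \subseteq K' \union S$; since $K' \subseteq K$ and $K \inter S = \emptyset$ imply $K' \inter S = \emptyset$, we get $A - K' \subseteq S$. For part 2, given $X = A - K$ and $Y = B - K$ with $A, B \in \hc{K}{S} \inter F$, union-closure of $F$ gives $A \union B \in F$, and the inclusions $K \subseteq A \union B \subseteq K \union S$ give $A \union B \in \hc{K}{S}$; then $X \union Y = (A \union B) - K$ lies in $\hcprj{K}{S}{F}$.

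The main work is in part 3. Union-closure of $\hcprj{K}{S}{F}$ is already part 2, so I would focus on showing that $\hcprj{K}{S}{F}$ is closed under unions with elements of $F_c$. Fix $X = A - K \in \hcprj{K}{S}{F}$ and $C \in F_c$, and take $B = A \union C$ as witness. Since $F_c \subseteq F$ and $F$ is union-closed, $B \in F$. Since $C \subseteq \Union F_c = S$ and $K \subseteq A$, we have $K \subseteq B \subseteq K \union S$, so $B \in \hc{K}{S} \inter F$. The disjointness $K \inter S = \emptyset$ combined with $C \subseteq S$ gives $C - K = C$, hence $B - K = (A - K) \union C = X \union C$, as required.

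For part 4, the key observation is that the map $A \mapsto A - K$ is a bijection from $\hc{K}{S} \inter F$ to $\hcprj{K}{S}{F}$, since $A \in \hc{K}{S}$ is uniquely recovered from $A - K$ as $K \union (A - K)$. Under the hypothesis that $w$ vanishes on $K$, the sum $\sw{w}{A}$ ignores the $K$-part of $A$, giving $\sw{w}{A} = \sw{w}{A - K}$ whenever $K \subseteq A$, and therefore $\ss{A}{w}{X} = \ss{A - K}{w}{X}$. Summing over the bijection yields $\hs{K}{S}{F}{w}{X} = \fs{\hcprj{K}{S}{F}}{w}{X}$. The only mildly tricky step in the whole proposition is finding the witness $B$ in part 3 and verifying that it simultaneously lies in $F$, belongs to $\hc{K}{S}$, and projects to $X \union C$; everything else is a straightforward unfolding of definitions.
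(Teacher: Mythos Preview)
Your argument is correct in all four parts and is precisely the kind of direct unfolding of the definitions that the paper has in mind; the paper itself does not print a proof of this proposition (it classifies propositions as ``simple, technical results \ldots\ printed here without proofs''), so there is nothing further to compare against.
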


\begin{lemma}
  \label{lemma:Frankl_Min_Family_share_ge_0}
  Let $F$ be a non-empty union-closed family, and let $F_c$ be a
  subfamily (i.e., $F_c \subseteq F$). Let $w$ be a weight function on
  $\Union{F}$, that is zero for all elements of
  $\Union{F} - \Union{F_c}$.  If shares of all union-closed extensions
  of $F_c$ are nonnegative, then there is an element
  $a \in \Union{F_c} \subseteq \Union{F}$ such that it satisfies the
  Frankl's condition for $F$, i.e., if
  $\forall F' \in \uce{F_c}.\ \fs{F'}{w}{(\Union{F_c})} \ge 0$, then
  $\frankl{F}$.
\end{lemma}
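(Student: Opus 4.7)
The plan is to decompose $\Union{F}$ as a disjoint union and exploit the hypercube structure. Let $S \equiv \Union{F_c}$ and $K \equiv \Union{F} \setminus \Union{F_c}$, so that $K \cap S = \emptyset$ and $K \cup S = \Union{F}$. This separation is chosen precisely because the weight function $w$ vanishes on $K$, so summation of weights over $\Union{F}$ agrees with summation over $S = \Union{F_c}$.

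First I would verify that each hypercube piece of $F$ projects down to a union-closed extension of $F_c$. Fix $K' \subseteq K$. By parts (1) and (3) of Proposition \ref{prop:hcprj}, $\hcprj{K'}{S}{F} \subseteq \pow{S} = \pow{\Union{F_c}}$ and $\uca{F_c}{(\hcprj{K'}{S}{F})}$, hence $\hcprj{K'}{S}{F} \in \uce{F_c}$. By hypothesis, $\fs{\hcprj{K'}{S}{F}}{w}{(\Union{F_c})} \ge 0$. Since every $A \in \hcprj{K'}{S}{F}$ satisfies $A \subseteq S$ and $w$ vanishes on $K$, we have $\sw{w}{\Union{F}} = \sw{w}{\Union{F_c}}$ and therefore $\ss{A}{w}{\Union{F}} = \ss{A}{w}{\Union{F_c}}$, giving $\fs{\hcprj{K'}{S}{F}}{w}{(\Union{F})} = \fs{\hcprj{K'}{S}{F}}{w}{(\Union{F_c})} \ge 0$. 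By part (4) of Proposition \ref{prop:hcprj}, the hyper-share $\hs{K'}{S}{F}{w}{(\Union{F})}$ equals this last quantity, so it is nonnegative. Summing over all $K' \subseteq K$ and applying Proposition \ref{prop:Family_share_Hyper_sher} yields $\fs{F}{w}{(\Union{F})} \ge 0$.

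The remaining task is to extract a Frankl-witness that actually lies in $\Union{F_c}$, not merely in $\Union{F}$. Using Proposition \ref{prop:Family_share} to expand $\fs{F}{w}{(\Union{F})} = 2\fw{w}{F} - \sw{w}{\Union{F}}\cdot\card{F}$ and regrouping by elements of the universe gives
$$\sum_{a \in \Union{F}} w(a)\,(2\cnt{a}{F} - \card{F}) \;\ge\; 0.$$
Since $w(a) = 0$ for all $a \in K$, the sum collapses to one indexed by $\Union{F_c}$. Because $w$ is a weight function on $\Union{F}$ and vanishes on $K$, there is at least one $a \in \Union{F_c}$ with $w(a) > 0$. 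An averaging (pigeonhole) argument then forces some $a \in \Union{F_c}$ with $w(a) > 0$ to satisfy $2\cnt{a}{F} \ge \card{F}$: otherwise every term of the reduced sum would be nonpositive with at least one strictly negative, contradicting the inequality. This $a$ lies in $\Union{F_c} \subseteq \Union{F}$ and witnesses the Frankl's condition for $F$.

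The main obstacle is the third step: Proposition \ref{lemma:Frankl_all_Hyper_share_ge_0} only concludes $\frankl{F}$, which by itself does not pin the Frankl-witness into $\Union{F_c}$. One cannot therefore simply cite it as a black box; instead one must redo the weight-averaging argument directly, carefully tracking that the support of $w$ lies inside $\Union{F_c}$. The hypothesis $\wf{w}{(\Union{F})}$ together with $w \equiv 0$ on $K$ is exactly what makes this work.
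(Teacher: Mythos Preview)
Your argument is correct and follows essentially the same hypercube decomposition as the paper: set $S=\Union{F_c}$, $K=\Union{F}\setminus\Union{F_c}$, use Proposition~\ref{prop:hcprj} to show each projection $\hcprj{K'}{S}{F}$ lies in $\uce{F_c}$, invoke the hypothesis to make its share nonnegative, and transfer from $\Union{F_c}$ to $\Union{F}$ using that $w$ vanishes on $K$.

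The one genuine difference is in the final step. The paper simply cites Proposition~\ref{lemma:Frankl_all_Hyper_share_ge_0} as a black box to conclude $\frankl{F}$ once all hyper-shares are nonnegative; it does not separately argue that the witnessing element lies in $\Union{F_c}$, since the formal conclusion of the lemma is just $\frankl{F}$. You instead unfold that proposition, sum via Proposition~\ref{prop:Family_share_Hyper_sher} to get $\fs{F}{w}{(\Union{F})}\ge 0$, and then run the averaging argument by hand so that the witness is forced into the support of $w$, hence into $\Union{F_c}$. This extra work is not needed for the formal statement but does justify the stronger informal claim in the lemma text (and the way Theorem~\ref{thm:FC_uce_shares_nonneg} later uses it). So your route is slightly longer but more self-contained; the paper's is shorter because it delegates to the proposition.
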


\begin{proof}
  Let $S$ denote $\Union{F_c}$, and let $K$ denote
  $\Union{F} - \Union{F_c}$. Since, $K \union S = \Union F$ and
  $K \inter S = \emptyset$, by Proposition
  \ref{lemma:Frankl_all_Hyper_share_ge_0}, it suffices to show that
  $\forall K' \subseteq K.\ \hs{K'}{S}{F}{w}{(\Union{F})} \ge 0$.  Fix
  $K'$ and assume that $K' \subseteq K$. Since $w$ is zero on $K$, by
  Proposition \ref{prop:hcprj}, it holds that
  $\hs{K'}{S}{F}{w}{(\Union{F})} =
  \fs{\hcprj{K'}{S}{F}}{w}{(\Union{F})}$.
  On the other hand, since $\uc{F}$, $F_c \subseteq F$, and
  $K \inter S = \emptyset$, by Proposition \ref{prop:hcprj} it holds
  that $\uca{F_c}{(\hcprj{K'}{S}{F})}$.  Moreover,
  $\hcprj{K'}{S}{F} \subseteq \pow{S}$, so
  $\hcprj{K'}{S}{F} \in \uce{F_c}$. Then,
  $\fs{\hcprj{K'}{S}{F}}{w}{(\Union{F_c})} \ge 0$ holds from the
  assumption. However, since $w$ is zero on $K$, it holds that
  $\sw{w}{\Union{F_c}} = \sw{w}{\Union{F}}$ and
  $\fs{\hcprj{K'}{S}{F}}{w}{(\Union{F})} =
  \fs{\hcprj{K'}{S}{F}}{w}{(\Union{F_c})} \ge 0$
\end{proof}

Finally, we can easily prove Theorem \ref{thm:FC_uce_shares_nonneg}.

\setcounter{theorem}{0}
\begin{theorem}
A family $F_c$ is an FC-family if there is a weight function $w$ such
that shares (wrt.~$w$ and $\Union F_c$) of all union-closed extension
of $F_c$ are nonnegative.
\end{theorem}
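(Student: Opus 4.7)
The plan is to reduce the theorem to Lemma \ref{lemma:Frankl_Min_Family_share_ge_0}, which already does essentially all the real work (via the hypercube decomposition and Proposition \ref{lemma:Frankl_all_Hyper_share_ge_0}). The theorem statement only needs to repackage that lemma from the ``fixed $F$'' perspective to the ``universal over all union-closed $F \supseteq F_c$'' perspective required by the definition of an FC-family.

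Concretely, I would unfold the definition of FC-family: fix an arbitrary union-closed family $F$ with $F \supseteq F_c$, and aim to exhibit some $a \in \Union{F_c}$ with $2\cdot\cnt{a}{F} \geq \card{F}$. The weight function $w$ from the hypothesis is defined on $\Union{F_c}$; since $\Union{F_c} \subseteq \Union{F}$, I would extend it to a function $w'$ on $\Union{F}$ by setting $w'(a) = 0$ for every $a \in \Union{F} \setminus \Union{F_c}$. Because the original $w$ satisfies $\wf{w}{(\Union F_c)}$, the extension $w'$ remains a weight function on $\Union{F}$, and since $w'$ agrees with $w$ on $\Union{F_c}$, the share of any family $F' \in \uce{F_c}$ is the same under $w'$ as under $w$; hence the hypothesis transfers: $\fs{F'}{w'}{(\Union{F_c})} \ge 0$ for all $F' \in \uce{F_c}$.

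At this point the hypotheses of Lemma \ref{lemma:Frankl_Min_Family_share_ge_0} are satisfied: $F$ is a nonempty union-closed family containing $F_c$, $w'$ is a weight function on $\Union{F}$ that vanishes on $\Union{F} \setminus \Union{F_c}$, and all union-closed extensions of $F_c$ have nonnegative share. Applying the lemma produces an element $a \in \Union{F_c}$ satisfying the Frankl's condition for $F$, which is exactly the FC-family witness we need. Since $F$ was arbitrary, this proves $F_c$ is an FC-family.

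The only subtlety worth highlighting is that Lemma \ref{lemma:Frankl_Min_Family_share_ge_0} is stated so as to produce a Frankl witness specifically in $\Union{F_c}$, not just somewhere in $\Union{F}$; this is essential, since the FC-family definition insists that the abundant element live in $\Union{F_c}$. That localization is ultimately forced by the fact that $w'$ vanishes off $\Union{F_c}$: in the averaging inequality $\sum_{a \in \Union{F}} w'(a)\,(2\cdot\cnt{a}{F} - \card{F}) \ge 0$ obtained from $\fs{F}{w'}{(\Union{F})} \ge 0$, any element contributing a positive term must lie in the support of $w'$, hence in $\Union{F_c}$. So apart from verifying this support property of the extended weight, the proof is essentially a one-line application of the lemma.
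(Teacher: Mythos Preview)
Your proposal is correct and follows essentially the same approach as the paper: fix an arbitrary union-closed $F \supseteq F_c$, extend $w$ by zero outside $\Union{F_c}$ to obtain $w'$, observe that shares of union-closed extensions are unchanged, and invoke Lemma~\ref{lemma:Frankl_Min_Family_share_ge_0}. Your added remark about why the Frankl witness lands in $\Union{F_c}$ (via the support of $w'$) is a helpful elaboration of a point the paper simply reads off from the lemma's stated conclusion.
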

\begin{proof}
  Consider an arbitrary union-closed family $F \supseteq F_c$. Let $w$
  be the weight function such that $\forall F' \in \uce{F_c}.\
  \fs{F'}{w}{(\Union{F_c})} \ge 0$. Let $w'$ be a function equal to
  $w$ on $\Union F_c$ and 0 on other elements. Since $\forall F' \in
  \uce{F_c}.\ \fs{F'}{w'}{(\Union{F_c})} = \fs{F'}{w}{(\Union{F_c})}$,
  Lemma \ref{lemma:Frankl_Min_Family_share_ge_0} applies to $F$ and
  there is an element $a \in \Union{F_c}$ that satisfies the Frankl's
  condition for $F$. Therefore, $F_c$ is an FC-family.
\end{proof}

\subsection{Proof of Theorem \ref{thm:nonFC}}

\begin{theorem}
  Assume that $F_c$ is a union-closed family. If there exists a
  sequence of families $F_0, \ldots, F_k$, and a sequence of natural
  numbers $c_0, \ldots, c_k$ that:
  \begin{enumerate}
  \item for all $0 \leq i \leq k$ it holds that $F_i \in \uce{F_c}$,
  \item for every $a \in \Union{F_c}$ it holds that $$\sum_{i=0}^k c_i \cdot (2 \cdot \cnt{a}{F_i} - \card{F_i}) < 0,$$
  \item not all $c_i$ are zero (i.e., $\exists i.\ 0 \le i \le k \wedge c_i > 0$),
  \end{enumerate}
  then the family $F_c$ is not an FC-family.
\end{theorem}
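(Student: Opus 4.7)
The plan is to show $F_c$ is not FC by producing an explicit witness: a union-closed family $F \supseteq F_c$ in which no element of $\Union F_c$ is abundant, so that Frankl's condition for $F$ fails at every element of $\Union F_c$. Discarding indices with $c_i = 0$ leaves the hypothesis intact (the strict inequality in (2) still holds, and by (3) at least one surviving coefficient is positive), so I may assume every $c_i \ge 1$.

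Following Poonen, the construction realises the integer multiplicities $c_i$ via fresh tag elements disjoint from $\Union F_c$, used to fabricate $c_i$ distinct ``copies'' of each $F_i$. Concretely, I would take a set $V$ of $N = \sum_i c_i$ fresh elements partitioned into blocks $V_0,\ldots,V_k$ of sizes $c_0,\ldots,c_k$; writing $\tau(v)=i$ for $v \in V_i$, the witness $F$ is the union-closure of $F_c$ together with the tagged generators $\{A \cup \{v\} : v \in V,\, A \in F_{\tau(v)}\}$. Union-closure of $F$ and the inclusion $F_c \subseteq F$ are built into the construction; using that each $F_i$ is union-closed and closed for $F_c$, one obtains an explicit description: each $X \in F$ is uniquely classified by its tag part $S = X \cap V$, and for $S \neq \emptyset$ the non-tag part $X \setminus V$ is of the form $\bigcup_{v \in S} A_v$ with $A_v \in F_{\tau(v)}$.

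The heart of the proof is the count: for each $a \in \Union F_c$,
$$2\,\cnt{a}{F} - |F| \;=\; (2\,\cnt{a}{F_c} - |F_c|) \;+\; \sum_{\emptyset \neq S \subseteq V}\bigl(2\,\cnt{a}{\mathcal{A}_S} - |\mathcal{A}_S|\bigr),$$
where $\mathcal{A}_S$ is the bucket of non-tag parts in the tag-class $S$. The singleton terms $|S|=1$, grouped by block, sum to exactly $\sum_i c_i\,(2\,\cnt{a}{F_i} - |F_i|)$, which is strictly negative by hypothesis (2). It remains to show that the $S=\emptyset$ contribution (a bounded quantity independent of the tagging) together with the multi-tag contributions do not overturn this sign. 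This can be arranged either by scaling all $c_i$ by a large common factor $M$ — the dominant singleton term then grows linearly in $M$, while the $F_c$ term is fixed and the higher-order terms can be controlled — or, as in Poonen's original argument, by padding each tag with enough fresh elements so that distinct tag-tuples cannot collapse to the same union, making each $|\mathcal{A}_S|$ an exact product over blocks and the per-block contributions decouple cleanly.

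The main obstacle is precisely this last accounting for the multi-tag buckets $|S| \ge 2$: without extra care, different tuples $(A_v)_{v \in S}$ may produce the same union $\bigcup_v A_v$, so $|\mathcal{A}_S|$ is not literally the product $\prod_{v \in S} |F_{\tau(v)}|$, and the sign of $2\,\cnt{a}{\mathcal{A}_S} - |\mathcal{A}_S|$ is a priori unconstrained. Either the padding trick above or a careful inclusion-exclusion over tag subsets repairs this, forcing the multi-tag contributions to be non-negative multiples of the already-negative block quantities; everything else (union-closure of $F$, $F_c \subseteq F$, and the final linear algebra) is immediate from the construction and hypothesis (2).
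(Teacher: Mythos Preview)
Your high-level plan is right, and you correctly identify the obstacle, but neither of your proposed fixes closes the gap. With singleton tags $\{v\}$, the tag-bucket lattice blows up under scaling: if you replace $c_i$ by $Mc_i$, the number of tag-sets $S$ of size $2$ is $\binom{M\sum_i c_i}{2} = \Theta(M^2)$, and each such bucket contributes a bounded but uncontrolled quantity $2\,\cnt{a}{\mathcal{A}_S}-|\mathcal{A}_S|$ that can perfectly well be positive. So the $\Theta(M)$ singleton term does not dominate, and higher $|S|$ only makes this worse. Your ``padding'' suggestion, as stated, only makes tags distinguishable --- it does nothing to control the non-tag part $\mathcal{A}_S = \biguplus_{v\in S} F_{\tau(v)}$ for $|S|\ge 2$, and that is exactly where the sign is unconstrained.

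The paper's construction sidesteps the accounting entirely by a specific device you did not hit upon. Instead of singleton tags, take a fresh set $B$ of $cd+1$ elements (with $c=\sum_i c_i$ and $d$ a scaling parameter) and tag the $s$-th copy with the \emph{complement} $B_s := B\setminus\{b_s\}$. Any union of two or more distinct tags then equals the full set $B$, so every multi-tag member lands in a single overflow bucket. That bucket is deliberately filled with $\{A\cup B : A \in \pow{\Union F_c}\}$, the entire power set shifted by $B$; every $a\in\Union F_c$ lies in exactly half of those sets, so the bucket contributes zero. What remains is
\[
(2\,\cnt{a}{F_c}-|F_c|) \;+\; d\sum_{i} c_i\,(2\,\cnt{a}{F_i}-|F_i|),
\]
and a large enough $d$ makes this negative for every $a$. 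The scaling parameter $d$ is there only to swamp the fixed $F_c$ term; the multi-tag contributions have already been engineered to vanish, not merely ``controlled''.
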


\begin{proof}
  Let $c = \sum_{i=0}^kc_i$. From the assumptions, it holds that $c >
  0$. For each natural number $d > 0$, let $B^d = \{b_0, \ldots,
  b_{c\cdot d}\}$ be a set containing $c\cdot d + 1$ elements, having
  no common elements with $\Union{F_c}$. For each $0 \leq s < c\cdot
  d$, let $B^d_s = B^d \setminus \{b_s\}$. Let $G^d$ be a sequence of
  sets $[F_1, \ldots, F_1, F_2, \ldots, F_2, \ldots, F_k, \ldots,
  F_k]$, where each $F_i$ is repeated exactly $c_i\cdot d$ times, and
  for each $0 \leq s < c\cdot d$, let $G^d_s$ be the $s$-the member of
  the sequence $G^d$. For each $0 \leq s < c\cdot d$, let $H^d_s = \{A
  \union B^d_s.\ A \in G^d_s\}$. Let $H^d = \{A \union B^d.\ A \in
  \pow{\Union{F_c}}\}$. Finally, we form a family $F^d = F_c \union
  (\bigcup_{0\leq s < c\cdot d} H^d_s) \union H^d$. For an
  appropriately chosen $d$ it will be a counterexample that $F_c$ is
  an FC-family. 

  For each $d$, the family $F^d$ is union-closed ($F_c$ is union
  closed, so the union of each two sets from $F_c$ is in $F_c$, $H^d$
  is also union-closed, so the union of each two sets from $F_c$ is in
  $H^d$, the union of each two sets from $\bigcup_{0\leq s < c\cdot d}
  H^d_s$ is in $H^d$, unions of sets from $F_c$ and $\bigcup_{0\leq s
    < c\cdot d} H^d_s$ are in $\bigcup_{0\leq s < c\cdot d} H^d_s$
  since for some $i$, it holds that $G^d_s = F_i \in \uce{F_c}$,
  unions of sets from $F_c$ and $H^d$ are in $H^d$ and unions of sets
  from $\bigcup_{0\leq s < c\cdot d} H^d_s$ and $H^d$ are in $H^d$).

  Let $f$ be a function defined by $f\ a\ F = 2\cdot \cnt{a}{F} -
  \card{F}$.  Since $B^d$ and $\Union{F_c}$ are disjoint, for each $a
  \in \Union{F_c}$, $d > 0$ and $0 \leq s < c\cdot d$ it holds that
  $f\ a\ H^d_s = f\ a\ G^d_s$. As $H^d$ is built around the whole
  $\pow{\Union{F_c}}$, it can be easilly shown that for each $a \in
  \Union{F_c}$ and $d>0$, it holds that $f\ a\ H^d = 0$. For each $d >
  0$, the families $F_c$, $\bigcup_{0\leq s < c\cdot d} H^d_s$, and
  $H^d$ are mutually disjoint (all $B^d_s$ are non-empty, sets in
  $F_c$ contain no element from $B^d$, sets from $\bigcup_{0\leq s <
    c\cdot d} H^d_s$ contain all but one element from $B^d$, while
  sets from $H^d$ contain the whole $B^d$). Also, for each $0\leq s_1
  \neq s_2 < c\cdot d$, the families $H^d_{s_1}$ and $H^d_{s_2}$ are
  disjoint (as all sets from $H^d_{s_1}$ contain $b_{s_2}$, while none
  of the sets from $H^d_{s_2}$ does). Therefore, for each $a \in
  \Union{F_c}$ and $d > 0$, $f\ a\ F^d = f\ a\ F_c + f\ a\
  (\bigcup_{0\leq s < c\cdot d} H^d_s) + f\ a\ H^d = f\ a\ F_c +
  \sum_{0\leq s < c\cdot d} f\ a\ G^d_s = f\ a\ F_c + d \cdot
  \sum_{0\leq s < c} f\ a\ G^d_{s\cdot d}$. By construction of $G^d$,
  it holds that for each $0 \leq i \leq k$, the last sum has exactly
  $c_i$ terms $f\ a\ F_i$. Therefore, $f\ a\ F^d = f\ a\ F_c + d \cdot
  (\sum_{0 \leq i \leq k}c_i \cdot (f\ a\ F_i))$. By assumption,
  $\sum_{0 \leq i \leq k}c_i \cdot (f\ a\ F_i) = \sum_{0 \leq i \leq
    k}c_i \cdot (2 \cdot \cnt{a}{F_i} - \card{F_i})$ is
  negative. Therefore, for each $a$, there exists a $d_a > 0$ such
  that $f\ a\ F^{d_a} < 0$. Let $d$ maximal $d_a$ for $a \in
  \Union{F_c}$. It holds that $F^d$ is a union-closed family
  containing $F_c$, such that each $a \in \Union{F_c}$ it holds that
  $f\ a\ F^d = (2 \cdot \cnt{a}{F^d} - \card{F^d}) < 0$, i.e.,
  $\cnt{a}{F^d} < \card{F^d}/2$. This shows that $F_c$ cannot be an
  FC-family.
\end{proof}

\subsection{Proof of Theorem \ref{thm:mult}}

\begin{theorem}
  Assume that the predicate $\Pinc{P}$ incrementally checks $P$. Then
  $$\LnP{[l_0, \ldots, l_m+1]}{n}{P} = \mult{\LnP{[l_0, \ldots, l_m]}{n}{P}}{\families{n}{m}}{\Pinc{P}}.$$
\end{theorem}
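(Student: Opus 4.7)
The plan is to prove set equality by showing both inclusions directly from the definitions, using the incremental-checking property of $\Pinc{P}$ as the key bridge between the two sides.

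For the inclusion $\supseteq$, I would take an arbitrary $F' \in \mult{\LnP{[l_0, \ldots, l_m]}{n}{P}}{\families{n}{m}}{\Pinc{P}}$, which unfolds to $F' = F \union \{A\}$ for some $F \in \LnP{[l_0, \ldots, l_m]}{n}{P}$ and some $A \in \families{n}{m}$ with $A \notin F$ and $\Pinc{P}\ F\ A$. Because $F$ is $[l_0, \ldots, l_m]$-partitioned, every set in $F$ has cardinality at most $m = \card{A}$, so adding $A$ (which is not already in $F$) produces a family whose cardinality profile is exactly $[l_0, \ldots, l_m + 1]$ and which still lives in $\nfam{n}$. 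To conclude $P\ F'$, the hypothesis that $\Pinc{P}$ incrementally checks $P$ applies because its side conditions ($\card{A} \geq \card{A'}$ for all $A' \in F$, and $A \notin F$) are satisfied; from $P\ F$ and $\Pinc{P}\ F\ A$ we obtain $P\ (F \union \{A\}) = P\ F'$.

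For the inclusion $\subseteq$, I would take $F' \in \LnP{[l_0, \ldots, l_m+1]}{n}{P}$. Since $l_m + 1 \geq 1$, the family $F'$ contains at least one set $A$ of cardinality $m$; fix such an $A$ and set $F = F' \setminus \{A\}$. Removing exactly one $m$-set reduces the count at position $m$ by one and leaves all other counts untouched, so $F$ is $[l_0, \ldots, l_m]$-partitioned; moreover $F \subseteq \nfam{n}$, $A \in \families{n}{m}$, and $A \notin F$. Because all sets in $F$ have cardinality at most $m$, the incremental-checking hypothesis applies in the other direction: from $P\ F'$ we conclude both $P\ F$ and $\Pinc{P}\ F\ A$. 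Hence $F \in \LnP{[l_0, \ldots, l_m]}{n}{P}$, so $F' = F \union \{A\}$ witnesses membership of $F'$ in $\mult{\LnP{[l_0, \ldots, l_m]}{n}{P}}{\families{n}{m}}{\Pinc{P}}$.

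Neither direction presents a real obstacle; the main care is simply to verify the two side conditions ($A \notin F$ and $A$ being of maximal cardinality in $F$) that licence invoking the biconditional in the definition of incremental checking. In the $\supseteq$ direction those conditions come for free from the filtered-product definition together with $L$-partitionedness, and in the $\subseteq$ direction they come from the choice of $A$ as an $m$-element member of $F'$ and the partitioning constraint which forbids any set of cardinality exceeding $m$.
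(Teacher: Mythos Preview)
Your proof is correct and follows essentially the same approach as the paper: both inclusions are shown directly, using the biconditional in the definition of incremental checking together with the cardinality bound coming from $L$-partitionedness. The only cosmetic difference is that you present $\supseteq$ before $\subseteq$, whereas the paper does the reverse; one small slip is writing ``$F \subseteq \nfam{n}$'' where you mean ``$F \in \nfam{n}$''.
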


\begin{proof}
  First we show that $\LnP{[l_0, \ldots, l_m+1]}{n}{P} \subseteq
  \mult{\LnP{[l_0, \ldots, l_m]}{n}{P}}{\families{n}{m}}{\Pinc{P}}$. Let $F'$
  be an arbitrary family in $\LnP{[l_0, \ldots, l_m+1]}{n}{P}$. Then
  $F'$ is $[l_0, \ldots, l_m+1]$-partitioned, $\Union F' \subseteq
  \setn{n}$, and $P\ F'$ holds. Since $l_m + 1 > 0$, there is an
  element $A\in F'$ such that $\card{A} = m$. Since $\Union F'
  \subseteq \setn{n}$ and $A \in F'$, it holds that $A \subseteq
  \setn{n}$, so $A \in \families{n}{m}$.  Let $F$ denote the set $F'
  \setminus \{A\}$. Since $A \in F'$, it holds that $F' = F \union
  \{A\}$. Since $\Pinc{P}$ incrementally checks $P$, and since for all $A'
  \in F'$, it holds that $\card{A} = m \geq \card{A'}$, and since $A
  \notin F$, from $P\ F'$, by relation (\ref{eq:PQ}) it holds that $P\
  F$ and $\Pinc{P}\ F\ A$.  The set $F$ is in $\LnP{[l_0, \ldots,
    l_m]}{n}{P}$. Indeed, it is $[l_0, \ldots, l_m]$-partitioned, it
  holds that $\Union F \subseteq \Union F' \subseteq \setn{n}$, and
  $P\ F$ holds. Since $A \in \families{n}{m}$, $A \notin F$ and $\Pinc{P}\ F\
  A$, by the definition of $\Pinc{P}$-filtered multiplication $F'$ is in
  $\mult{\LnP{[l_0, \ldots, l_m]}{n}{P}}{\families{n}{m}}{\Pinc{P}}$.

  To prove $\LnP{[l_0, \ldots, l_m+1]}{n}{P} \supseteq
  \mult{\LnP{[l_0, \ldots, l_m]}{n}{P}}{\families{n}{m}}{\Pinc{P}}$, assume
  that $F'$ is an arbitrary element of $\mult{\LnP{[l_0, \ldots,
      l_m]}{n}{P}}{\families{n}{m}}{\Pinc{P}}$. Then there is an element $F
  \in \LnP{[l_0, \ldots, l_m]}{n}{P}$ and $A \in \families{n}{m}$ such
  that $F' = F \union \{A\}$, $A \notin F$, and $\Pinc{P}\ F\ A$.  Therefore
  $F$ is $[l_0, \ldots, l_m]$-partitioned, $P\ F$ holds, $\Union F
  \subseteq \setn{n}$, $\card{A} = m$ and $A \subseteq
  \setn{n}$. Since $\Pinc{P}$ incrementally checks $P$, since for all $A' \in
  F$, it holds that $\card{A} = m \geq \card{A}'$, since $A \notin F$,
  and since $P\ F$ and $\Pinc{P}\ F\ A$ hold, by relation (\ref{eq:PQ}), $P\
  (F \union \{A\})$ must hold. Moreover, since $A \notin F$, since $F$
  is $[l_0, \ldots, l_m]$-partitioned, and since $\card{A} = m$, it
  holds that $F \union \{A\}$ is $[l_0, \ldots,
  l_m+1]$-partitioned. Finally, $\Union F' = \Union F \union A
  \subseteq \setn{n}$. Therefore, $F' = F \union \{A\} \in \LnP{[l_0,
    \ldots, l_m+1]}{n}{P}$.
\end{proof}

\subsection{Proof of Theorem \ref{thm:mult_generates}}

\begin{theorem}
  Assume that the predicate $\Pinc{P}$ incrementally checks $P$ and is
  preserved by injective functions. If $m \le n$ and $\mathcal{F}_b$
  is an iso-representing subcollection of
  $\LnP{[l_0,\ldots, l_m]}{n}{P}$, then
  $\mathcal{F}'_b \equiv
  \mult{\mathcal{F}_b}{\families{n}{m}}{\Pinc{P}}$
  is an iso-representing subcollection of
  $\LnP{[l_0, \ldots, l_m + 1]}{n}{P}$.
\end{theorem}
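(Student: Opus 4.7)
The plan is to take an arbitrary $F' \in \LnP{[l_0, \ldots, l_m+1]}{n}{P}$ and exhibit an element of $\mathcal{F}'_b$ isomorphic to it; the subcollection property $\mathcal{F}'_b \subseteq \LnP{[l_0, \ldots, l_m+1]}{n}{P}$ is then immediate by Theorem~\ref{thm:mult} applied to the inclusion $\mathcal{F}_b \subseteq \LnP{[l_0, \ldots, l_m]}{n}{P}$. By Theorem~\ref{thm:mult}, any such $F'$ decomposes as $F \cup \{A\}$ with $F \in \LnP{[l_0, \ldots, l_m]}{n}{P}$, $A \in \families{n}{m}$, $A \notin F$, and $\Pinc{P}\ F\ A$. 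Since $\mathcal{F}_b$ iso-represents $\LnP{[l_0, \ldots, l_m]}{n}{P}$, there exist $F_b \in \mathcal{F}_b$ and a bijection $f : \Union{F} \to \Union{F_b}$ with $\mapfam{f}{F} = F_b$.

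The key technical step is to extend $f$ to an injection $f'$ on $\Union{F} \cup A$ that agrees with $f$ on $\Union{F}$ and sends the ``fresh'' inputs $A \setminus \Union{F}$ into $\setn{n} \setminus \Union{F_b}$. Such an extension exists whenever $|A \setminus \Union{F}| \le n - |\Union{F_b}|$, and since $|\Union{F}| = |\Union{F_b}|$ and $\Union{F} \cup A \subseteq \Union{F'} \subseteq \setn{n}$, this reduces to $|\Union{F} \cup A| \le n$, which holds because $F' \in \nfam{n}$. The hypothesis $m \le n$ (used indirectly via $|A| = m$) is what ensures feasibility in the extreme cases where $\Union F$ almost fills $\setn{n}$.

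Setting $A' \equiv \mapset{f'}{A}$, I claim that $F_b \cup \{A'\} \in \mathcal{F}'_b$ and $\iso{F'}{F_b \cup \{A'\}}$ via $f'$. Isomorphism follows from $\mapfam{f'}{F} = \mapfam{f}{F} = F_b$ (because each member of $F$ lies in $\Union{F}$, where $f'$ agrees with $f$) together with $\mapfam{f'}{(F \cup \{A\})} = F_b \cup \{A'\}$. For membership in $\mathcal{F}'_b$ we need $A' \in \families{n}{m}$ (immediate since $A' \subseteq \setn{n}$ and $|A'| = m$), $A' \notin F_b$ (if $A' = \mapset{f}{B'}$ for some $B' \in F$, then $\mapset{f'}{A} = \mapset{f'}{B'}$ and injectivity of $f'$ on $\Union{F} \cup A$ forces $A = B'$, contradicting $A \notin F$), and finally $\Pinc{P}\ F_b\ A'$, which is exactly the conclusion of the ``preserved by injective functions'' hypothesis applied to $f'$ (using $\Pinc{P}\ F\ A$ as input).

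The main obstacle is the construction of $f'$: it must simultaneously extend $f$, remain injective on the enlarged domain, keep its image inside $\setn{n}$, and avoid $\Union{F_b}$ on the fresh inputs. Once $f'$ is in hand, every remaining verification is a routine consequence of the definition of $\Pinc{P}$-filtered product together with the hypotheses on $\Pinc{P}$.
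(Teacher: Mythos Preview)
Your proposal is correct and follows essentially the same approach as the paper's proof: decompose $F'$ via Theorem~\ref{thm:mult}, pull back to an iso-representative $F_b$ of $F$, extend the witnessing bijection to an injection $f'$ on $\Union F \cup A$ with image in $\setn{n}$, and verify that $F_b \cup \{\mapset{f'}{A}\}$ lies in $\mathcal{F}'_b$ using the hypotheses on $\Pinc{P}$. If anything, you give more detail on the cardinality argument justifying the existence of $f'$ than the paper does (it simply asserts that such an $f'$ exists), and your remark about $m \le n$ is slightly tangential since your own counting argument already settles the extension using only $\Union F' \subseteq \setn{n}$.
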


\begin{proof}
  $\mathcal{F}'_b$ is a subcollection of $\LnP{[l_0, \ldots, l_m +
    1]}{n}{P}$. Indeed, since $\mathcal{F}_b$ is a subcollection of
  $\LnP{[l_0,\ldots, l_m]}{n}{P}$, and $\mathcal{F}'_b \equiv
  \mult{\mathcal{F}_b}{\families{n}{m}}{\Pinc{P}}$, the statement holds by
  Theorem \ref{thm:mult}.

  Let us show that $\mathcal{F}'_b$ iso-represents
  $\LnP{[l_0, \ldots, l_m + 1]}{n}{P}$. Let $F'$ be an arbitrary
  element of $\LnP{[l_0, \ldots, l_m + 1]}{n}{P}$. By Theorem
  \ref{thm:mult}, there is a family
  $F \in \LnP{[l_0, \ldots, l_m]}{n}{P}$ and a set
  $A \in \families{n}{m}$ such that $F' = F \union {A}$, $A \notin F$,
  $\Pinc{P}\ A\ F$. Since $\mathcal{F}_b$ iso-represents
  $\LnP{[l_0, \ldots, l_m]}{n}{P}$, there is a family
  $F_b \in \mathcal{F}_b$ such that $\iso{F}{F_b}$, i.e., there is a
  bijection $f$ between $\Union F$ and $\Union F_b$ such that
  $F_b = \mapfam{f}{F}$.

  There is a function $f'$, extending $f$ from $\Union F$ to $\Union F
  \union A$, such that it is injective on $\Union F \union A$, that
  $\mapfam{f'}{F} = F_b$ and that $\mapfam{f'}{(F \union \{A\})}
  \subseteq \setn{n}$.

  Let $A_b$ denote the set $\mapset{f'}{A}$. The function $f'$
  establishes an isomorphism between $F \union \{A\}$ and
  $\mapfam{f'}{F} \union \{\mapset{f'}{A}\}$, i.e.,
  $F_b \union \{A_b\}$. Therefore $\iso{F'}{F_b \union
    \{A_b}\}$.
  Moreover, $F_b \union \{A_b\}$ is in $\mathcal{F}'_b$. Indeed, it
  holds that $F_b \in \mathcal{F}_b$. Also, since $f'$ is an injection
  from $\Union F \union A$ into $\setn{n}$, it holds that
  $\card{A_b} = \card{A} = m$, and $A_b \subseteq \setn{n}$, and,
  since $A \notin F$ it holds that
  $\mapset{f'}{A} \notin \mapset{f'}{F}$ i.e., $A_b \notin
  F_b$.
  Finally, since $\Pinc{P}$ is preserved by injective functions it
  also holds that $\Pinc{P}\ (\mapfam{f'}{F})\ (\mapset{f'}{A})$,
  i.e., $\Pinc{P}\ F_b\ A_b$. Therefore, $\mathcal{F}'_b$
  iso-represents $\LnP{[l_0, \ldots, l_m + 1]}{n}{P}$.
\end{proof}

\subsection{Proof of Theorem \ref{lemma:enum_rec_mult}}

The characterization of this procedure (in terms of relation that
connects the list given as its parameter and the returned value for
corresponding to that list) is given by the following
proposition. This proposition is proved using mathematical induction,
based on the definition of $\enumrecname$.

\begin{proposition}
  \label{prop:enum_rec}
  Let $L$ be a given list, and $R$ be a relation between a value and a
  list. If
  \begin{enumerate}
  \item $R\ v_{\emptylist}\ \emptylist$,
  \item for all $v'$ and $L'$, if $R\ v'\ L'$, then $R\ v'\ [L', 0]$,
  \item for all $v'$ and $L'$, if $L'$ is not empty, $\last{L'} > 0$,
    $\length{L'} \le \length{L}$, and\\ $R\ v'\ (\declast{L'})$, then
    $R\ (upd\ v'\ L')\ L'$,
  \end{enumerate}
  then $R\ (\enumrec{L}{v_{\emptylist}}{upd})\ L$.
\end{proposition}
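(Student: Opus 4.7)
The plan is to prove the claim by well-founded induction mirroring the recursive structure of $\enumrecname$. A direct induction on $L$ alone does not quite work, because hypothesis 3 requires the side condition $\length{L'} \le \length{L}$ where $L$ is the outer parameter. So I would first strengthen the statement to: for every list $L'$ with $\length{L'} \le \length{L}$, we have $R\ (\enumrec{L'}{v_{\emptylist}}{upd})\ L'$. The original claim then follows by instantiating $L' = L$.

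For the strengthened statement, I would use well-founded induction on the pair $(\length{L'},\last{L'})$ ordered lexicographically (equivalently, on $\length{L'}+\sum_i \nth{L'}{i}$), since this measure strictly decreases at each recursive call of $\enumrecname$. The induction then unfolds the definition of $\enumrecname$ and splits into the three branches that appear in it. Case $L' = \emptylist$: the function returns $v_{\emptylist}$ and the conclusion is exactly hypothesis 1. Case $L' \neq \emptylist$ with $\last{L'} = 0$: the second equation of $\enumrecname$ reduces the call to $\enumrec{(\butlast{L'})}{v_{\emptylist}}{upd}$. The induction hypothesis, applied to $\butlast{L'}$ (strictly smaller length, still bounded by $\length{L}$), gives $R$ at $\butlast{L'}$, and since $\last{L'}=0$ we have $L' = [\butlast{L'},0]$, so hypothesis 2 lifts the relation to $L'$.

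Case $L' \neq \emptylist$ with $\last{L'} > 0$: the third equation yields $\enumrec{L'}{v_{\emptylist}}{upd} = upd\ (\enumrec{(\declast{L'})}{v_{\emptylist}}{upd})\ L'$. The induction hypothesis applied to $\declast{L'}$ (same length, strictly smaller last entry, still within the length bound) gives $R\ (\enumrec{(\declast{L'})}{v_{\emptylist}}{upd})\ (\declast{L'})$. All four premises of hypothesis 3 are then in place for $v' = \enumrec{(\declast{L'})}{v_{\emptylist}}{upd}$ and the current $L'$: nonemptiness and positivity of the last entry hold by assumption, the length bound is supplied by the strengthened statement, and the relation for $\declast{L'}$ is the induction hypothesis. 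Hypothesis 3 then delivers $R\ (upd\ v'\ L')\ L'$, i.e., $R\ (\enumrec{L'}{v_{\emptylist}}{upd})\ L'$.

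The only real subtlety is the strengthening step: without it, the recursive calls on $\butlast{L'}$ and $\declast{L'}$ produce statements at smaller lists, but to reapply hypothesis 3 along the recursion we must know the length bound holds uniformly. Once the statement is generalized and a well-founded measure respecting both branches of the recursion is chosen, the rest is a routine case analysis that pattern-matches the three equations of $\enumrecname$ to the three hypotheses of the proposition.
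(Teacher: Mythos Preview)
Your proposal is correct and matches the paper's approach: the paper only remarks that this proposition ``is proved using mathematical induction, based on the definition of $\enumrecname$,'' and your argument is precisely such an induction. Your explicit identification of the needed strengthening (carrying the invariant $\length{L'} \le \length{L}$ through the recursion) and of a suitable well-founded measure fills in the details the paper omits, and the three cases line up exactly with the three branches of the definition and the three hypotheses of the proposition.
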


Finally, we show how can we use $\enumrecname$ to find an iso-base of
$\LnP{{L}}{n}{P}$ for some given list $L$, number $n$, and a predicate
$P$.

\begin{theorem}
  Let $L$ be a list such that $\length{L} \leq n + 1$.  Assume
  that:
  \begin{enumerate}
  \item $P\ \{\}$ holds and $v_{\emptylist} = \{\emptyset\}$,
  \item $\Pinc{P}$ incrementally checks $P$ and is preserved by injective
  functions,
  \item $\mathcal{P}$ contains all permutations of $\listn{n}$, and
  $upd = \lambda\ \mathcal{F}\ L.\
  \base{(\mult{\mathcal{F}}{\families{n}{\length{L}-1}}{\Pinc{P}})}{\mathcal{P}}$.
  \end{enumerate}
  Then $\enumrec{L}{v_{\emptylist}}{upd}$ is an iso-base of
  $\LnP{{L}}{n}{P}$.
\end{theorem}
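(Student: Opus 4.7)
The plan is to apply Proposition \ref{prop:enum_rec} with the invariant relation
$$R\ \mathcal{F}\ L' \;\equiv\; \mathcal{F}\text{ is an iso-base of }\LnP{L'}{n}{P}\text{ and }\length{L'}\le n+1.$$
If the three conditions of that proposition can be verified for $R$, then we obtain $R\ (\enumrec{L}{v_{\emptylist}}{upd})\ L$, which is exactly the conclusion sought, since $\length{L}\le n+1$.

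For the base case we need $R\ \{\emptyset\}\ \emptylist$, i.e.\ that the singleton collection $\{\emptyset\}$ (containing only the empty family) is an iso-base of $\LnP{\emptylist}{n}{P}$. Since $P\ \{\}$ holds, the empty family is $\emptylist$-partitioned and satisfies $P$; conversely, the only $\emptylist$-partitioned family is the empty family; hence $\LnP{\emptylist}{n}{P}=\{\emptyset\}$, and a singleton is trivially an iso-base of itself. For the second condition, note that a family $F$ is $[l_0,\ldots,l_m]$-partitioned iff it is $[l_0,\ldots,l_m,0]$-partitioned: both require $\card{A}\le m$ for every $A\in F$ and the same cardinality counts at positions $0,\ldots,m$, since the added constraint at position $m+1$ is vacuous. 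Therefore $\LnP{L'}{n}{P}=\LnP{[L',0]}{n}{P}$, so $R\ \mathcal{F}\ L'$ immediately gives $R\ \mathcal{F}\ [L',0]$ (the length bound still holds as $\length{[L',0]} = \length{L'}+1 \le \length{L}\le n+1$).

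The third condition is the substantive step. Suppose $L'=[l_0,\ldots,l_m]$ is non-empty with $l_m=\last{L'}>0$, that $\length{L'}\le\length{L}\le n+1$, and that $\mathcal{F}$ is an iso-base of $\LnP{\declast{L'}}{n}{P}=\LnP{[l_0,\ldots,l_{m-1},l_m-1]}{n}{P}$. Since $m=\length{L'}-1\le n$, Theorem \ref{thm:mult_generates} (whose hypotheses on $\Pinc{P}$ hold by assumption) applies and yields that $\mathcal{F}'\equiv\mult{\mathcal{F}}{\families{n}{m}}{\Pinc{P}}$ is an iso-representing subcollection of $\LnP{L'}{n}{P}$. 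Now $upd\ \mathcal{F}\ L'=\base{\mathcal{F}'}{\mathcal{P}}$, and by Proposition \ref{lemma:nef} (using that $\mathcal{P}$ contains all permutations of $\listn{n}$ and $\mathcal{F}'\subseteq\nfam{n}$), $\base{\mathcal{F}'}{\mathcal{P}}$ is an iso-base of $\mathcal{F}'$. Iso-representation is transitive, so $\base{\mathcal{F}'}{\mathcal{P}}$ iso-represents $\LnP{L'}{n}{P}$, and since the iso-base property already guarantees that no two of its members are isomorphic, it is in fact an iso-base of $\LnP{L'}{n}{P}$. Hence $R\ (upd\ \mathcal{F}\ L')\ L'$ holds.

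The only real obstacle is the transitivity argument in the third step: one must check carefully that combining ``iso-base of the multiplied collection'' (Proposition \ref{lemma:nef}) with ``iso-representing subcollection of $\LnP{L'}{n}{P}$'' (Theorem \ref{thm:mult_generates}) really yields ``iso-base of $\LnP{L'}{n}{P}$''; this reduces to the trivial observation that if $A$ iso-represents $B$ and $B$ iso-represents $C$, then $A$ iso-represents $C$, combined with the fact that pairwise non-isomorphism of members of $A$ is intrinsic to $A$ and does not depend on the target collection.
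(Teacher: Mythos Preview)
Your approach is essentially the same as the paper's: apply Proposition~\ref{prop:enum_rec} and verify its three conditions using Theorem~\ref{thm:mult_generates} and Proposition~\ref{lemma:nef} for the inductive step. The base case and the appending-zero case are handled identically, and your transitivity discussion at the end is a welcome elaboration of something the paper leaves implicit.

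There is, however, a small technical slip. You build the length bound $\length{L'}\le n+1$ into the invariant $R$, and then in condition~2 you assert ``the length bound still holds as $\length{[L',0]} = \length{L'}+1 \le \length{L}\le n+1$''. But condition~2 of Proposition~\ref{prop:enum_rec} is stated for \emph{all} $v'$ and $L'$ with $R\ v'\ L'$; nothing there relates $\length{L'}$ to $\length{L}$, and from $R\ v'\ L'$ you only get $\length{L'}\le n+1$, which does not give $\length{L'}+1\le n+1$. So the inequality $\length{L'}+1\le\length{L}$ is unjustified, and your invariant is not actually preserved under appending a zero.

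The fix is simply to drop the length clause from $R$ altogether, as the paper does: take $R\ \mathcal{F}\ L'$ to mean only that $\mathcal{F}$ is an iso-base of $\LnP{L'}{n}{P}$. You never need the length bound in condition~2, and in condition~3 you already (correctly) obtain $m=\length{L'}-1\le n$ directly from the hypothesis $\length{L'}\le\length{L}$ supplied by Proposition~\ref{prop:enum_rec} together with the theorem's assumption $\length{L}\le n+1$. With that adjustment your proof matches the paper's.
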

\begin{proof}
  The result follows by Proposition \ref{prop:enum_rec} ($R\ v\ L$
  holds iff $v$ is an iso-base of $\LnP{L}{n}{P}$). All the conditions
  of Proposition \ref{prop:enum_rec} are met.
  \begin{enumerate}
  \item Since $P\ \{\}$ holds,
    $\LnP{\emptylist}{n}{P} = \{\emptyset\}$, so
    $v_{\emptylist} = \{\emptyset\}$ is its iso-base.
  \item It holds that $\LnP{L}{n}{P} = \LnP{[L, 0]}{n}{p}$. So, if
    some $\mathcal{F}_b$ is an iso-base of $\LnP{L}{n}{P}$, then it is
    also an iso-base of $\LnP{[L, 0]}{n}{p}$.
  \item Let $L'$ be a nonempty list, such that $\last{L'} > 0$, and
    $\length{L'}\le \length{L}$. Then, for some $m \ge 0$,
    $L' = [l_0, \ldots, l_m]$, $l_m > 0$,
    $\declast{L'} = [l_0, \ldots, l_m-1]$, and $\length{L'} - 1 = m$.
    Let $v' \equiv \mathcal{F}_b$ be an iso-base of $\declast{L'}$. By
    the definition of $upd$, it holds that
    $upd\ \mathcal{F}_b\ L' =
    \base{(\mult{\mathcal{F}_b}{\families{n}{m}}{\Pinc{P}})}{\mathcal{P}}$.
    Since $m \le n$, by Theorem \ref{thm:mult_generates},
    $\mult{\mathcal{F}_b}{\families{n}{m}}{\Pinc{P}}$ is an
    iso-representing set of $\LnP{L'}{n}{P}$. Therefore, since
    $\mathcal{P}$ contains all permutations of $\listn{n}$ and
    $\mult{\mathcal{F}_b}{\families{n}{m}}{\Pinc{P}} \subseteq
    \nfam{n}$,
    $\base{(\mult{\mathcal{F}_b}{\families{n}{m}}{\Pinc{P}})}{\mathcal{P}}$
    is an iso-base of $\LnP{L'}{n}{P}$.
  \end{enumerate}
\end{proof}

\subsection{Proof of Theorem \ref{lemma:enum_dp_mult}}

The procedure $\enumdpp{L}{m} {v}{res}{upd}{stop}{L_{max}}$ is
characterized by the following proposition.

\begin{proposition}
  \label{prop:enum_dp}
  Let $R$ be a relation between a value and a list. Let $L_{max}$ be a
  fixed list, and $stop$ a predicate for lists. Let $L \equiv [l_0,
  \ldots, l_m, 0, \ldots, 0]$ be a given (initial) list (i.e., $m <
  \length{L}$, $\forall k.\ m < k < \length{L}\ \longrightarrow\ l_{k}
  = 0$) such that $\length{L} = \length{L_{max}}$. Let $v$ be a
  value. Assume that:
  \begin{enumerate}
  \item $R\ v\ L$,
  \item for all $v'$, $L'$ and $m'$, such that $m' < \length{L}$ and
    $L' \equiv [l'_0, \ldots, l'_{m'}, 0, \ldots, 0]$ (i.e., $m' <
    \length{L'}$, and $\forall k.\ m' < k < \length{L}\
    \longrightarrow\ l'_{k} = 0$) it holds that if $R\ v'\ L'$, then
    $R\ (upd\ v'\ m')\ (\incnth{L'}{m'})$.
  \end{enumerate}
  Then, for all $L'$ such that $L \preceq L' \preceq L_{max}$, such
  that $L$ and $L'$ agree up to the position $m$ (i.e., $\take{m}{L'}
  = \take{m}{L}$), and such that there is no $L_s$ such that $L \preceq
  L_s \preceq L'$, $\take{m}{L_s} = \take{m}{L}$ and $stop\ L_s$,
  there is a $v' \in (\enumdpp{L}{m} {v}{res}{upd}{stop}{L_{max}})$
  such that $R\ v'\ L'$
\end{proposition}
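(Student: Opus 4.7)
The plan is to prove Proposition \ref{prop:enum_dp} by strong induction on the potential $\Phi(L) = \sum_{i < \length{L_{max}}}\bigl((L_{max})_i - L_i\bigr)$, a non-negative integer whenever $L \preceq L_{max}$ that strictly decreases when $L$ is replaced by any strictly $\prec$-larger list still bounded by $L_{max}$. The statement to be established by this induction is the full conclusion of the proposition, universally quantified over $L$, $m$, $v$, $res$, and $L'$ satisfying the stated hypotheses.

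The base case is $L' = L$: the conclusion $R\ v\ L'$ reduces to hypothesis~1, so I only need to verify that $v$ itself appears in the returned list $\enumdpp{L}{m}{v}{res}{upd}{stop}{L_{max}}$. This is immediate from monotonicity of the \texttt{foldl} accumulator: it is initialized with $v\,\#\,res$, and at each iteration the step function either leaves the accumulator unchanged or replaces it by the output of a recursive call whose own \texttt{foldl} is initialized by prepending to the current accumulator. Hence every element ever placed in the accumulator remains present to the end.

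For the inductive step I suppose $L \prec L'$. Since $\take{m}{L'} = \take{m}{L}$, the two lists agree on positions below $m$, so there exists a least $m''$ with $m \leq m'' < \length{L}$ and $L_{m''} < L'_{m''}$. Put $L^* = \incnth{L}{m''}$. I then verify that all hypotheses of the proposition transfer to the recursive call $\enumdpp{L^*}{m''}{(upd\ v\ m'')}{res'}{upd}{stop}{L_{max}}$ applied to the same target $L'$: the tail of $L^*$ is zero beyond position $m''$ (since $m'' \geq m$ and $L$ already has zeros beyond $m$); $L^* \preceq L' \preceq L_{max}$ because $L^*_{m''} = L_{m''}+1 \leq L'_{m''}$ and the three lists agree below $m''$; $\take{m''}{L^*} = \take{m''}{L'}$; and any candidate $L_s$ with $L^* \preceq L_s \preceq L'$, $\take{m''}{L_s} = \take{m''}{L^*}$, and $stop\ L_s$ would also satisfy $L \preceq L_s \preceq L'$ and $\take{m}{L_s} = \take{m}{L}$, contradicting the outer no-stopping assumption. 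In particular $\neg stop\ L^*$ and $L^* \not\succ L_{max}$, so the outer \texttt{foldl} at position $m''$ really does perform the recursive call, with some current accumulator $res'$. Applying hypothesis~2 to $R\ v\ L$ and $m''$ gives $R\ (upd\ v\ m'')\ L^*$, and since $\Phi(L^*) = \Phi(L) - 1$ the induction hypothesis produces a $v' \in \enumdpp{L^*}{m''}{(upd\ v\ m'')}{res'}{upd}{stop}{L_{max}}$ with $R\ v'\ L'$. The same accumulator-monotonicity argument preserves $v'$ through the remainder of the outer fold.

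I expect the main obstacle to be not the induction itself but the bookkeeping: carefully justifying the choice of $m''$, verifying that incrementing $L$ at $m''$ preserves both the invariant $L^* \preceq L_{max}$ and the tail-of-zeros structure beyond $m''$, and showing that the absence of intermediate stopping lists between $L$ and $L'$ entails the corresponding absence between $L^*$ and $L'$. Once these invariants are discharged, the strict decrease of $\Phi$ and the monotonicity of the fold accumulator make the induction go through cleanly.
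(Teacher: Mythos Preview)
Your proposal is correct. The paper does not print a proof of this proposition (propositions are stated without proof, per Section~\ref{sec:background}; the Isabelle development carries the details), so there is nothing to compare against directly. Your strong induction on $\Phi(L) = \sum_i\bigl((L_{max})_i - L_i\bigr)$ is precisely the termination measure for $\enumdppname$, and the argument you sketch---selecting the least index $m'' \ge m$ at which $L$ and $L'$ differ, checking that $L^* = \incnth{L}{m''}$ retains the zero-tail invariant beyond $m''$ and that the no-intermediate-stopping hypothesis transfers (any blocking $L_s$ for the inner call would already block the outer one since $m \le m''$), invoking hypothesis~2 to get $R\ (upd\ v\ m'')\ L^*$, and relying on accumulator monotonicity of the \texttt{foldl}---is the natural route and is sound as stated.
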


The proof of Theorem \ref{lemma:enum_dp_mult} relies on this Proposition.

\begin{theorem}
  Assume that
  \begin{enumerate}
  \item $\Pinc{P}$ incrementally checks $P$ and is preserved by injective
    functions, 
  \item $P\ \{\}$ holds and $v_{\emptylist} = \{\emptyset\}$,
  \item $L_{max}$ is a list such that $\length{L_{max}} \le n + 1$,
    $\mathcal{L}_{s}$ contain lists (that all have the same length
    $\length{L_{max}}$), and $stop = \lambda\ L.\ (\exists L_s \in
    \mathcal{L}_s.\ L \succeq L_s)$,
  \item $\mathcal{P}$ contains all permutations of $\listn{n}$, $upd =
    \lambda\ \mathcal{F}\ m.\
    \base{\mult{(\mathcal{F}}{\families{n}{m}}{\Pinc{P}})}{\mathcal{P}}$.
  \end{enumerate}
  Let
  $X = \{L.\ L \preceq L_{max} \And (\nexists L_s \in
  \mathcal{L}_{s}.\ L \succeq L_s)\}$.
  Then, for all $L \in X$, there exists an
  $\mathcal{F}_b \in \enumdp{v_{\emptylist}}{upd}{stop}{L_{max}}$ such
  that $\mathcal{F}_b$ is an iso-base of $\LnP{L}{n}{P}$.
\end{theorem}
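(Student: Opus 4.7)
The plan is to apply Proposition \ref{prop:enum_dp} directly, instantiated with the relation $R$ defined by $R\ v\ L \iff v$ is an iso-base of $\LnP{L}{n}{P}$. Since $\enumdp{v_{\emptylist}}{upd}{stop}{L_{max}}$ unfolds to the initial call $\enumdpp{[0,\ldots,0]}{0}{v_{\emptylist}}{\emptylist}{upd}{stop}{L_{max}}$ with an all-zero list padded to length $\length{L_{max}}$, and since every $L \in X$ satisfies $[0,\ldots,0] \preceq L \preceq L_{max}$, the proposition will hand us the conclusion provided its two hypotheses hold; the side condition in Proposition \ref{prop:enum_dp} about $stop$ is compatible with membership in $X$ because $stop$ is upward-closed in $\preceq$, so $L \not\succeq L_s$ for any $L_s \in \mathcal{L}_s$ forbids any intermediate $stop$-triggering list.

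First I would discharge the base case $R\ v_{\emptylist}\ [0,\ldots,0]$. The assumption $P\ \{\}$ together with $v_{\emptylist} = \{\{\}\}$ makes this immediate, because $\LnP{[0,\ldots,0]}{n}{P} = \{\{\}\}$ and $\{\{\}\}$ trivially iso-represents itself with no duplicate isomorphism classes.

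Next I would handle the inductive step: assuming $R\ v'\ L'$ for a padded list $L' = [l'_0,\ldots,l'_{m'},0,\ldots,0]$ with $m' < \length{L_{max}}$, show $R\ (upd\ v'\ m')\ (\incnth{L'}{m'})$. The bridge is the observation that a family is $L'$-partitioned iff it is $[l'_0,\ldots,l'_{m'}]$-partitioned, since the trailing zeros contribute only the harmless bound $\card{A} \le \length{L_{max}} - 1$; modulo this identification, $\incnth{L'}{m'}$ corresponds to the step from $[l_0,\ldots,l_{m'}]$ with $l_{m'} = l'_{m'}$ to $[l_0,\ldots,l_{m'}+1]$, so we can in turn identify it with an instance of the tail-incrementation covered by Theorem \ref{thm:mult_generates}. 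Since $\length{L_{max}} \le n+1$ ensures $m' \le n$, that theorem yields that $\mult{v'}{\families{n}{m'}}{\Pinc{P}}$ iso-represents $\LnP{\incnth{L'}{m'}}{n}{P}$. Applying $\base{\cdot}{\mathcal{P}}$ with $\mathcal{P}$ containing all permutations of $\listn{n}$ refines this iso-representing collection to an iso-base by Proposition \ref{lemma:nef}, which is exactly $upd\ v'\ m'$.

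The main obstacle I anticipate is the padded-versus-unpadded list bookkeeping: Theorem \ref{thm:mult_generates} is stated for incrementing the final position of a non-padded list, whereas $\enumdppname$ increments a position $m'$ that may be followed by trailing zeros (and may even skip ahead past zero entries). Showing that the two formulations coincide requires spelling out that $L$-partitioning depends only on the prefix up to the last nonzero entry, after which the rest of the argument is a routine specialization of Proposition \ref{prop:enum_dp}.
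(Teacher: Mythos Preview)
Your proposal is correct and follows essentially the same route as the paper: instantiate Proposition~\ref{prop:enum_dp} with the relation ``$v$ is an iso-base of $\LnP{L}{n}{P}$'', discharge the base case via $P\ \{\}$ and $\LnP{[0,\ldots,0]}{n}{P}=\{\{\}\}$, handle the inductive step by stripping trailing zeros so that Theorem~\ref{thm:mult_generates} applies and then reduce to an iso-base via $\base{\cdot}{\mathcal{P}}$, and finally identify the resulting range of lists with $X$ using the upward-closure of $stop$ (the paper phrases this last point as ``since $\succeq$ is transitive''). Your explicit mention of Proposition~\ref{lemma:nef} for the iso-base step and your flagging of the padded-versus-unpadded bookkeeping as the only real technicality are both accurate and match what the paper does implicitly.
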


\begin{proof}
  The proof relies on Proposition \ref{prop:enum_dp} ($R\ v\ L$ holds
  iff $v$ is an iso-base of $\LnP{L}{n}{P}$).  The function
  $\enumdppname$ is called for list $L=[0, \ldots, 0]$ so the initial
  assumptions are trivially satisfied.

  Next we need to show that $v_{\emptylist} = \{\emptyset\}$ is an
  iso-base of $\LnP{[0, \ldots, 0]}{n}{P}$. But, it holds that
  $\LnP{[0, \ldots, 0]}{n}{P} = \LnP{\emptylist}{n}{P}$, and since
  $P \{\}$ holds, $\LnP{\emptylist}{n}{P} = \{\emptyset\}$ so
  $\{\emptyset\}$ is its iso-base.

  Let us show the final assumption. Fix a collection
  $v' \equiv \mathcal{F}_b$, list $L'$ and $m' < \length{L'}$,
  $m' < \length{L}$ such that $L'$ is of the form
  $[l'_0, \ldots, l'_{m'}, 0, \ldots, 0]$ and assume that
  $\mathcal{F}_b$ is an iso-base of $\LnP{L'}{n}{P}$. Then
  $\mathcal{F}_b$ is also an iso-base of
  $\LnP{[l'_0, \ldots, l'_{m'}]}{n}{P}$. It holds that
  $upd\ \mathcal{F}_b\ m' =
  \base{\mult{(\mathcal{F}_b}{\families{n}{m'}}{\Pinc{P}})}{\mathcal{P}}$.
  Since $\Pinc{P}$ incrementally checks $P$ and is preserved by
  injective functions, and since
  $m' < \length{L} = \length{L_{max}} \le n + 1$, by Theorem
  \ref{thm:mult_generates},
  $\mult{\mathcal{F}_b}{\families{n}{m'}}{\Pinc{P}}$ is an
  iso-representing set of $\LnP{[l'_0, \ldots, l'_{m'} + 1]}{n}{P}$.
  Therefore, since $\mathcal{P}$ contains all permutations of
  $\listn{n}$ and
  $\mult{\mathcal{F}_b}{\families{n}{m'}}{\Pinc{P}} \subseteq
  \nfam{n}$,
  the collection
  $\base{\mult{(\mathcal{F}_b}{\families{n}{m}}{\Pinc{P}})}{\mathcal{P}}$
  is an iso-base of $\LnP{[l'_0, \ldots, l'_{m'} + 1]}{n}{P}$. But,
  since,
  $inc\_nth\ L'\ m' = [l'_0, \ldots, l'_{m'} + 1, 0, \ldots, 0]$, that
  is equal to $[l'_0, \ldots, l'_{m'} + 1]$, it holds that
  $\base{\mult{(\mathcal{F}_b}{\families{n}{m'}}{\Pinc{P}})}{\mathcal{P}}$
  is also an iso-base of $\LnP{(inc\_nth\ L'\ m')}{n}{P}$.

  Since $\enumdpname$ calls $\enumdppname$ for $L=[0, \ldots, 0]$ and
  $m=0$, and since $\succeq$ is transitive, the set of all $L'$ such
  that $L \preceq L' \preceq L_{max}$, $\take{m}{L'} = \take{m}{L}$
  and such that there is no $L_s$ such that $L \preceq L_s \preceq
  L'$, $\take{m}{L_s} = \take{m}{L}$ and $stop\ L_s$ is exactly the
  set $X$.  Therefore, the statement holds.
\end{proof}

\setcounter{lemma}{0}

\subsection{Proof of Lemma \ref{lemma:unique_irreducible}}

\begin{lemma}If $F$ and $F'$ are both irreducible families and $\closure{F}
    = \closure{F'}$, then $F = F'$.
\end{lemma}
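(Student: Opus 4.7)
The plan is to show $F \subseteq F'$; the reverse inclusion follows by the same argument with $F$ and $F'$ swapped. Fix an arbitrary $A \in F$ and aim to prove $A \in F'$. Since $A \in \closure{F} = \closure{F'}$, by Proposition \ref{prop:closure}(1) there is a nonempty subfamily $G \subseteq F'$ such that $A = \Union G$. This is the first unwinding step, using closure in one direction.

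Next, each $B \in G$ satisfies $B \in F' \subseteq \closure{F'} = \closure{F}$, so again by Proposition \ref{prop:closure}(1) each $B$ is itself the union of a nonempty subfamily $H_B \subseteq F$. Let $H = \bigcup_{B \in G} H_B$; then $H \subseteq F$ is nonempty and
\[
A \;=\; \Union G \;=\; \Union_{B \in G}\Union H_B \;=\; \Union H.
\]
Thus $A$ is the union of a nonempty subfamily of $F$.

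Now I invoke irreducibility of $F$. If $A \notin H$, then $H \subseteq F \setminus \{A\}$ and $A = \Union H$ shows $A$ is dependent on $F \setminus \{A\}$, contradicting irreducibility. Hence $A \in H$, so $A \in H_B$ for some $B \in G$. Then $A \subseteq \Union H_B = B$, while $B \in G$ gives $B \subseteq \Union G = A$. Therefore $A = B \in G \subseteq F'$, as required.

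The only delicate point is the double closure-unwinding: expressing $A$ through $F'$ and then re-expressing each piece through $F$ forces $A$ to appear literally as an element of the subfamily $H$ (by irreducibility of $F$), and this combined with the two containments $A \subseteq B \subseteq A$ pins down $A$ as a member of $F'$. I do not anticipate any other obstacle; the argument is essentially a two-line pigeonhole once the two applications of Proposition \ref{prop:closure}(1) are set up correctly.
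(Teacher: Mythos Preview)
Your proof is correct. Both your argument and the paper's hinge on the same double closure-unwinding idea, but the finishing moves differ slightly. The paper argues by contradiction: assuming $A \in F'$ but $A \notin F$, it writes $A = \Union F_A$ with $F_A \subseteq F$, then re-expresses only those members of $F_A$ that lie outside $F'$ through $F'$, assembling a family $G \subseteq F'$ with $\Union G = A$; it must then verify separately that $A \notin G$ (using $A \notin F$) to contradict irreducibility of $F'$. Your direct route re-expresses \emph{all} of $G$ back through $F$, invokes irreducibility of $F$ to force $A \in H$, and closes with the sandwich $A \subseteq B \subseteq A$. Your version avoids the case split and the somewhat fiddly verification that $A \notin G$, at the cost of one extra containment observation; overall it is a touch cleaner, though the underlying mechanism is the same.
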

\begin{proof}
  Let us first show that if $F'$ is an irreducible family and
  $\closure{F} = \closure{F'}$, then $F' \subseteq F$.

  Assume the opposite. Then there is a set $A$ such that $A \in F'$
  and $A \notin F$. Since
  $A \in F' \subseteq \closure{F'} = \closure{F}$, there is a nonempty
  family $F_A \subseteq F$ such that $\Union F_A = A$. The family
  $F_A$ can be split to $F_A^+ = F_A \inter F'$ and
  $F_A^- = F_A \setminus F'$. All elements in $F_A^-$ belong to
  $F_A \subseteq F \subseteq \closure{F} = \closure{F'}$, so for every
  $A' \in F_A^-$ there is a non-empty family $F^{A'} \subseteq F'$
  such that $A' = \Union{F^{A'}}$. Let $G$ be a family consisting of
  $F_A^+$ and union of all such families $F^{A'}$ for each element
  $A' \in F_A^-$.

  The family $G$ is a subfamily of $F'$. Indeed, $F_A^+ = F_A \inter
  F' \subseteq F'$ and for all $F^{A'}$ it holds that $F^{A'}
  \subseteq F'$.

  It holds that $\Union{G} = A$. Namely, it holds that the union of
  all families $F^{A'}$ over all elements $A' \in F_A^-$ is equal to
  the $\Union F_A^-$. Therefore, $\Union{G} = \Union{F_A^+} \union
  \Union{F_A^-} = \Union{F_A} = A$.

  The set $A$ is not in $G$. Assume the opposite. Then, since $A
  \notin F$ and $F_A \subseteq F$, it holds that $A \notin F_A^+
  \subseteq F_A$. Therefore, $A$ must belong to some family $F^{A'}$
  for some $A' \in F_A^-$. Hence, $A' \in F_A$ so $A' \subseteq
  \Union{F_A} = A$. Also, since $\Union{F^{A'}} = A'$ and $A \in
  F^{A'}$ it must be that $A \subseteq A'$. So $A = A'$ and $A \in F_A
  \subseteq F$ which contradicts that $A \notin F$.

  The family $G$ is not empty. Indeed, since $F_A$ is not empty it
  contains a set $A'$. If $A' \in F'$, then $A' \in F_A \inter F' =
  F_A^+ \subseteq G$ so $G$ is not empty. If $A' \notin F'$, then $A'
  \in F_A \setminus F' = F_A^-$. But, then there is a non-empty family
  $F^{A'}$ whose elements are in $G$, so $G$ is not empty.

  From all this, it follows that $A$ depends on elements of
  $F'\setminus\{A\}$. But, since $A \in F'$, this contradicts that
  $F'$ is irreducible, so the initial assumption was wrong and $F'
  \subseteq F$.

  The main statement is a trivial consequence of the one that we have
  just proved.
\end{proof}

\subsection{Proof of Lemma \ref{lemma:FCcoveredFC}}

\begin{lemma}\ \\[-5mm]
  \begin{enumerate}
  \item Any family $F$ that is FC-covered by an FC-family $F_c$ is an
    FC-family.
  \item Any family $F$ that is nonFC-covered by a nonFC-family $N_c$
    is not an FC-family.
  \end{enumerate}
\end{lemma}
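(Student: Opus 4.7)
The plan is to unfold both definitions of covering and then chain together the three key invariance results stated in Section \ref{sec:basic}, namely that FC-status is preserved under isomorphism (trivial proposition in Section \ref{sec:iso}), under passage to the closure (Proposition \ref{prop:FC_family_closure}), under addition or removal of the empty set (Proposition \ref{prop:FC_family_empty}), and under the monotonicity of Proposition \ref{prop:FC_family_mono}.

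For part (1), I would start from the assumption that $F$ is FC-covered by the FC-family $F_c$, so there is an $F_c'$ with $\iso{F_c'}{F_c}$ and $F_c' \subseteq \closure{F}$. Since $F_c$ is FC and isomorphic families share the FC-property, $F_c'$ is FC as well. Proposition \ref{prop:FC_family_mono} (any superset of an FC-family is FC) then makes $\closure{F}$ an FC-family, and finally Proposition \ref{prop:FC_family_closure} transfers the property from $\closure{F}$ back to $F$.

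For part (2), I would unfold nonFC-covering to obtain an $N_c'$ with $\iso{N_c'}{N_c}$ and $\closure{F} \subseteq \closure{N_c'} \cup \{\emptyset\}$. By isomorphism invariance, $N_c'$ is nonFC; by Proposition \ref{prop:FC_family_closure}, $\closure{N_c'}$ is nonFC; and by Proposition \ref{prop:FC_family_empty}, adjoining the empty set preserves this, so $\closure{N_c'} \cup \{\emptyset\}$ is nonFC. Then the ``subset of nonFC is nonFC'' half of Proposition \ref{prop:FC_family_mono} gives that $\closure{F}$ is nonFC, and Proposition \ref{prop:FC_family_closure} again transfers this back to $F$.

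There is no real obstacle here: the proof is a bookkeeping exercise that essentially reads off from how the covering relation was designed, collecting the four propositions whose combined effect is precisely the notion of covering. The only point that deserves a small amount of care is the handling of $\emptyset$ in part (2), where one must invoke Proposition \ref{prop:FC_family_empty} before applying the monotonicity proposition; everything else is a direct substitution.
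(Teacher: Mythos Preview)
Your proof is correct and follows essentially the same approach as the paper: part (1) is identical, and in part (2) you run the same chain of propositions (isomorphism invariance, Proposition \ref{prop:FC_family_closure}, Proposition \ref{prop:FC_family_empty}, Proposition \ref{prop:FC_family_mono}) directly, whereas the paper phrases it contrapositively by assuming $F$ is FC and deriving that $N_c'$ would be FC. This is a purely stylistic difference.
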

\begin{proof}\ \\[-5mm]
\begin{enumerate}
\item Since $\FCcovered{F}{F_c}$, there is a family $F_c'$ such that
  $\iso{F_c}{F_c'}$ and $\closure{F} \supseteq F_c'$. Since $F_c$ is
  an FC-family, so is $F_c'$. Therefore, by Proposition
  \ref{prop:FC_family_mono}, $\closure{F}$ is an FC-family. But, then,
  by Proposition \ref{prop:FC_family_closure}, $F$ is also an $FC$
  family.
\item Since $\nonFCcovered{F}{N_c}$, there is a family $N_c'$ such
  that $\closure{F} \subseteq \closure{N_c'} \union \{\emptyset\}$. If
  $F$ were an FC-family, since
  $F \subseteq \closure{F} \subseteq \closure{N_c'} \union
  \{\emptyset\}$,
  by Proposition \ref{prop:FC_family_mono} and Proposition
  \ref{prop:FC_family_empty}, $N_c'$ would also be FC-family, which is
  a contradiction, as it isomorphic to a nonFC-family $N_c$.
\end{enumerate}
\end{proof}

\subsection{Proof of Lemma \ref{lemma:generates_covered}}
\begin{lemma}
  Assume that $\mathcal{F}_b$ iso-represents $\mathcal{F}$. If
  $\FCcovered{\mathcal{F}_b}{\FC}$, then
  $\FCcovered{\mathcal{F}}{\FC}$. If
  $\nonFCcovered{\mathcal{F}_b}{\nonFC}$, then
  $\nonFCcovered{\mathcal{F}}{\nonFC}$. If
  $\covered{\mathcal{F}_b}{\FC}{\nonFC}$, then
  $\covered{\mathcal{F}}{\FC}{\nonFC}$.
\end{lemma}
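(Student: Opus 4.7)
The plan is straightforward: unfold the definitions and apply Proposition \ref{prop:covered}(2), which already states that covering is preserved under isomorphism. The work is essentially just quantifier chasing.

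First I would prove the FC-part. Fix an arbitrary $F \in \mathcal{F}$. Since $\mathcal{F}_b$ iso-represents $\mathcal{F}$, by definition there exists $F_b \in \mathcal{F}_b$ with $\iso{F}{F_b}$. By the hypothesis $\FCcovered{\mathcal{F}_b}{\FC}$, there exists $F_c \in \FC$ such that $\FCcovered{F_b}{F_c}$. Since isomorphism is symmetric, we have $\iso{F_b}{F}$, so Proposition \ref{prop:covered}(2) yields $\FCcovered{F}{F_c}$, and hence $\FCcovered{F}{\FC}$. Since $F$ was arbitrary, $\FCcovered{\mathcal{F}}{\FC}$.

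The nonFC-part is entirely symmetric: for arbitrary $F \in \mathcal{F}$, pick an iso-representative $F_b \in \mathcal{F}_b$ and a witness $N_c \in \nonFC$ with $\nonFCcovered{F_b}{N_c}$, then invoke the second clause of Proposition \ref{prop:covered}(2) to transfer nonFC-covering across the isomorphism $\iso{F_b}{F}$, obtaining $\nonFCcovered{F}{\nonFC}$.

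The third statement is then immediate. Fix $F \in \mathcal{F}$, pick $F_b \in \mathcal{F}_b$ with $\iso{F}{F_b}$. By hypothesis $\covered{F_b}{\FC}{\nonFC}$, meaning either $\FCcovered{F_b}{\FC}$ or $\nonFCcovered{F_b}{\nonFC}$. In either case the corresponding property transfers to $F$ by the argument above, so $\covered{F}{\FC}{\nonFC}$.

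There is no real obstacle: the content is packed into Proposition \ref{prop:covered}(2), and the only thing to be careful about is the direction of the isomorphism (the definition of iso-representation gives $\iso{F}{F_b}$, and we need to apply the proposition in the direction that transfers covering from $F_b$ to $F$, which works because $\cong$ is symmetric).
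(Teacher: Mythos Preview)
Your proposal is correct and matches the paper's own proof essentially line for line: fix $F \in \mathcal{F}$, pick an isomorphic representative $F_b \in \mathcal{F}_b$, apply the covering hypothesis to $F_b$, and transfer back to $F$ via Proposition~\ref{prop:covered}(2). The only difference is cosmetic—you explicitly unpack the witness $F_c \in \FC$ and note the symmetry of $\cong$, whereas the paper works directly at the level of covering-by-a-collection—but the argument is the same.
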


\begin{proof}
  Let $F \in X$. Since $\mathcal{F}_b$ iso-represents $X$, there is an
  $F' \in \mathcal{F}_b$ such that $\iso{F}{F'}$. If
  $\FCcovered{\mathcal{F}_b}{\FC}$, then $\FCcovered{F'}{\FC}$. But
  then, by Proposition \ref{prop:covered},
  $\FCcovered{F}{\FC}$. Similarly, if
  $\nonFCcovered{\mathcal{F}_b}{\nonFC}$, then
  $\nonFCcovered{F'}{\nonFC}$. But then, by Proposition
  \ref{prop:covered}, $\nonFCcovered{F}{\nonFC}$. If
  $\covered{\mathcal{F}_b}{\FC}{\nonFC}$, then either
  $\FCcovered{F'}{\FC}$ or $\nonFCcovered{F'}{\nonFC}$. But then, by
  Proposition \ref{prop:covered} either $\FCcovered{F}{\FC}$ or
  $\nonFCcovered{F}{\nonFC}$, so $\covered{F}{\FC}{\nonFC}$.
\end{proof}

\subsection{Proof of Lemma \ref{lemma:indep}}
\begin{lemma}
  If all irreducible families in $\nfam{n}$ are covered by $\FC$ and
  $\nonFC$, then all families in $\nfam{n}$ are covered by $\FC$ and
  $\nonFC$.
\end{lemma}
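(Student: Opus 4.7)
The plan is to combine Proposition \ref{prop:ex_indep} with Proposition \ref{prop:covered_closure} in the obvious way. First I would take an arbitrary family $F \in \nfam{n}$ and apply Proposition \ref{prop:ex_indep} to obtain an irreducible subfamily $F' \subseteq F$ with $\closure{F'} = \closure{F}$.

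Next I would observe that $F'$ is itself a family over $\setn{n}$: since $F' \subseteq F$, we have $\Union F' \subseteq \Union F \subseteq \setn{n}$, so $F' \in \nfam{n}$. By the hypothesis of the lemma, every irreducible family in $\nfam{n}$ is covered by $\FC$ and $\nonFC$, hence $\covered{F'}{\FC}{\nonFC}$.

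Finally, since $\closure{F} = \closure{F'}$, Proposition \ref{prop:covered_closure} yields $\covered{F}{\FC}{\nonFC}$, as desired. Because $F$ was arbitrary in $\nfam{n}$, this establishes the conclusion. There is no substantive obstacle here: the lemma is essentially a one-line consequence of the two cited propositions, and the only thing to be careful about is verifying that the irreducible witness $F'$ provided by Proposition \ref{prop:ex_indep} lies in $\nfam{n}$ so that the hypothesis can be applied to it, which is immediate from $F' \subseteq F$.
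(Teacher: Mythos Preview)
Your proposal is correct and follows essentially the same approach as the paper's own proof: fix $F\in\nfam{n}$, invoke Proposition~\ref{prop:ex_indep} to obtain an irreducible $F'\subseteq F$ with $\closure{F'}=\closure{F}$, note $F'\in\nfam{n}$, apply the hypothesis to $F'$, and conclude via Proposition~\ref{prop:covered_closure}. You are, if anything, slightly more explicit than the paper in justifying $F'\in\nfam{n}$.
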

\begin{proof}
  Fix an arbitrary family $F \in \nfam{n}$. By Proposition
  \ref{prop:ex_indep}, there is an irreducible family $F'$ such that
  $F' \subseteq F$ and $\closure{F} = \closure{F'}$. Since $F' \in
  \nfam{n}$, by assumption it is covered by $\FC$ and $\nonFC$. But
  then, by Proposition \ref{prop:covered_closure}, so is $F$.
\end{proof}

\subsection{Proof of Lemma \ref{lemma:allFCcovered}}

\begin{lemma}
  For every list $L \in \mathcal{L}_F$ all $L$-partitioned
  families of $\nfam{6}$ are FC-covered by $\FCsix$. For every
  list $L \in \mathcal{L}_N$, all $L$-partitioned families of
  $\nfam{6}$ are nonFC-covered by $\nonFCsix$.
\end{lemma}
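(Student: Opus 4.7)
The plan is to verify each of the two halves of the lemma by a direct Isabelle/HOL computation that reduces to the enumeration machinery of Section \ref{sec:enum}.

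For the first part, fix $L \in \mathcal{L}_F$. I would apply Theorem \ref{lemma:enum_rec_mult} with the predicate $P\ F \equiv \indep{F}\wedge\neg\FCcovered{F}{\FCsix}$ and incremental predicate $\Pinc{P}\ F\ A \equiv \neg\expressible{A}{F}\wedge\neg\FCcovered{F\union\{A\}}{\FCsix}$, taking $\mathcal{P}$ to be the list of all $720$ permutations of $\listn{6}$. Before invoking the theorem, its side conditions need to be discharged: the base case $P\ \emptyset$ is immediate and $v_\emptylist = \{\emptyset\}$; $\Pinc{P}$ incrementally checks $P$ because, in the inductive step the newly added $A$ has cardinality at least that of every $A'\in F$, so no union involving $A$ can equal $A'$ and hence adding $A$ cannot make an old set expressible, while the FC-covering side is monotone by Proposition \ref{prop:covered}; finally $\Pinc{P}$ is preserved by injective functions since both expressibility and FC-covering are isomorphism-invariant. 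The returned value is an iso-base of all irreducible $L$-partitioned families in $\nfam{6}$ that are not FC-covered by $\FCsix$. A direct Isabelle computation then shows that this iso-base is \emph{empty} for every $L \in \mathcal{L}_F$, so by Lemma \ref{lemma:generates_covered} every irreducible $L$-partitioned family is FC-covered. For a reducible $L$-partitioned $F$, Proposition \ref{prop:ex_indep} supplies an irreducible $F' \subseteq F$ with $\closure{F'} = \closure{F}$, and Proposition \ref{prop:covered_closure} transfers the covering back to $F$.

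For the second part, fix $L \in \mathcal{L}_N$ and instead apply Theorem \ref{lemma:enum_rec_mult} with the lighter pair $P = \indep$ and $\Pinc{P}\ F\ A \equiv \neg\expressible{A}{F}$ to produce an iso-base of all irreducible $L$-partitioned families in $\nfam{6}$ (no FC-filtering this time, so the iso-base can be genuinely large). For each representative $F$ in this finite collection I would verify by direct computation that some isomorph of a family $N_c \in \nonFCsix$ witnesses nonFC-covering, i.e.\ $\closure{F} \subseteq \closure{N_c'} \union \{\emptyset\}$ for some $N_c' \iso N_c$; searching over the (at most $720$) permutations for each pair of iso-base element and candidate $N_c$ suffices. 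Lemma \ref{lemma:generates_covered} lifts this to all irreducible $L$-partitioned families, and the general case again goes through Proposition \ref{prop:covered_closure} after reduction to the irreducible form.

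The main obstacle is twofold: first, discharging the side conditions of Theorem \ref{lemma:enum_rec_mult}, particularly the claim that $\Pinc{P}$ really is incremental, which crucially depends on the ``$\card{A}$ is maximal in $F\union\{A\}$'' invariant of the recursive construction; and second, making the nonFC enumeration computationally tractable inside the proof assistant. While the FC iso-bases collapse to the empty set (so the FC half is cheap once the hypotheses are discharged), the nonFC iso-bases for the richer lists in $\mathcal{L}_N$ contain many irreducible families whose closures must each be tested against all $115$ candidates in $\nonFCsix$ under all permutations. This is precisely where the bit-set representations, canonical-form comparisons, and precomputed lookup tables described in Section \ref{sec:experiments} become essential for keeping Isabelle's proof-checking time within reason.
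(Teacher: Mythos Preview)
Your argument has a genuine gap in the step that lifts from irreducible to arbitrary $L$-partitioned families. When $F$ is a reducible $L$-partitioned family and $F' \subseteq F$ is the irreducible subfamily supplied by Proposition~\ref{prop:ex_indep}, the family $F'$ is $L'$-partitioned for some $L' \prec L$, not $L$-partitioned. Your enumeration only establishes that irreducible \emph{$L$-partitioned} families are FC-covered (resp.\ nonFC-covered); it says nothing about $L'$-partitioned ones, so you have no covering result for $F'$ to transfer via Proposition~\ref{prop:covered_closure}. This is not a mere technicality: for lists such as $[0,0,0,0,5,6,0] \in \mathcal{L}_F$ there really are reducible $L$-partitioned families (a five-element set can be the union of two four-element members), and your filtered enumeration never visits them. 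The same objection applies verbatim to the nonFC half.

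The paper sidesteps this by dropping the irreducibility filter altogether. For the FC half it takes $P\ F \equiv \neg\FCcovered{F}{\FCsix}$ with $\Pinc{P}\ F\ A \equiv \neg\FCcovered{F\cup\{A\}}{\FCsix}$; for the nonFC half it takes $P = \Pinc{P} = \top$. Theorem~\ref{lemma:enum_rec_mult} then yields an iso-base of \emph{all} $L$-partitioned families in $\nfam{6}$ satisfying $P$, so the conclusion for the FC half follows immediately from the iso-base being empty, and for the nonFC half from checking each representative directly and invoking Lemma~\ref{lemma:generates_covered}. No reduction-to-irreducible step is needed. The enumeration without the irreducibility filter is larger, but as the paper's timings indicate it remains tractable even with $P = \top$, so your concern about computational cost, while reasonable, turns out not to be the bottleneck here.
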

\begin{proof}
  Let $P = \lambda F.\ \neg (\FCcovered{F}{\FCsix})$ and
  $\Pinc{P} = \lambda\ F\ A.\ \neg (\FCcovered{F \union
    \{A\}}{\FCsix})$.
  Let $v_{\emptylist} = \{\emptyset\}$. Let $\mathcal{P}_6$ contain
  all permutations of $\listn{6}$ and
  $upd = \lambda\ \mathcal{F}\ L.\
  \base{(\mult{\mathcal{F}}{\families{6}{\length{L} -
        1}}{\Pinc{P}})}{\mathcal{P}_6}$.
  All conditions of Lemma \ref{lemma:enum_rec_mult} are met, so it
  holds that $\enumrec{L}{v_{\emptylist}}{upd}$ is an iso-base of
  $\LnP{{L}}{6}{P}$. Evaluating it for all $L \in \mathcal{L}_F$ gives
  $\{\}$. Therefore, for any $L \in \mathcal{L}_F$ there are no
  families in $\nfam{6}$ that are $L$-partitioned an are not
  FC-covered by $\FCsix$.
  
  Let $P$ and $\Pinc{P}$ be $\top$. Let
  $v_{\emptylist} = \{\emptyset\}$. Let $\mathcal{P}_6$ contain all
  permutations of $\listn{6}$ and
  $upd = \lambda\ \mathcal{F}\ L.\
  \base{(\mult{\mathcal{F}}{\families{6}{\length{L} -
        1}}{\Pinc{P}})}{\mathcal{P}_6}$.
  By Lemma \ref{lemma:enum_rec_mult},
  $\enumrec{L}{v_{\emptylist}}{upd}$ is an iso-base of all
  $L$-partitioned families of $\nfam{6}$. After evaluating it for any
  $L \in \mathcal{L}_N$, a direct computation shows that all its
  members are nonFC-covered by $\nonFCsix$.
\end{proof}

\subsection{Proof of Lemma \ref{lemma:nonAllFCpartitions}}

\begin{lemma}
  If for all $L \in \mathfrak{L}_F$, there exists a collection
  $\mathcal{F}^L_b$ that iso-represents $\LnPirsix$ such that
  $\covered{\mathcal{F}^L_b}{\FCsix}{\nonFCsix}$, then
  $\covered{\nfam{6}}{\FCsix}{\nonFCsix}$.
\end{lemma}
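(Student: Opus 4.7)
The plan is to reduce the problem to irreducible families and then do a two-way case split on the partition list of the reduction, dispatching one case by the monotonicity of FC-covering along the domination order and the other case by the iso-representation hypothesis.

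First, fix an arbitrary $F \in \nfam{6}$; the goal is $\covered{F}{\FCsix}{\nonFCsix}$. By Proposition \ref{prop:ex_indep} there is an irreducible $F' \subseteq F$ with $\closure{F'} = \closure{F}$, so by Proposition \ref{prop:covered_closure} it suffices to establish $\covered{F'}{\FCsix}{\nonFCsix}$. Let $L'$ be the partition list of $F'$; Proposition \ref{prop:lpart} gives $L' \preceq [1,6,15,20,15,6,1]$.

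Now I would split on whether $L'$ dominates some element of $\mathcal{L}_F$. If there exists $L_F \in \mathcal{L}_F$ with $L' \succeq L_F$, then Lemma \ref{lemma:allFCcovered} says every $L_F$-partitioned family of $\nfam{6}$ is FC-covered by $\FCsix$, and Proposition \ref{prop:covered_mono}(1) propagates this upward along $\preceq$, yielding $\FCcovered{F'}{\FCsix}$ in particular. Otherwise, by the very definition of $\mathfrak{L}_F$, we have $L' \in \mathfrak{L}_F$, so the hypothesis of the lemma supplies a collection $\mathcal{F}^{L'}_b$ that iso-represents $\LnP{L'}{6}{ir}$ and satisfies $\covered{\mathcal{F}^{L'}_b}{\FCsix}{\nonFCsix}$; since $F' \in \LnP{L'}{6}{ir}$, Lemma \ref{lemma:generates_covered} gives $\covered{F'}{\FCsix}{\nonFCsix}$. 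Either way $F'$ is covered, and hence $F$ is covered, completing the universal quantification over $F \in \nfam{6}$.

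The proof has essentially no computational content and no deep obstacle: it is a bookkeeping argument stitching together already-established facts (reduction of covering to irreducible representatives, upward monotonicity of FC-covering under $\preceq$, and invariance of covering under isomorphism). The only point requiring a moment of care is verifying that the case split is exhaustive, which is immediate from $L' \preceq [1,6,15,20,15,6,1]$ and the definition of $\mathfrak{L}_F$ as the complement (within that downward cone) of the upper set generated by $\mathcal{L}_F$.
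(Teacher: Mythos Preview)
Your proof is correct and follows essentially the same approach as the paper: reduce to an irreducible family, case-split on whether its partition list dominates some member of $\mathcal{L}_F$, and dispatch the two cases via Lemma~\ref{lemma:allFCcovered} with Proposition~\ref{prop:covered_mono} on one side and the iso-representation hypothesis with Lemma~\ref{lemma:generates_covered} on the other. The only cosmetic difference is that the paper invokes Lemma~\ref{lemma:indep} for the reduction (whereas you unfold it via Propositions~\ref{prop:ex_indep} and~\ref{prop:covered_closure}) and then performs an additional sub-split on whether $l_0 = 0$ or $l_0 = 1$; given the hypothesis as stated (quantifying over all $L \in \mathfrak{L}_F$, which includes lists with $l_0 = 1$), that sub-split is redundant and your direct argument is slightly cleaner.
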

\begin{proof}
  First we prove that for all $L \in \mathfrak{L}_F$, it holds that
  $\covered{\LnPirsix}{\FCsix}{\nonFCsix}$. Let $L$ be an arbitrary
  list in $\mathfrak{L}_F$. By assumption there exists
  $\mathcal{F}^L_b$ that iso-represents $\LnPirsix$, such that
  $\covered{\mathcal{F}^L_b}{\FCsix}{\nonFCsix}$. Then, by Lemma
  \ref{lemma:generates_covered}, it holds that
  $\covered{\LnPirsix}{\FCsix}{\nonFCsix}$.
  
  To show that for all $F \in \nfam{6}$ it holds that
  $\covered{F}{\FCsix}{\nonFCsix}$, by Lemma \ref{lemma:indep} it
  suffices to show that all irreducible families in $\nfam{6}$ are
  covered by $\FCsix$ and $\nonFCsix$.

  Fix an arbitrary irreducible family $F$ in $\mathcal{F}_b$. By
  Proposition \ref{prop:lpart}, $F$ is $L$-partitioned for some
  $L = [l_0, \ldots, l_6]$ and $L \preceq [1, 6, 15, 20, 15, 6, 1]$.
  
  If there is a list $L' \in \mathcal{L}_F$ such that $L \succeq L'$,
  then, by Lemma \ref{lemma:allFCcovered} for all $F'$ in $\nfam{6}$
  that are $L'$-partitioned it holds that $\FCcovered{F'}{\FCsix}$, so
  by Proposition \ref{prop:covered_mono} it holds that
  $\FCcovered{F}{\FCsix}$ (as $F$ is $L$-partitioned and
  $L \succeq L'$) and therefore $\covered{F}{\FCsix}{\nonFCsix}$.

  Otherwise, $L \in \mathfrak{L}_F$, and by our first proved statement
  it holds that $\covered{\LnPirsix}{\FCsix}{\nonFCsix}$. It holds
  that $l_0 = 0$ or $l_0 = 1$.

  If $l_0 = 0$, then $F$ is irreducible, $L$-partitioned family, where
  $L \in \mathfrak{L}_F$, i.e., $F \in \LnPirsix$, so, since
  $\covered{\LnPirsix}{\FCsix}{\nonFCsix}$, it holds that
  $\covered{F}{\FCsix}{\nonFCsix}$.

  If $l_0 = 1$, then $F - \{\emptyset\}$ is irreducible,
  $[0, l_1, \ldots, l_6]$-partitioned family, where
  $[0, l_1, \ldots, l_6] \in \mathfrak{L}_F$, i.e.,
  $F - \{\emptyset\} \in \LnPirsix$, so, since
  $\covered{\LnPirsix}{\FCsix}{\nonFCsix}$, it holds that
  $F - \{\emptyset\}$ is covered by $\FCsix$ and $\nonFCsix$. But
  then, by Proposition \ref{prop:covered}, so is $F$.
\end{proof}

\subsection{Proof of Theorem \ref{thm:allcovered}}

\setcounter{theorem}{7}
\begin{theorem}
  It holds that $\covered{\nfam{6}}{\FCsix}{\nonFCsix}$.
\end{theorem}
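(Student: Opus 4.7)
The plan is to apply Lemma \ref{lemma:nonAllFCpartitions}, which reduces the task to exhibiting, for every $L \in \mathfrak{L}_F$, an iso-representing subcollection $\mathcal{F}^L_b$ of $\LnPirsix$ such that $\covered{\mathcal{F}^L_b}{\FCsix}{\nonFCsix}$. Since families already FC-covered by $\FCsix$ are trivially covered, it is enough to produce, for each $L \in \mathfrak{L}_F$, a collection $\mathcal{F}^L_b$ that iso-represents those families in $\LnPirsix$ that are \emph{not} FC-covered by $\FCsix$, and then verify that every family in each $\mathcal{F}^L_b$ is nonFC-covered by $\nonFCsix$.

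To obtain the collections $\mathcal{F}^L_b$ uniformly for all relevant $L$ at once, I would invoke the dynamic programming enumeration of Theorem \ref{lemma:enum_dp_mult} with the following instantiation: take $n = 6$, $v_{\emptylist} = \{\emptyset\}$, $L_{max} = [0, 6, 15, 20, 15, 6, 1]$ (forbidding the empty set, in line with the condition in Lemma \ref{lemma:nonAllFCpartitions}), $\mathcal{L}_s = \mathcal{L}_F$, $\mathcal{P}$ the list of all $720$ permutations of $\listn{6}$, and
$$P = \lambda F.\ \indep{F}\ \And\ \neg(\FCcovered{F}{\FCsix}), \qquad \Pinc{P} = \lambda\,F\,A.\ \neg\,\expressible{A}{F}\ \And\ \neg(\FCcovered{F \union \{A\}}{\FCsix}).$$
One verifies (as was already done in Section \ref{sec:enum} for the unverified classification procedure) that $\Pinc{P}$ incrementally checks $P$ and is preserved by injective functions, so the hypotheses of Theorem \ref{lemma:enum_dp_mult} hold. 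The resulting set $X$ is exactly $\mathfrak{L}_F$, and for each $L \in X$ the call $\enumdp{\{\emptyset\}}{upd}{stop}{L_{max}}$ yields an iso-base $\mathcal{F}^L_b$ of $\LnPirsix$ restricted to families not FC-covered by $\FCsix$; by Proposition \ref{prop:covered} any family FC-covered by $\FCsix$ lies in the covered collection anyway, so $\mathcal{F}^L_b$ still iso-represents $\LnPirsix$ in the sense required by Lemma \ref{lemma:nonAllFCpartitions} (up to covering).

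The remaining step is purely computational: for each $L \in \mathfrak{L}_F$ and each family $F \in \mathcal{F}^L_b$ produced by the enumeration, check that $\nonFCcovered{F}{\nonFCsix}$. This amounts to, for each such $F$, searching the $115$ members of $\nonFCsix$ for some $N_c$ with an isomorphic copy $N_c'$ satisfying $\closure{F} \subseteq \closure{N_c'} \union \{\emptyset\}$, which is a finite, effective check. Once every enumerated family is so covered, Lemma \ref{lemma:generates_covered} gives $\covered{\LnPirsix}{\FCsix}{\nonFCsix}$ for each $L \in \mathfrak{L}_F$, and Lemma \ref{lemma:nonAllFCpartitions} concludes $\covered{\nfam{6}}{\FCsix}{\nonFCsix}$.

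The main obstacle is the sheer size of the enumeration: naively enumerating all irreducible $\nfam{6}$-families is infeasible. The efficiency of the proof rests on two aggressive prunings built into $\Pinc{P}$ and $stop$, namely excluding any extension whose superset is already FC-covered by $\FCsix$, and excluding any $L' \succeq L_s$ for some $L_s \in \mathcal{L}_F$. Proposition \ref{prop:covered_mono} ensures these prunings are sound; without them the enumeration does not terminate in reasonable time and memory inside Isabelle/HOL. The remainder of the argument is just bookkeeping of the outputs through Theorem \ref{lemma:enum_dp_mult}, Lemma \ref{lemma:generates_covered}, and Lemma \ref{lemma:nonAllFCpartitions}.
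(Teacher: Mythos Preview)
Your proposal is correct and follows essentially the same route as the paper: the identical choice of $P$, $\Pinc{P}$, $L_{max}=[0,6,15,20,15,6,1]$, $\mathcal{L}_s=\mathcal{L}_F$, and the invocation of Theorem~\ref{lemma:enum_dp_mult} followed by a direct computational check against $\nonFCsix$, then Lemma~\ref{lemma:generates_covered} and Lemma~\ref{lemma:nonAllFCpartitions}. One small imprecision: with $L_{max}$ having first entry $0$, the set $X$ from Theorem~\ref{lemma:enum_dp_mult} is not all of $\mathfrak{L}_F$ but only its $l_0=0$ part; the paper handles this the same way, by noting that $\LnPirsix$ iso-represents itself so that once $\covered{\LnPirsix}{\FCsix}{\nonFCsix}$ is established for those $L$, Lemma~\ref{lemma:nonAllFCpartitions} (whose proof reduces the $l_0=1$ case to $l_0=0$) applies.
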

\begin{proof}
  Let $P = \lambda F.\ \indep{F}\ \And\ \neg (\FCcovered{F}{\FCsix})$
  and $\Pinc{P} = \lambda\ F\ A.\ \neg\ \expressible{A}{F} \And\ \neg
  (\FCcovered{F \union \{A\}}{\FCsix})$. It holds that $\Pinc{P}$
  incrementally checks $P$ and is preserved by injective
  functions. Let $val_{\emptylist} = \{\emptyset\}$, $L_{max} = [0, 6, 15, 20,
  15, 6, 1]$, and $stop = \lambda\ L.\ (\exists L_s \in
  \mathcal{L}_F.\ L \succeq L_s)$. Let $\mathcal{P}$ contain all
  permutations of $\listn{6}$, and $upd = \lambda\ \mathcal{F}\ m.\
  \base{\mult{(\mathcal{F}}{\families{n}{m}}{\Pinc{P}})}{\mathcal{P}}$.
  
  By Theorem \ref{lemma:enum_dp_mult}, for all lists
  $L = [0, \ldots, l_6] \in \mathfrak{L}_F$, there exists a collection
  $\mathcal{F}^L_b \in \enumdp{v_{\emptylist}}{upd}{stop}{L_{max}}$
  such that $\mathcal{F}^L_b$ is an iso-base of $\LnP{L}{6}{P}$. By
  definition of $P$, $\mathcal{F}^L_b$ is an iso-base of all members
  of $\LnPirsix$ that are not covered by $\FCsix$. However, a direct
  computation shows that all families in
  $\enumdp{v_{\emptylist}}{upd}{stop}{L_{max}}$ are covered by
  $\nonFCsix$. Therefore all members of $\mathcal{F}^L_b$ are covered
  by $\nonFCsix$, so by Lemma \ref{lemma:generates_covered}, all
  elements of $\LnPirsix$ that are not covered by $\FCsix$ are covered
  by $\nonFCsix$. In other words, for all $F \in \LnPirsix$ it holds
  that $\covered{F}{\FCsix}{\nonFCsix}$, i.e.,
  $\covered{\LnPirsix}{\FCsix}{\nonFCsix}$. As $\LnPirsix$
  iso-represents itself, by Lemma \ref{lemma:nonAllFCpartitions}, all
  sets in $\nfam{6}$ are covered.
\end{proof}

\section{Statistics}

As a byproduct of our classification, we have counted the number of FC
and nonFC families. For each each list $L = [l_0, \ldots, l_6]$ we
have counted $L$-partititioned families and calculated: (a) the total
number of non-isomorphic FC-families and the total number of
non-isomorphic nonFC-families, (b) the total number of non-isomorphic
irreducible FC-families and the total number of non-isomorphic
nonFC-families, (c) the total number of minimal FC-families and the
total number of maximal nonFC-families. This data is summarized in a
spreadsheat available online
(\url{http://argo.matf.bg.ac.rs/downloads/formalizations/FCFamilies.xls}).

\end{document}